\def\tr{\mbox{tr}}
\def\f12{\frac 1 2}
\def\f12{\frac 1 2}
\DeclareMathOperator{\proj}{proj}
\newtheorem{definition}{Definition}[section]
\newtheorem{remark}{Remark}[section]
\newtheorem{lemma}{Lemma}[subsection]
\newtheorem{theorem}{Theorem}[section]
\newtheorem*{theorem*}{Theorem}
\newtheorem*{corollary*}{Corollary}
\newtheorem{proposition}{Proposition}[subsection]
\newtheorem{corollary}{Corollary}[section]
\title{Stationary axisymmetric black holes with matter}
\author{Otis Chodosh}
\address{University of Cambridge, DPMMS, Wilberforce~Road,~Cambridge,~UK,~CB3~0WB}
\author{Yakov Shlapentokh-Rothman}
\address{Princeton University, Department of Mathematics, Fine~Hall,~Washington~Road,~Princeton,~NJ~08544}
\date{\today}
\begin{document}
\maketitle
\begin{abstract}
We provide a geometric framework for the construction of non-vacuum black holes whose metrics are stationary and axisymmetric. Under suitable assumptions we show that the Einstein equations reduce to an Einstein-harmonic map type system and analyze the compatibility of the resulting equations.

This framework will be fundamental to our forthcoming construction \cite{HBH:scalar} of metric-stationary axisymmetric bifurcations of Kerr solving the Einstein--Klein--Gordon system, and as such, we include specializations of all of our formulas to the case of a time-periodic massive scalar field.
\end{abstract}

\section{Introduction}

The Einstein equations
\begin{equation}\label{einsteinmatter}
Ric - \frac{1}{2}g R = \mathbb{T}
\end{equation}
on a $3+1$-Lorenzian manifold $(\mathcal{M},g)$ are the fundamental equations of general relativity. Here $Ric$ and $R$ denote the Ricci curvature and scalar curvature respectively, and $\mathbb{T}$ denotes the energy-momentum tensor associated to the relevant matter model. In this paper we will be interested in setting up a framework for the problem of constructing non-vacuum, i.e., $\mathbb{T} \neq 0$, ``black hole'' solutions which are \emph{stationary}, \emph{axisymmetric}, \emph{asymptotically flat}, and possess a non-degenerate \emph{event horizon}. Since we will be concerned with the \underline{construction} of solutions, as opposed to the classification of solutions, we are free to impose any ansantz that we wish. We now turn to a description of the relevant class of spacetimes:

Let $\mathcal{M} \doteq \{(t,\phi,\rho,z)  \in \mathbb{R} \times (0,2\pi) \times \mathscr{B}\}$, where $\mathscr{B} \doteq \{(\rho,z) \in \mathbb{R}^{2} : \rho > 0\}$. We will assume that the exterior regions of our spacetimes, minus the axis of symmetry, are given by $(\mathcal{M},g)$ where the Lorentzian metrics $g$ take the form
\begin{equation}\label{eq:metric-ansatz}
g \doteq -V dt^{2} + 2W dtd\phi + X d\phi^{2} + e^{2\lambda}(d\rho^{2}+dz^{2})
\end{equation}
for suitable functions $V,W,X,\lambda : \mathscr{B}\to\mathbb{R}$. Observe that the vector fields $\Phi \doteq \partial_{\phi}$ and $T \doteq \partial_{t}$ are both Killing. We will always assume that $X>0$ (otherwise there would exist closed causal curves) and that $XV+W^{2} > 0$, which is equivalent to $g$ being a Lorentzian metric. We do \underline{not} assume that $V > 0$. Thus, we allow for the presence of an \emph{ergoregion}.

As we will see later, the ansantz~\eqref{eq:metric-ansatz} and the Einstein equations~\eqref{einsteinmatter} require us to restrict considerations here to stationary and axisymmetric energy-momentum tensors $\mathbb{T}$ (i.e., a symmetric $(0,2)$-tensor field on $\mathcal{M}$)
that satisfy
\begin{equation}\label{eq:SEtensor-orthog-directions}
\mathbb{T}(T,\partial_{\rho}) = \mathbb{T}(T,\partial_{z})  = \mathbb{T}(\Phi,\partial_{\rho}) = \mathbb{T}(\Phi,\partial_{z}) = 0.
\end{equation}

It is convenient to replace the functions $V,W,X,$ and $\lambda$ by a slightly different collection of data which reduces under the symmetries in a nice manner. The following quantities will be referred to as ``metric data:''
\begin{enumerate}
\item $X$ denotes the norm squared of the axisymmetric vector field $\Phi$: \[X \doteq g\left(\Phi,\Phi\right) > 0.\]
\item $W$ denotes the inner product between the stationary and axisymmetric vector fields $T$ and $\Phi$: \[W \doteq g\left(T,\Phi\right).\]
\item $\theta$ denotes the ``twist $1$-form'' associated to $\Phi$:
\[\theta \doteq 2i_{\Phi}\left(*\nabla\Phi_{\flat}\right).\]
\item $\sigma$ denotes the square root of the negative of the area of the parallelogram in $T\mathcal{M}$ spanned by $T$ and $\Phi$:
    \[\sigma \doteq \sqrt{XV+W^2}.\]
\item $\lambda$ is the function which appears in the conformal factor for $g$: \[\lambda \doteq \frac{1}{2}\log\left(g\left(\partial_{\rho},\partial_{\rho}\right)\right) = \frac{1}{2}\log\left(g\left(\partial_z,\partial_z\right)\right).\]
\end{enumerate}

Our first result records the manner in which the Einstein equations reduce to the $(\rho,z)$-coordinates.
\begin{theorem}\label{theo:ein-reduce}
Suppose that $(\mathcal{M},g)$ solves the Einstein equations for some energy-momentum tensor $\mathbb{T}$ satisfying \eqref{eq:SEtensor-orthog-directions}. Then the metric data satisfies the following equations on $\mathscr{B} = \{(\rho,z) \in \mathbb{R}^{2} : \rho > 0 \}$:
\begin{enumerate}
\item $X$ satisfies
\begin{align*}
& \sigma^{-1}\partial_{\rho}(\sigma\partial_{\rho}X) + \sigma^{-1}\partial_{z}(\sigma\partial_{z}X) \\
& = e^{2\lambda}(-2\mathbb{T}(\Phi,\Phi)+{\rm Tr}\left(\mathbb{T}\right)X) + \frac{(\partial_{\rho}X)^{2}+(\partial_{z}X)^{2} -\theta_{\rho}^{2}-\theta_{z}^{2}}{X}.
\end{align*}
\item $W$ satisfies
\[
\partial_{\rho}(X^{-1}W) d\rho + \partial_{z} (X^{-1}W) dz = \frac{\sigma}{X^{2}}[\theta_{\rho}dz - \theta_{z} d\rho.]
\]
\item $\theta$ satisfies
\[
d\theta =(\partial_{\rho}\theta_{z} - \partial_{z}\theta_{\rho}) d\rho\wedge dz =2\sigma^{-1}e^{2\lambda}\left(\mathbb{T}\left(\Phi,\Phi\right)W - \mathbb{T}\left(\Phi,T\right)X\right)d\rho\wedge dz,
\]
as well as
\[
\sigma^{-1}\partial_{\rho}(\sigma \theta_{\rho})+\sigma^{-1}\partial_{z}(\sigma \theta_{z}) = \frac{2\theta_{\rho}\partial_{\rho} X + 2 \theta_{z}\partial_{z}X}{X}.
\]
\item $\sigma$ satisfies
\begin{align*}
& X^{-1}e^{-2\lambda}\sigma(\partial^{2}_{\rho}\sigma + \partial^{2}_{z}\sigma) \\
& = \mathbb{T}(T-X^{-1}W\Phi,T-X^{-1}W\Phi) - X^{-2}\sigma^{2}\mathbb{T}(\Phi,\Phi) + X^{-1}\sigma^{2}{\rm Tr}(\mathbb{T}).
\end{align*}
\item $\lambda$ satisfies the following equations at the points where $|\partial\sigma|\not = 0$
\[
\partial_{\rho}\lambda = \alpha_{\rho} - \frac 12 \partial_{\rho} \log X, \qquad \partial_{z}\lambda = \alpha_{z} - \frac 12 \partial_{z}\log X,
\]
where
\begin{align*}
& \left( (\partial_{\rho}\sigma)^{2}+(\partial_{z}\sigma)^{2}\right)  \alpha_{\rho}  \\
& = \frac 12 (\partial_{\rho}\sigma) \sigma \left( \mathbb{T}(\partial_{\rho},\partial_{\rho}) - \mathbb{T}(\partial_{z},\partial_{z}) + \frac 12 X^{-2} \left[(\partial_{\rho}X)^{2} - (\partial_{z}X)^{2} + (\theta_{\rho})^{2} - (\theta_{z})^{2} \right] \right)\\
& + \partial_{\rho}\sigma (\partial^{2}_{\rho}\sigma - \partial^{2}_{z}\sigma) + \partial_{z}\sigma (\partial^{2}_{\rho,z}\sigma)\\
& + (\partial_{z}\sigma)\sigma \left[ \mathbb{T}(\partial_{\rho},\partial_{z}) + \frac 12 X^{-2}\left((\partial_{\rho}X)(\partial_{z}X) + (\theta_{\rho})(\theta_{z}) \right) \right],
\end{align*}
and
\begin{align*}
& \left( (\partial_{\rho}\sigma)^{2}+(\partial_{z}\sigma)^{2}\right) \alpha_{z} \\
& = -\frac 12 (\partial_{z}\sigma)\sigma \left( \mathbb{T}(\partial_{\rho},\partial_{\rho}) - \mathbb{T}(\partial_{z},\partial_{z}) + \frac 12 X^{-2} \left[(\partial_{\rho}X)^{2} - (\partial_{z}X)^{2} + (\theta_{\rho})^{2} - (\theta_{z})^{2} \right]  \right)\\
& - \partial_{z}\sigma (\partial^{2}_{\rho}\sigma - \partial^{2}_{z}\sigma) + \partial_{\rho}\sigma (\partial^{2}_{\rho,z}\sigma)\\
& + (\partial_{\rho}\sigma)\sigma \left[ \mathbb{T}(\partial_{\rho},\partial_{z}) + \frac 12 X^{-2} \left( (\partial_{\rho}X)(\partial_{z}X) + (\theta_{\rho})(\theta_{z}) \right)\right].
\end{align*}
Independent of the behavior of $\sigma$, $\lambda$ satisfies
\begin{align}\label{scalarcurv}
2\partial^{2}_{\rho}\lambda + 2\partial^{2}_{z}\lambda & = - \partial^{2}_{\rho}\log X - \partial^{2}_{z}\log X +  \sigma^{-1}(\partial_{\rho}^{2}\sigma + \partial^{2}_{z}\sigma)\\ \nonumber
& + e^{2\lambda}{\rm Tr}(\mathbb{T}) - \mathbb{T}(\partial_{\rho},\partial_{\rho}) - \mathbb{T}(\partial_{z},\partial_{z})\\ \nonumber
& -2 X^{-1}\left(\mathbb{T}(\Phi,\Phi) - \frac 12 {\rm Tr}\left(\mathbb{T}\right) X\right)e^{2\lambda}\\ \nonumber
& - \frac 12 X^{-2}\left[ (\partial_{\rho}X)^{2} + (\partial_{z}X)^{2} + (\theta_{\rho})^{2} + (\theta_{z})^{2} \right].
\end{align}
\end{enumerate}
Conversely, still assuming that the energy-momentum tensor satisfies \eqref{eq:SEtensor-orthog-directions}, if the metric data solves each of these equations, and $|\partial\sigma|\not = 0$ on $\mathscr{B}$, then we may recover the metric $(\mathcal{M},g)$, solving Einstein's equations \eqref{einsteinmatter}.
\end{theorem}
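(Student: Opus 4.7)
The plan is to exploit the $2{+}2$ product-type structure induced by the commuting Killing vectors $T,\Phi$: the orbit planes carry the metric $\left(\begin{smallmatrix}-V & W\\ W & X\end{smallmatrix}\right)$ with determinant $-\sigma^2$, while the orthogonal base $\mathscr{B}$ carries $e^{2\lambda}(d\rho^2+dz^2)$. In the adapted frame $\{T,\Phi,\partial_\rho,\partial_z\}$ all Lie brackets vanish, so the Christoffel symbols, Riemann tensor and Ricci tensor become closed-form expressions in $X,W,V,\lambda$ and their $(\rho,z)$-derivatives. A first inspection shows that the matter condition \eqref{eq:SEtensor-orthog-directions} is exactly what is required to make the off-block Einstein components $R(T,\partial_\rho), R(T,\partial_z), R(\Phi,\partial_\rho), R(\Phi,\partial_z)$ vanish automatically; otherwise these would yield nontrivial obstructions to the ansatz.

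I would then read off each item by projecting the Einstein equation onto an appropriate pair of vectors. For (1), start from the Killing identity $\Box\bigl(g(\Phi,\Phi)\bigr) = 2|\nabla\Phi|^2 - 2\mathrm{Ric}(\Phi,\Phi)$, decompose $|\nabla\Phi|^2$ in the adapted frame into an $X^{-1}|\nabla X|^2$ contribution plus an $X^{-1}|\theta|^2$ twist contribution, and rewrite $\Box$ on $(T,\Phi)$-invariant functions as the $\sigma$-weighted flat Laplacian in $(\rho,z)$ (this is where $\sigma$ enters as the orbit volume weight). Einstein then converts $\mathrm{Ric}(\Phi,\Phi)$ into the stated $\mathbb{T}$-combination. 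For (3), the two identities for $\theta$ both follow from the Killing Ricci identity $\nabla_\mu\nabla_\nu\Phi_\rho = R_{\rho\nu\mu\sigma}\Phi^\sigma$: one contraction yields $d\theta = 2\,{*}\bigl(\Phi\wedge (i_\Phi\mathrm{Ric})^\flat\bigr)$, which Einstein turns into the $\mathbb{T}$-sourced curl equation; the other yields a Bianchi-type cancellation in which no matter term survives, leaving the sourceless divergence equation (equivalently, $\sigma X^{-2}\theta$ is flat-divergence-free). Item (2) is a pure algebraic rewrite: $d(W/X) = d\bigl(g(T,\Phi)/g(\Phi,\Phi)\bigr)$ reduces to the twist of $\Phi$ using only the Killing equation, without invoking Einstein. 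Item (4) follows from the standard $2{+}2$ identity expressing the flat Laplacian of $\sigma = \sqrt{-\det(g|_{\mathrm{orbit}})}$ in terms of the orbit-trace of $\mathrm{Ric}$, which Einstein converts to the specific $\mathbb{T}$-expression.

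Item (5) comes from the three remaining Ricci components $R(\partial_\rho,\partial_\rho)$, $R(\partial_z,\partial_z)$, $R(\partial_\rho,\partial_z)$. Their sum, after substituting items (1)-(4) to eliminate second derivatives of $\sigma$, yields \eqref{scalarcurv} unconditionally. The symmetric-tracefree combination of the same three components produces a $2\times 2$ linear system for $(\alpha_\rho,\alpha_z) = (\partial_\rho\lambda+\tfrac12\partial_\rho\log X,\,\partial_z\lambda+\tfrac12\partial_z\log X)$ whose coefficient matrix is built from $(\partial_\rho\sigma,\partial_z\sigma)$; inverting under $|\partial\sigma|\neq 0$ gives the claimed formulas. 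I expect this step to be the main technical obstacle: it requires careful bookkeeping to combine the raw expressions for $R_{ij}$ with the previously-derived items so that all second derivatives of $\sigma$ cancel and only the listed first-derivative terms and $\mathbb{T}$-components remain.

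The converse is a direct retracing. Given $(X,W,\theta,\sigma,\lambda)$ on $\mathscr{B}$ satisfying all five items, set $V=(\sigma^2-W^2)/X$, assemble $g$ via \eqref{eq:metric-ansatz}, reconstruct a 1-form satisfying the role of $\theta$ from (2) (which is then consistent with the definition in terms of $g$), and verify each Einstein component against the corresponding item. The compatibility of the $\lambda$-system (i.e., $\partial_z\alpha_\rho = \partial_\rho\alpha_z$) is automatically enforced by item (4) (controlling $\Delta\sigma$) together with the divergence part of item (3), which guarantees integrability; the hypothesis $|\partial\sigma|\neq 0$ ensures that the $\alpha$-formulas actually determine the remaining $\mathrm{Ric}$-components rather than being vacuous consequences of them.
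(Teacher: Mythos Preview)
Your overall strategy is sound and tracks the paper's reasoning closely: items~(1)--(3) are derived exactly as you describe, from the Bochner identity for $|\Phi|^2$, the twist formulas $d\theta = 2\,i_\Phi * i_\Phi Ric$ and $\div\theta = 2X^{-1}g(\theta,\nabla X)$, and the algebraic rewrite of $d(W/X)$. The organizational difference is that the paper does not work directly in a $2{+}2$ frame but instead performs a two-stage reduction: first a Riemannian submersion by the $\Phi$-orbits (Section~\ref{4to3}), followed by the conformal rescaling $\hat g = X\overline g$ (Proposition~\ref{hatRic}), which kills the Hessian-of-$X$ term and renders the pushforward of $T$ hypersurface-orthogonal; then a second reduction by $T$ on the resulting $3$-manifold (Section~\ref{3to2}). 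Items~(4) and~(5) are then read off from the $3$-dimensional identities $\sigma\slashed{\Delta}\sigma = \hat{Ric}(T,T)$ and the first-order/Liouville equations for the conformal factor of $\hat g$, rather than from the raw $R_{ij}$ on $\mathcal{M}$. Your direct route would certainly work, but the paper's detour through $\hat g$ is what sets up the machinery for Theorem~\ref{itallturnsouttobecompatible}, where the compatibility $d\alpha$ is analyzed via the $3$-dimensional Bianchi identity.

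Two small corrections. First, in your description of \eqref{scalarcurv} you say second derivatives of $\sigma$ are eliminated; they are not---the term $\sigma^{-1}(\partial_\rho^2\sigma+\partial_z^2\sigma)$ survives, and the equation is simply the Liouville equation for the conformal factor of $\hat g$. Second, your converse paragraph conflates Theorem~\ref{theo:ein-reduce} with Theorem~\ref{itallturnsouttobecompatible}. For the converse of the present theorem, all five equations are \emph{assumed}, so $d\lambda = \alpha - \tfrac12 d\log X$ holds by hypothesis and $d\alpha = 0$ is automatic; there is nothing to check. The nontrivial claim that $d\alpha = 0$ follows from items (1)--(4) plus $\div_g\mathbb{T}=0$ (without assuming $\lambda$ solves anything) is the content of the separate Theorem~\ref{itallturnsouttobecompatible}, and it does \emph{not} follow from items~(3) and~(4) alone as you suggest---the $\hat g$-Bianchi identity, hence the divergence-freeness of $\mathbb{T}$, is essential there.
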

\begin{remark}\label{ourinfluence}We have formulated our equations so that when $\mathbb{T} = 0$ (and suitable boundary conditions are imposed) the system reduces exactly to the situation where the classical uniqueness theorems of Carter and Robinson may be applied~\cite{carter3,robinson}. In particular, when $\mathbb{T} = 0$, $\sigma$ is simply required to be a harmonic function, and the standard boundary conditions force $\sigma = \rho$. We will also have $d\theta = 0$, and thus we may write $\theta = dY$, for a unique (up to a constant) function $Y$. Then, after covering $\mathbb{R}^3$ with cylindrical coordinates $(\rho,\phi,z)$ and considering $X$ and $Y$ as functions on $\mathbb{R}^3$ which are independent of $\phi$, one easily checks that $(X,Y)$ satisfy the equations for an axisymmetric harmonic map from $\mathbb{R}^3\setminus\{\rho = 0\}$ to hyperbolic space $\mathbb{H}^2$.
\end{remark}
\begin{remark}\label{doesitextend}
Before we can claim to have constructed a stationary axisymmetric black hole spacetime, in addition to checking that $(\mathcal{M},g)$ satisfies the Einstein equations, we must also verify that $(\mathcal{M},g)$ smoothly extends to a larger spacetime $(\tilde{\mathcal{M}},\tilde g)$ which is asymptotically flat and whose boundary consists of an event horizon. Below, in Proposition \ref{canitextend}, we give sufficient conditions on the metric data for this to hold. Analogues of these conditions are already well known (cf.~\cite{carter2}) and thus we will not reproduce these in the introduction.
\end{remark}

In the application of Theorem~\ref{theo:ein-reduce} to our work~\cite{HBH:scalar} it is of fundamental importance that we may work with a first-order equation for $\lambda$, \begin{equation} \label{eq:lambda-first-order-abstract}d\lambda = \alpha - \frac 12 d\log X,\end{equation}
as opposed to the prescribed scalar curvature equation~\eqref{scalarcurv}. Essentially, this is due to the fact that if~\eqref{eq:lambda-first-order-abstract} is integrated from infinity, then the correct boundary behavior of $\lambda$ as $\rho \to 0$ (see Remark~\ref{doesitextend}) comes for ``free;'' however, in order to solve~\eqref{eq:lambda-first-order-abstract}, we must know that $d\alpha = 0$. When $\mathbb{T} = 0$ (see in Remark~\ref{ourinfluence}) the equations for $X$, $\theta$, and $\sigma$ decouple from the equations for $\lambda$, the equation for $\sigma$ (and suitable boundary conditions) force $\sigma = \rho$, and, assuming $X$ and $\theta$ solve their respective equations, it is straightforward to establish that $d\alpha = 0$ (see \cite[p.\ 916]{weinstein}).

In the non-vacuum case the situation is more complicated; our second result is the following.
\begin{theorem}\label{itallturnsouttobecompatible}
Suppose that metric data $X,W,\theta,\sigma,\lambda$ are chosen, along with a energy-momentum tensor $\mathbb{T}$ satisfying \eqref{eq:SEtensor-orthog-directions}. Assume that:
\begin{enumerate}
\item We have that $\mathbb{T} \in C^{1,\alpha}_{loc}(\mathscr{B})$, $X \in C^{2,\alpha}_{loc}(\mathscr{B})$, $W \in C^{1,\alpha}_{loc}(\mathscr{B})$, $\theta \in C^{1,\alpha}_{loc}(\mathscr{B})$, $\sigma \in C^{3,\alpha}_{loc}(\mathscr{B})$, and $\lambda \in C^{1,\alpha}_{loc}(\mathscr{B})$.
\item $X,W,\theta,\sigma$ satisfy their respective equations on $\mathscr{B}$, as listed in Theorem \ref{theo:ein-reduce}.
\item $|d\sigma| \not = 0$ on\footnote{Actually, to conclude the compatibility condition at a point, we only need to assume that $|\partial\sigma| \not = 0$ on a small neighborhood of the point under consideration.} $\mathscr{B}$.
\item If we form the metric $g$ based on the metric data, then $\mathbb{T}$ is divergence free with respect to $g$.
\end{enumerate}
Then, the $1$-form $\alpha$ arising in \eqref{eq:lambda-first-order-abstract} satisfies the following compatibility condition:
\[
d\alpha = \beta_{2}\wedge \left(d\lambda -\alpha - \frac 12 d\log X\right),
\]
for $\beta_{2}=(\beta_{2})_{\rho} d\rho + (\beta_{2})_{z}dz$ a $1$-form depending only on $\sigma$, i.e.:
\begin{align*}
(\beta_{2})_{\rho} & = \frac{1}{2} \frac{(\partial_{z}\sigma)(\partial^{3}_{\rho,z,z}\sigma - \partial^{3}_{\rho}\sigma) +(\partial_{\rho}\sigma)(\partial^{3}_{z}\sigma - \partial^{3}_{\rho,\rho,z}\sigma + 2\partial^{2}_{\rho}\sigma + 2\partial^{2}_{z}\sigma)}{(\partial_{\rho}\sigma)^{2} + (\partial_{z}\sigma)^{2}} \\
(\beta_{2})_{z} & = \frac 12 \frac{(\partial_{\rho}\sigma)(\partial^{3}_{\rho,\rho,z}\sigma-\partial^{3}_{z}\sigma) + (\partial_{z}\sigma)(\partial^{3}_{\rho}\sigma - \partial^{3}_{\rho,z,z}\sigma + 2 \partial^{2}_{\rho}\sigma + 2\partial^{2}_{z}\sigma)}{(\partial_{\rho}\sigma)^{2}+(\partial_{z}\sigma)^{2}}.
\end{align*}

Furthermore, under the above hypothesis, if $\lambda$ satisfies its first order equation \eqref{eq:lambda-first-order-abstract}, then we automatically have $\lambda \in C^{2,\alpha}_{loc}(\mathscr{B})$ and that $\lambda$ satisfies the second order equation given in Theorem \ref{theo:ein-reduce}; thus $(\mathcal{M},g)$ solves the Einstein equations \eqref{einsteinmatter}.
\end{theorem}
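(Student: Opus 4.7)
The plan is to verify the compatibility identity by directly computing $d\alpha = (\partial_\rho \alpha_z - \partial_z \alpha_\rho)\,d\rho\wedge dz$ from the explicit formulas for $\alpha_\rho$ and $\alpha_z$ in Theorem~\ref{theo:ein-reduce}(5), and then systematically reducing the result using the field equations for $X$, $\theta$, $\sigma$ together with the divergence-free condition $\nabla^\mu \mathbb{T}_{\mu\nu} = 0$ for the energy-momentum tensor.

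After multiplying through by $|\partial\sigma|^2$, the numerator of $d\alpha$ involves third derivatives of $\sigma$, second derivatives of $X$, first derivatives of $\theta_\rho$ and $\theta_z$, first derivatives of the components of $\mathbb{T}$, and products of lower-order data. I would first eliminate $\partial_\rho^2 X + \partial_z^2 X$ via the $X$-equation of Theorem~\ref{theo:ein-reduce}(1), and the two combinations $\partial_\rho\theta_z-\partial_z\theta_\rho$ and $\partial_\rho\theta_\rho+\partial_z\theta_z$ via the two $\theta$-equations of Theorem~\ref{theo:ein-reduce}(3). The key substitution is then the divergence-free condition: under~\eqref{eq:SEtensor-orthog-directions}, only the $\nu=\rho$ and $\nu=z$ components of $\nabla^\mu \mathbb{T}_{\mu\nu}=0$ are nontrivial, and each produces a linear relation among the first derivatives of the components of $\mathbb{T}$, of $X,W,\theta,\sigma$, and of $\lambda$ — the latter entering through the Christoffel symbols of the $e^{2\lambda}(d\rho^2+dz^2)$-block. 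Substituting both relations into $d\alpha$ eliminates all $\partial\mathbb{T}$ contributions and introduces $\partial_\rho\lambda, \partial_z\lambda$. The final step is to check that, after all cancellations, the surviving combination of $\partial X$-, $\alpha$-, and $\partial\lambda$-terms assembles into precisely $\beta_2 \wedge (d\lambda-\alpha-\tfrac{1}{2}d\log X)$, with $\beta_2$ depending only on derivatives of $\sigma$; reading off the coefficients of $\partial_\rho\lambda$ and $\partial_z\lambda$ in the assembled expression produces the stated formulas for $(\beta_2)_\rho$ and $(\beta_2)_z$.

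The main obstacle is algebraic bookkeeping. The very existence of the identity is an intricate cross-check on the consistency of the reduced system, and verifying it amounts to confirming that every term involving $X$, $\theta$, $W$, or components of $\mathbb{T}$ cancels against a counterpart produced by one of the four substitutions (the three field equations together with the two divergence-free components), leaving a $\beta_2$ built purely from third derivatives of $\sigma$ over $|\partial\sigma|^2$ and a $\lambda$-dependence appearing only through the specified combination.

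For the final assertion, assume $\lambda$ solves $d\lambda = \alpha - \tfrac{1}{2}d\log X$. Taking the Euclidean divergence gives
\[
2(\partial_\rho^2+\partial_z^2)\lambda \;=\; 2(\partial_\rho\alpha_\rho + \partial_z\alpha_z) - (\partial_\rho^2+\partial_z^2)\log X.
\]
A parallel (and shorter) computation of $\partial_\rho\alpha_\rho+\partial_z\alpha_z$, again using the field equations from Theorem~\ref{theo:ein-reduce}(1)--(4) together with the divergence-free condition, matches the right-hand side with that of~\eqref{scalarcurv}; this both establishes the scalar curvature equation and, by the converse part of Theorem~\ref{theo:ein-reduce}, shows that $(\mathcal{M},g)$ solves~\eqref{einsteinmatter}. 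The regularity statement $\lambda\in C^{2,\alpha}_{loc}$ is then immediate from the first-order equation itself: under the stated hypotheses $\sigma\in C^{3,\alpha}_{loc}$, $X\in C^{2,\alpha}_{loc}$, $\theta\in C^{1,\alpha}_{loc}$, $\mathbb{T}\in C^{1,\alpha}_{loc}$, the explicit expressions for $\alpha_\rho, \alpha_z$ lie in $C^{1,\alpha}_{loc}$, so $d\lambda \in C^{1,\alpha}_{loc}$ and hence $\lambda \in C^{2,\alpha}_{loc}$.
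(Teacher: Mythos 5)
Your strategy is viable --- it is precisely the ``direct but somewhat long computation'' that the authors explicitly acknowledge is possible and deliberately decline to perform --- but it is genuinely different from the paper's actual proof. The paper never expands $d\alpha$. Instead, Proposition~\ref{compat-condition} proves a \emph{stronger, unconditional} identity $d\alpha = \beta_1\wedge\mathfrak{B} + \beta_2\wedge(df-\alpha)$, valid for arbitrary $C^{1,\alpha}$ substitutes $R_{xx},R_{xy},R_{yy},f$ with no field equations and no divergence-free assumption, by a structural argument: the $1$-jet of the curvature and the relevant part of the jet of the conformal factor are packaged into affine spaces $\mathbf{V}_{\text{curv}}$ and $\mathbf{V}_{\text{conf}}$; the map $F$ between them is shown bijective via a $3\times 3$ determinant (equal to $-Q\neq 0$) together with the dimension count $9=9$; and a commutative diagram with exact rows then produces $\beta_1,\beta_2$ and proves their \emph{uniqueness} and their dependence only on the $3$-jet of $Q=\sigma$. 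Theorem~\ref{itallturnsouttobecompatible} then follows from the separate computation in Section~\ref{theEquations}: the reduced equations and horizontality of $\mathbb{T}$ give $\hat{Ric}-\frac12\hat R\hat g = \mathbb{T} + \frac12 X^{-2}(dX\otimes dX+\theta\otimes\theta) - \frac14 X^{-2}(|\hat\nabla X|^2_{\hat g}+|\theta|^2_{\hat g})\hat g$, whose $\hat g$-divergence vanishes when ${\rm div}_g\mathbb{T}=0$, so $\mathfrak{B}=0$ and the $\beta_1$-term drops. You merge these two stages into one four-dimensional computation. What each buys: your route, if carried through, verifies the identity concretely and produces $(\beta_2)_\rho,(\beta_2)_z$ directly, where the paper extracts them afterwards in Corollary~\ref{determine-beta2-compat} by unwinding the diagram; the paper's route explains \emph{a priori} why such $1$-forms exist, are unique, and depend only on $\sigma$, retains the more general statement with $\mathbb{T}$ not divergence free (cf.\ the remark following the theorem), and replaces a mass of cancellations by two dimension counts. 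Your treatment of the final assertion (divergence of the first-order equation recovers \eqref{scalarcurv}; $\lambda\in C^{2,\alpha}_{loc}$ from $\alpha\in C^{1,\alpha}_{loc}$) matches the paper in substance, though there the agreement of the two Laplacians is read off from the second factor of the compatibility map $c$ rather than computed by hand.

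Two bookkeeping points you should build into the plan before asserting that ``every term cancels.'' First, your substitution list omits the $W$-equation \eqref{weqn}: the $\rho$- and $z$-components of ${\rm div}_g\mathbb{T}=0$ contain terms of the schematic form $-\frac12\mathbb{T}^{\mu\nu}\partial_a g_{\mu\nu}$, hence $\partial V$ and $\partial W$; since neither $V$ nor $W$ appears in the claimed identity, these must be eliminated via $V = X^{-1}(\sigma^2-W^2)$ together with \eqref{weqn} --- this is exactly why the hypotheses of the theorem require $W$ to satisfy its equation (as the paper also does in Section~\ref{theEquations}). Second, $d\lambda$ enters through more channels than the Christoffel symbols of the conformal block: the density $\sigma e^{2\lambda}$ in the divergence, and, more importantly, the $e^{2\lambda}$ factor on the right-hand side of the $\sigma$-equation, which you must differentiate once to reduce those third-derivative-of-$\sigma$ combinations in $d\alpha$ that are not already of the $\beta_2$-form; all of these $\partial\lambda$ contributions must assemble into the single deviation $1$-form measuring the failure of \eqref{eq:lambda-first-order-abstract} (a direct computation will also settle the precise grouping of the $\frac12 d\log X$ term in the displayed wedge factor). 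Neither point is an obstruction --- the computation is guaranteed to close, as the paper concedes --- but as written the decisive cancellation, which is the entire content of the theorem, is asserted rather than exhibited.
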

This compatibility condition should be thought of as quantifying the (trivial) observation that if \eqref{eq:lambda-first-order-abstract} holds, then necessarily $d\alpha = 0$. Note that for ``reasonable'' matter, the divergence free condition on $\mathbb{T}$ will automatically hold, as long as the matter fields satisfy their relevant equations with respect to the metric $g$.
\begin{remark}Implicit in our Proposition~\ref{compat-condition} below is a more general version of Theorem~\ref{itallturnsouttobecompatible} where $\mathbb{T}$ is not necessarily assumed to be divergence free; however, this extra generality is not needed for our applications to \cite{HBH:scalar}.
\end{remark}
\begin{remark}
We briefly comment on the restriction $|\partial\sigma| \not = 0$. When $\mathbb{T} = 0$, the Einstein equations imply that $\sigma$ is harmonic. This, combined with asymptotic flatness, implies that $\sigma$ has no critical points in $\mathscr{B}$. See \cite[pp.\ 915, 925]{weinstein}. For general matter, it seems possible that $\sigma$ could have critical points in $\mathscr{B}$. Note, however, that if $(\mathcal{M},g)$ is sufficiently close to Kerr, in an appropriate sense, then $\sigma$ will have no critical points (we will use this observation throughout \cite{HBH:scalar}).
\end{remark}

\subsection{The case of a massive scalar field}

With an eye towards our construction of metric-stationary black holes with scalar hair \cite{HBH:scalar}, we now specialize to the case of a massive scalar field $\Psi : \mathcal{M}\to\mathbb{C}$. Recall that in this case, the energy-momentum tensor takes the form
\begin{equation}\label{kgemt}
\mathbb{T} \doteq \text{Re}\left(\nabla\Psi\otimes \overline{\nabla\Psi}\right) - \frac{1}{2}g\left[\left|\nabla\Psi\right|^2 + \mu^2\left|\Psi\right|^2\right].
\end{equation}
Furthermore, recall that the energy-momentum tensor will be divergence free if and only if $\Psi$ satisfies the Klein--Gordon equation
\begin{equation}\label{kg}
\Box_g\Psi - \mu^2\Psi = 0.
\end{equation}

In order to fit into the framework of this paper, we assume that $\Psi$ takes the following form:
\begin{equation}\label{formscalar}
\Psi(t,\phi,\rho,z) = e^{-it\omega}e^{im\phi}\psi(\rho,z),\qquad \omega \in \mathbb{R},\ m \in \mathbb{Z},
\end{equation}
with $\psi$ real. We will refer to $m$ as the ``azimuthal number'' and $\psi$ as the ``reduced scalar field.'' One may easily check that for such a scalar field, the energy-momentum tensor satisfies the conditions given in \eqref{eq:SEtensor-orthog-directions}.

It is not hard to rewrite the Klein--Gordon equation
\[\left(\Box_g-\mu^2\right)\left(e^{-i\omega t}e^{im\phi}\psi(\rho,z)\right) = 0,\]
as
\[
\sigma^{-1}\partial_{\rho}\left(\sigma\partial_{\rho}\psi\right) + \sigma^{-1}\partial_z\left(\sigma\partial_z\psi\right) + e^{2\lambda}\sigma^{-2}X^{-1}\left(X\omega+Wm\right)^2\psi - e^{2\lambda}m^2X^{-1}\psi- e^{2\lambda}\mu^2 \psi = 0.
\]
Thus, we may characterize when $(\mathcal{M},g)$ is a solution to the Einstein--Klein--Gordon equations with scalar field $\Psi$.
\begin{theorem}
Suppose that $(\mathcal{M},g)$ solves the Einstein--Klein--Gordon equations with a scalar field $\Psi$ of the form \eqref{formscalar}. Then the metric data and reduced scalar field satisfy the following equations on $\mathscr{B} = \{(\rho,z) \in \mathbb{R}^{2} : \rho > 0 \}$:
\begin{enumerate}
\item $X$ satisfies
\[
 \sigma^{-1}\partial_{\rho}(\sigma\partial_{\rho}X) + \sigma^{-1}\partial_{z}(\sigma\partial_{z}X) = -e^{2\lambda}(2 m^{2} + \mu^{2} X) \psi^{2} + \frac{(\partial_{\rho}X)^{2}+(\partial_{z}X)^{2} -\theta_{\rho}^{2}-\theta_{z}^{2}}{X}.
\]
\item $W$ satisfies
\[
\partial_{\rho}(X^{-1}W) d\rho + \partial_{z} (X^{-1}W) dz = \frac{\sigma}{X^{2}}[\theta_{\rho}dz - \theta_{z} d\rho].
\]
\item $\theta$ satisfies
\[
d\theta =(\partial_{\rho}\theta_{z} - \partial_{z}\theta_{\rho}) d\rho\wedge dz = 2\sigma^{-1}e^{2\lambda} \left( X\omega m + W m^{2}\right) d\rho\wedge dz,
\]
as well as
\[
\sigma^{-1}\partial_{\rho}(\sigma \theta_{\rho})+\sigma^{-1}\partial_{z}(\sigma \theta_{z}) = \frac{2\theta_{\rho}\partial_{\rho} X + 2 \theta_{z}\partial_{z}X}{X}.
\]
\item $\sigma$ satisfies
\[
X^{-1}e^{-2\lambda}\sigma\left(\partial_{\rho}^2\sigma + \partial_z^2\sigma\right) = \left(\left(\omega + X^{-1}Wm\right)^2 - \sigma^2\left(\frac{\mu^2}{X} + \frac{m^2}{X^2}\right)\right)\psi^2.
\]
\item $\lambda$ satisfies the following equations at the points where $|\partial\sigma|\not = 0$
\[
\partial_{\rho}\lambda = \alpha_{\rho} - \frac 12 \partial_{\rho}\log X, \qquad \partial_{z}\lambda = \alpha_{z} - \frac 12 \partial_{z} \log X
\]
where
\begin{align*}
& \left( (\partial_{\rho}\sigma)^{2}+(\partial_{z}\sigma)^{2}\right)  \alpha_{\rho}\\
& = \frac 12 (\partial_{\rho}\sigma) \sigma \left( (\partial_{\rho}\psi)^{2} - (\partial_{z}\psi)^{2} + \frac 12 X^{-2} \left[(\partial_{\rho}X)^{2} - (\partial_{z}X)^{2} + (\theta_{\rho})^{2} - (\theta_{z})^{2} \right] \right)\\
& + \partial_{\rho}\sigma (\partial^{2}_{\rho}\sigma - \partial^{2}_{z}\sigma) + \partial_{z}\sigma (\partial^{2}_{\rho,z}\sigma)\\
& + (\partial_{z}\sigma)\sigma \left[ (\partial_{\rho}\psi)(\partial_{z}\psi) + \frac 12 X^{-2}\left((\partial_{\rho}X)(\partial_{z}X) + (\theta_{\rho})(\theta_{z}) \right) \right],
\end{align*}
and
\begin{align*}
& \left( (\partial_{\rho}\sigma)^{2}+(\partial_{z}\sigma)^{2}\right)  \alpha_{z}\\
& = -\frac 12 (\partial_{z}\sigma)\sigma \left(  (\partial_{\rho}\psi)^{2} - (\partial_{z}\psi)^{2} + \frac 12 X^{-2} \left[(\partial_{\rho}X)^{2} - (\partial_{z}X)^{2} + (\theta_{\rho})^{2} - (\theta_{z})^{2} \right]  \right)\\
& - \partial_{z}\sigma (\partial^{2}_{\rho}\sigma - \partial^{2}_{z}\sigma) + \partial_{\rho}\sigma (\partial^{2}_{\rho,z}\sigma)\\
& + (\partial_{\rho}\sigma)\sigma \left[ (\partial_{\rho}\psi)(\partial_{z}\psi) + \frac 12 X^{-2} \left( (\partial_{\rho}X)(\partial_{z}X) + (\theta_{\rho})(\theta_{z}) \right)\right].
\end{align*}
Independent of the behavior of $\sigma$, $\lambda$ satisfies
\begin{align*}
2\partial^{2}_{\rho}\lambda + 2\partial^{2}_{z}\lambda & = - \partial^{2}_{\rho}\log X - \partial^{2}_{z}\log X +  \sigma^{-1}(\partial_{\rho}^{2}\sigma + \partial^{2}_{z}\sigma)\\
& - e^{2\lambda} \mu^{2} \psi^{2} - (\partial_{\rho}\psi)^{2} - (\partial_{z}\psi)^{2} - X^{-1}\left( 2m^{2} + \mu^{2} X\right)e^{2\lambda}\psi^2\\
& - \frac 12 X^{-2}\left[ (\partial_{\rho}X)^{2} + (\partial_{z}X)^{2} + (\theta_{\rho})^{2} + (\theta_{z})^{2} \right].
\end{align*}
\item $\psi$ satisfies
\[
\sigma^{-1}\partial_{\rho}\left(\sigma\partial_{\rho}\psi\right) + \sigma^{-1}\partial_z\left(\sigma\partial_z\psi\right) + e^{2\lambda}\sigma^{-2}X^{-1}\left(X\omega+Wm\right)^2\psi - e^{2\lambda}m^2X^{-1}\psi- e^{2\lambda}\mu^2 \psi = 0.
\]
\end{enumerate}
Conversely, if the metric data and reduced scalar field solves each of these equations, and $|\partial\sigma|\not = 0$ on $\mathscr{B}$, then we may recover the metric and scalar field $(\mathcal{M},g,\Psi)$, solving the Einstein--Klein--Gordon equations.
\end{theorem}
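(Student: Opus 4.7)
The plan is to deduce items (1)--(5) from Theorem~\ref{theo:ein-reduce} by substituting the Klein--Gordon energy-momentum tensor~\eqref{kgemt}, and to derive item (6) separately by computing $\Box_g\Psi - \mu^2\Psi$ in the coordinates $(t,\phi,\rho,z)$. First I verify that Theorem~\ref{theo:ein-reduce} is applicable: since $\partial_t\Psi = -i\omega\Psi$ and $\partial_\phi\Psi = im\Psi$ are purely imaginary multiples of $\Psi$ whereas $\partial_\rho\Psi$ and $\partial_z\Psi$ are real multiples, the cross terms $\text{Re}(\partial_t\Psi\overline{\partial_\rho\Psi})$, $\text{Re}(\partial_t\Psi\overline{\partial_z\Psi})$, $\text{Re}(\partial_\phi\Psi\overline{\partial_\rho\Psi})$, $\text{Re}(\partial_\phi\Psi\overline{\partial_z\Psi})$ all vanish, and combined with $g_{t\rho}=g_{tz}=g_{\phi\rho}=g_{\phi z}=0$ this gives the orthogonality hypothesis~\eqref{eq:SEtensor-orthog-directions}.

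The main computational step is then to evaluate all contractions of $\mathbb{T}$ appearing on the right-hand sides of Theorem~\ref{theo:ein-reduce}. Using the inverse-metric components $g^{tt}=-X/\sigma^2$, $g^{t\phi}=W/\sigma^2$, $g^{\phi\phi}=V/\sigma^2$, $g^{\rho\rho}=g^{zz}=e^{-2\lambda}$, a direct computation gives
\[
|\nabla\Psi|^2 = -\frac{(X\omega+Wm)^2}{X\sigma^2}\psi^2 + \frac{m^2}{X}\psi^2 + e^{-2\lambda}\bigl((\partial_\rho\psi)^2+(\partial_z\psi)^2\bigr),
\]
and, writing $E \doteq |\nabla\Psi|^2+\mu^2\psi^2$,
\[
\mathbb{T}(\Phi,\Phi) = m^2\psi^2 - \tfrac12 XE, \quad \mathbb{T}(\Phi,T)=-\omega m\psi^2 - \tfrac12 WE, \quad {\rm Tr}(\mathbb{T}) = -|\nabla\Psi|^2 - 2\mu^2\psi^2,
\]
together with $\mathbb{T}(\partial_\rho,\partial_\rho)-\mathbb{T}(\partial_z,\partial_z) = (\partial_\rho\psi)^2-(\partial_z\psi)^2$ and $\mathbb{T}(\partial_\rho,\partial_z)=(\partial_\rho\psi)(\partial_z\psi)$. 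Moreover, since $(T-X^{-1}W\Phi)\Psi = -i(\omega+X^{-1}Wm)\Psi$ and $g(T-X^{-1}W\Phi,T-X^{-1}W\Phi) = -\sigma^2/X$, one finds
\[
\mathbb{T}(T-X^{-1}W\Phi,T-X^{-1}W\Phi) = (\omega+X^{-1}Wm)^2\psi^2 + \tfrac{\sigma^2}{2X}E.
\]

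Substituting these identities into each item of Theorem~\ref{theo:ein-reduce} then yields items (1)--(5) after algebraic simplification; the step requiring the most care is the systematic cancellation of the $E$- and $|\nabla\Psi|^2$-contributions. For instance, in item (1) one has $-2\mathbb{T}(\Phi,\Phi)+X\,{\rm Tr}(\mathbb{T})=-(2m^2+\mu^2X)\psi^2$ once the $|\nabla\Psi|^2$ terms cancel; in item (4), the three $\mathbb{T}$-contributions combine to $\bigl((\omega+X^{-1}Wm)^2 - \sigma^2(\mu^2/X + m^2/X^2)\bigr)\psi^2$; and the $\lambda$-formulas in item (5) follow directly from the $(\rho,\rho)$, $(z,z)$, $(\rho,z)$ components listed above. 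For item (6) I compute $\Box_g\Psi$ using $\Box_g f = |g|^{-1/2}\partial_\mu(|g|^{1/2}g^{\mu\nu}\partial_\nu f)$ with $|g|^{1/2}=\sigma e^{2\lambda}$, factor out the $e^{-i\omega t + im\phi}$ dependence of $\Psi$, and collect the terms produced by the $(t,\phi)$-block of the inverse metric, which generate the $(X\omega+Wm)^2$ and $m^2$ contributions in the stated equation. The converse direction follows by invoking the converse part of Theorem~\ref{theo:ein-reduce}, noting that the reduced $\psi$-equation is equivalent, via separation of variables, to $(\Box_g-\mu^2)\Psi=0$; this in turn implies $\nabla^\mu\mathbb{T}_{\mu\nu}=0$, so no extra compatibility check is needed.
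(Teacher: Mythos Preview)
Your proposal is correct and follows essentially the same approach as the paper: specialize Theorem~\ref{theo:ein-reduce} by substituting the Klein--Gordon energy-momentum tensor~\eqref{kgemt}, after first checking the orthogonality hypothesis~\eqref{eq:SEtensor-orthog-directions}, and separately reduce the Klein--Gordon equation in coordinates. The paper in fact leaves almost all of these computations implicit (it merely remarks that ``it is not hard to rewrite the Klein--Gordon equation'' and that ``one may easily check'' the orthogonality condition), so your proposal supplies exactly the algebraic details that the paper omits, and your intermediate identities for $|\nabla\Psi|^{2}$, $\mathbb{T}(\Phi,\Phi)$, $\mathbb{T}(\Phi,T)$, ${\rm Tr}(\mathbb{T})$, and $\mathbb{T}(T-X^{-1}W\Phi,T-X^{-1}W\Phi)$ are all correct.
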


\subsection{Acknowledgements}
YS acknowledges support from the NSF Postdoctoral Research Fellowship under award no.~1502569.

\section{Differentiable Structure and Basic Properties of the Spacetimes}
In this section we will describe the basic structure of the spacetimes we will work with. A key point will be to set up a framework so that
 \begin{enumerate}
    \item It is easy to check if a given spacetime $(\mathcal{M},g)$ extends to a larger spacetime $(\tilde{\mathcal{M}},\tilde g)$ which is asymptotically flat and possesses an event horizon.
    \item The Kerr family can be represented in a natural fashion.
 \end{enumerate}
\subsection{The Conformal Manifold with Corners $\overline{\mathscr{B}}$}
In this section we will construct a $1$-parameter family of conformal manifolds with corners $\overline{\mathscr{B}^{(\beta)}}$ on which our functions will be defined. A diagram of the manifold described below can be seen in Figure \ref{fig:conf-manifold}.

We start with a $2$-dimensional conformally flat manifold
\[\mathscr{B} \doteq \left\{(\rho,z) \in \mathbb{R}^2 : \rho > 0\right\}.\]

Let $\beta > 0$. It will be useful to consider the following three submanifolds of $\mathscr{B}$:
\begin{align*}
&\mathscr{B}^{(\beta)}_A \doteq \left\{(\rho,z) \in \mathscr{B} : \rho^2 + (z-\beta)^2 > \frac{\beta}{200}\text{, }\rho^2 + (z+\beta)^2 > \frac{\beta}{200}\text{, }\left|z\right| + \left|\rho\right| > \left(1-\frac{1}{500}\right)\beta \right\},
\\ \nonumber &\mathscr{B}^{(\beta)}_H \doteq \left\{(\rho,z) \in \mathscr{B} : \rho^2 + (z-\beta)^2 > \frac{\beta}{200}\text{, }\rho^2 + (z+\beta)^2 > \frac{\beta}{200}\text{, }\left|z\right| +\left|\rho\right| < \left(1+\frac{1}{500}\right)\beta\right\},
\\ \nonumber &\mathscr{B}^{(\beta)}_N \doteq \left\{(\rho,z) \in \mathscr{B} : \rho^2 + (z-\beta)^2 < \frac{\beta}{100}\right\},
\\ \nonumber &\mathscr{B}^{(\beta)}_S \doteq \left\{(\rho,z) \in \mathscr{B} : \rho^2 + (z+\beta)^2 < \frac{\beta}{100}\right\}.
\end{align*}

We now turn to gluing in boundaries along $\mathscr{B}^{(\beta)}_A$, $\mathscr{B}^{(\beta)}_H$, $\mathscr{B}^{(\beta)}_N$, and $\mathscr{B}^{(\beta)}_S$.

The boundaries for $\mathscr{B}^{(\beta)}_A$ and $\mathscr{B}^{(\beta)}_H$ are straightforward. Regarding $\mathscr{B}^{(\beta)}_A \cup \mathscr{B}^{(\beta)}_H$ as a subset of $\mathbb{R}^2$, we extend $\mathscr{B}^{(\beta)}_A$ and $\mathscr{B}^{(\beta)}_H$ to conformal manifolds with boundary $\overline{\mathscr{B}^{(\beta)}_A}$ and $\overline{\mathscr{B}^{(\beta)}_H}$ by gluing in the points
\[\left\{(0,z) : (z-\beta)^2 > \frac{\beta}{200}\text{, }(z+\beta)^2 > \frac{\beta}{200}\right\}.\]

Before turning to $\mathscr{B}^{(\beta)}_N$, we introduce a new set of conformally flat coordinates.
\begin{lemma}\label{schi}The coordinates $(s,\chi) \in (0,\infty) \times (0,\infty)$ defined by the formula
\[\rho \doteq s\chi,\]
\[z \doteq \frac{1}{2}\left(\chi^2 - s^2\right) + \beta,\]
represent a smooth change of coordinates and respect the conformal structure on $\mathscr{B}$, $\overline{\mathscr{B}^{(\beta)}_A}$, and $\overline{\mathscr{B}^{(\beta)}_H}$.
\end{lemma}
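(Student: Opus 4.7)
The plan is to verify directly that $(s,\chi) \mapsto (\rho,z)$ is a smooth bijection between $(0,\infty)^2$ and $\mathscr{B}$ with a conformally flat pullback metric, and then check that the map extends smoothly to the relevant boundary components.

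First, I would compute $d\rho = \chi\,ds + s\,d\chi$ and $dz = -s\,ds + \chi\,d\chi$, so the Jacobian determinant equals $\chi^2+s^2$, which is strictly positive on $(0,\infty)\times(0,\infty)$. In particular, the map is a smooth local diffeomorphism by the inverse function theorem. A short squaring-and-adding computation gives
\[
d\rho^2 + dz^2 = (\chi\,ds + s\,d\chi)^2 + (-s\,ds + \chi\,d\chi)^2 = (s^2+\chi^2)(ds^2+d\chi^2),
\]
which establishes that the pullback metric is conformally flat, with conformal factor $s^2+\chi^2$ which is smooth and positive on $(0,\infty)\times(0,\infty)$.

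Next, I would establish that the map is a bijection onto $\mathscr{B}$. Given $(\rho,z)\in\mathscr{B}$, the desired $s,\chi>0$ must satisfy $s\chi=\rho$ and $\chi^2-s^2=2(z-\beta)$, so that $\chi^2$ and $-s^2$ are the two roots of the quadratic $t^2 - 2(z-\beta)t - \rho^2 = 0$. Its discriminant $4(z-\beta)^2+4\rho^2$ is strictly positive (since $\rho>0$), and one root is positive while the other is negative; this gives the explicit formulas
\[
\chi^2 = (z-\beta) + \sqrt{(z-\beta)^2 + \rho^2}, \qquad s^2 = -(z-\beta) + \sqrt{(z-\beta)^2+\rho^2},
\]
each strictly positive, so $s,\chi>0$ are uniquely determined. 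These formulas are smooth in $(\rho,z)$ on $\mathscr{B}$, so the inverse is smooth as well.

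Finally, I would address the extension to the boundaries $\overline{\mathscr{B}^{(\beta)}_A}$ and $\overline{\mathscr{B}^{(\beta)}_H}$. The map $(s,\chi)\mapsto (\rho,z)$ is visibly polynomial, hence extends smoothly to all of $[0,\infty)\times[0,\infty)$. Setting $s=0,\chi>0$ yields $\rho=0, z=\frac{1}{2}\chi^2+\beta > \beta$, parametrizing the ``north'' axis component, while $\chi=0,s>0$ yields $\rho=0, z=-\frac{1}{2}s^2+\beta < \beta$, parametrizing the ``horizon'' axis and the ``south'' axis component; the north/south poles $(0,\pm\beta)$ correspond to the corner $(s,\chi)=(0,0)$ and to $(s,\chi)=(2\sqrt{\beta},0)$. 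Since $\overline{\mathscr{B}^{(\beta)}_A}$ and $\overline{\mathscr{B}^{(\beta)}_H}$ are defined to exclude small neighborhoods of the poles, the conformal factor $s^2+\chi^2$ remains bounded away from $0$ and $\infty$ on their preimages, so the extended map is a diffeomorphism of conformal manifolds with boundary onto each of these regions. The one-sided positivity of the relevant Jacobian minors at the boundary confirms that no degeneration occurs there.

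The computations are entirely elementary; the only mild obstacle is keeping track of the correspondence between boundary components in the $(s,\chi)$-plane and boundary components in the $(\rho,z)$-plane, together with the observation that the conformal structure extends smoothly precisely because the poles, where $s^2+\chi^2$ vanishes, are excluded from $\overline{\mathscr{B}^{(\beta)}_A}$ and $\overline{\mathscr{B}^{(\beta)}_H}$.
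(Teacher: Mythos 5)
Your proposal is correct and follows essentially the same route as the paper: the explicit inversion formulas $\chi^2 = (z-\beta)+\sqrt{(z-\beta)^2+\rho^2}$, $s^2 = -(z-\beta)+\sqrt{(z-\beta)^2+\rho^2}$, the computation $d\rho^2+dz^2 = (s^2+\chi^2)(ds^2+d\chi^2)$, and the observation that the degenerate point is excluded from $\mathscr{B}\cup\overline{\mathscr{B}^{(\beta)}_A}\cup\overline{\mathscr{B}^{(\beta)}_H}$ (your boundary bookkeeping is in fact more detailed than the paper's). One small correction: the conformal factor $s^2+\chi^2$ vanishes only at the \emph{north} pole $(\rho,z)=(0,\beta)$, i.e.\ $(s,\chi)=(0,0)$ --- at the south pole, which corresponds to $(s,\chi)=(2\sqrt{\beta},0)$, it equals $4\beta>0$ --- so your closing phrase ``the poles, where $s^2+\chi^2$ vanishes'' is slightly off, though harmless since both poles are excluded from the regions in question.
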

\begin{proof}
First of all, one easily derives the formulas
\[\chi = \sqrt{(z-\beta) + \sqrt{(z-\beta)^2+\rho^2}},\]
\[s = \sqrt{-(z-\beta) + \sqrt{(z-\beta)^2+\rho^2}}.\]
In particular, it immediately follows that $(s,\chi)$ represents a smooth change of coordinates on $\mathscr{B}$, $\overline{\mathscr{B}^{(\beta)}_A}$, and $\overline{\mathscr{B}^{(\beta)}_H}$.

To see that $(s,\chi)$ are conformally flat, we note that a straightforward calculation yields
\[d\rho^2 + dz^2 = \left(\chi^2+s^2\right)\left(ds^2 + d\chi^2\right),\]
and that $\chi^2+s^2 = 0$ if and only if $(\rho,z) = (0,\beta)$. Clearly, $(0,\beta) \not\in \overline{\mathscr{B}^{(\beta)}_A} \cup \overline{\mathscr{B}^{(\beta)}_H}\cup \mathscr{B}$.
\end{proof}
\begin{remark}The usefulness of such a coordinate system in the context of the study of stationary and axisymmetric black hole spacetimes was observed in~\cite{weinstein2}.
\end{remark}

We now extend $\mathscr{B}^{(\beta)}_N$ to a manifold with corners $\overline{\mathscr{B}^{(\beta)}_N}$ by gluing in the points \[\left\{(0,\chi) : 0 \leq \chi < \left(\frac{\beta}{25}\right)^{1/4}\right\}\qquad \text{and}\qquad \left\{(s,0) : 0 \leq s < \left(\frac{\beta}{25}\right)^{1/4}\right\}.\] Note that it is clear from Lemma~\ref{schi} that the conformal manifold with corner structure of $\overline{\mathscr{B}^{(\beta)}_N}$ is compatible with the conformal manifold with boundary structure on $\overline{\mathscr{B}^{(\beta)}_A}$ and $\overline{\mathscr{B}^{(\beta)}_H}$. We denote the corner point $(s,\chi) = (0,0)$ by $p_N$.

Let us also take the opportunity to list the formulas linking $\partial_s$ and $\partial_{\chi}$ with $\partial_{\rho}$ and $\partial_z$:
\begin{equation}\label{schitorhoz}
\partial_s = \chi\partial_{\rho} - s\partial_z,\qquad \partial_{\chi} = s\partial_{\rho} + \chi\partial_z,
\end{equation}
\begin{equation}\label{rhoztoschi}
\partial_{\rho} = \frac{\chi}{\chi^2+s^2}\partial_s + \frac{s}{\chi^2+s^2}\partial_{\chi},\qquad
\partial_z = \frac{-s}{\chi^2+s^2}\partial_s + \frac{\chi}{\chi^2+s^2}\partial_{\chi}.
\end{equation}

Finally, we turn to $\mathscr{B}^{(\beta)}_S$. Here we introduce $(s',\chi')$ coordinates by
\[\rho \doteq s'\chi',\]
\[z \doteq \frac{1}{2}\left((\chi')^2 - (s')^2\right) - \beta,\]
and, in an analogous fashion to the boundary of $\overline{\mathscr{B}^{(\beta)}_N}$, define $\overline{\mathscr{B}^{(\beta)}_S}$ by gluing in the points \[\left\{(0,\chi') : 0 \leq \chi' < \left(\frac{\beta}{25}\right)^{1/4}\right\}\qquad \text{and}\qquad\left\{(s',0) : 0 \leq s' < \left(\frac{\beta}{25}\right)^{1/4}\right\}.\]
 We denote the corner point $(s',\chi') = (0,0)$ by $p_S$.

The final conformal manifold with corners is then defined by
\[\overline{\mathscr{B}^{(\beta)}} \doteq \overline{\mathscr{B}^{(\beta)}_A} \cup \overline{\mathscr{B}^{(\beta)}_H}\cup \overline{\mathscr{B}^{(\beta)}_N} \cup \overline{\mathscr{B}^{(\beta)}_S}.\]

In order not to overburden the notation, we will often drop the $(\beta)$ from the notation and write $\overline{\mathscr{B}}$, $\overline{\mathscr{B}_A}$, $\overline{\mathscr{B}_H}$, $\overline{\mathscr{B}_N}$, and $\overline{\mathscr{B}_S}$.

Before we close the section we introduce some convenient notation for various parts of $\partial\overline{\mathscr{B}}$. When giving these definitions it is useful to observe that the $(\rho,z)$ coordinates which are initially valid in $\mathscr{B}$ extend continuously to $\overline{\mathscr{B}}$ and under this identification, $\partial\overline{\mathscr{B}} = \{\rho = 0\}$.

We define the north pole $\mathscr{A}_N$ to be the region
\[\mathscr{A}_N \doteq \left\{(\rho,z) \in \overline{\mathscr{B}} : \rho = 0 \text{ and } z > \beta\right\}.\]

We define the south pole $\mathscr{A}_S$ to be the region
\[\mathscr{A}_S \doteq \left\{(\rho,z) \in \overline{\mathscr{B}} : \rho = 0 \text{ and } z < -\beta\right\}.\]

The axis $\mathscr{A}$ is defined to be the union of the north and south pole:
\[\mathscr{A} \doteq \mathscr{A}_N \cup \mathscr{A}_S.\]

Lastly, we define the horizon $\mathscr{H}$ to be the region
\[\mathscr{H} \doteq \left\{(\rho,z) \in \overline{\mathscr{B}} : \rho = 0 \text{ and } z \in (-\beta,\beta)\right\}.\]

The highly suggestive names for these regions will be motivated in Section~\ref{statax}.

\begin{figure}[h!]
\begin{tikzpicture}
	\filldraw [opacity = .2] (0,-3) rectangle (2,2);
	\filldraw [white] (0,0) circle (.1);
	\filldraw [white] (0,-1) circle (.1);
	\draw [thick, ->] (0,.1) -- node [left] {$\mathscr{A}_N$}  (0,2.1) node [above] {$\scriptstyle \rho=0$};
	\draw [thick] (0,-.1) -- node [left] {$\mathscr{H}$}  (0,-.9);
    	\draw [thick, ->] (0,-1.1) -- node [left] {$\mathscr{A}_S$}  (0,-3.1);
	
	\node  at (-.5,0) {$\scriptstyle z=+\beta$};
	\node  at (-.5,-1) {$\scriptstyle z=-\beta$};	
	
	\draw [->] (1,-2.5) -- (1,-2) node [left] {$z$};
	\draw [->] (1,-2.5) -- (1.5,-2.5) node [below] {$\rho$};
	
	\draw  [dashed] (.05,.15) -- (3.45,1.85);
	\draw [dashed] (.05,-.15) -- (4,-.65);
	\draw [dashed] (3.95,.65) circle (1.3);

	\draw [thick, ->] (3.95,.65) -- (3.95,.65+1.3) node [below left] {$\chi$};
	\draw [thick, ->] (3.95,.65) -- (3.95 + 1.3,.65) node [below left] {$s$};
	\filldraw [opacity = .2] (3.95,.65) -- (3.95+1.3,.65) arc (0:90:1.3) -- cycle;
	\node at (3.9,.47) {$p_{N}$};
\end{tikzpicture}
\caption{A diagram of $\overline{\mathscr{B}^{(\beta)}}$. In particular, the grey shaded region (and bolded lines) on the left represents $\overline{\mathscr{B}^{(\beta)}_{A}} \cup\overline{\mathscr{B}^{(\beta)}_{H}}$, while the cutout on the right depicts $\overline{\mathscr{B}_{N}^{(\beta)}}$.}
\label{fig:conf-manifold}
\end{figure}

\subsection{Stationary and Axisymmetric Spacetimes}\label{statax}
The basic objects we will study are given by the following definition.
\begin{definition}We say that $(\mathcal{M},g)$ is a ``stationary and axisymmetric spacetime'' if
\[\mathcal{M} = \left\{(t,\phi,\rho,z) \in \mathbb{R} \times (0,2\pi) \times \mathscr{B}\right\},\]
and there exist smooth functions $V$, $W$, $X$, $\lambda : \mathscr{B} \to \mathbb{R}$ such that $X$ is positive and the expression
\begin{equation}\label{standardform}
g \doteq -Vdt^2 + 2Wdtd\phi + Xd\phi^2 + e^{2\lambda}\left(d\rho^2 + dz^2\right)
\end{equation}
defines a Lorentzian metric on $\mathcal{M}$.
\end{definition}
\begin{remark}We do \underline{not} assume that $V > 0$. However, the requirement that $g$ is Lorentzian does force
\[XV + W^2 > 0.\]
\end{remark}

It is well known from the black hole uniqueness literature that if the functions $V$, $W$, $X$, and $\lambda$ have appropriate asymptotic behaviors as $\rho \to 0$, then $(\mathcal{M},g)$ may be extended to a larger Lorentzian manifold with boundary $(\tilde{\mathcal{M}},\tilde g)$ which is asymptotically flat and has a boundary consisting of a bifurcate Killing event horizon. In the remainder of this section we will precisely describe a sufficient set of boundary conditions.

We start with the boundary conditions along $\partial\overline{\mathscr{B}_A}\cap \mathscr{A}$.
\begin{definition}\label{extend-axis} Let $\tilde{\mathscr{A}} \subset \overline{\mathscr{B}_A}$ be an open set around the axis $\mathscr{A}$.

We say that a stationary and axisymmetric spacetime $(\mathcal{M},g)$ is extendable along $\partial\overline{\mathscr{B}_A}\cap \mathscr{A}$ if
\begin{enumerate}
    \item There exists a smooth function $V_{\mathscr{A}}(\rho,z) : \tilde{\mathscr{A}} \to \mathbb{R}$ such that $V_{\mathscr{A}}\left(0,z\right) > 0$ and
        \[V\left(\rho,z\right)\big|_{\tilde{\mathscr{A}}} = V_{\mathscr{A}}\left(\rho^2,z\right).\]
    \item There exists a smooth function $W_{\mathscr{A}}(\rho,z) : \tilde{\mathscr{A}} \to \mathbb{R}$ such that \[W\left(\rho,z\right)\big|_{\tilde{\mathscr{A}}} = \rho^2W_{\mathscr{A}}\left(\rho^2,z\right).\]
    \item There exists a smooth function $X_{\mathscr{A}}(\rho,z) : \tilde{\mathscr{A}} \to \mathbb{R}$ such that $X_{\mathscr{A}}\left(0,z\right) > 0$ and
        \[X\left(\rho,z\right)\big|_{\tilde{\mathscr{A}}} = \rho^2X_{\mathscr{A}}\left(\rho^2,z\right).\]
    \item There exists a smooth function $\Sigma_{\mathscr{A}}(\rho,z) : \tilde{\mathscr{A}} \to \mathbb{R}$ such that \
        \[e^{2\lambda}\left(\rho,z\right)\big|_{\tilde{\mathscr{A}}} = X_{\mathscr{A}}\left(\rho^2,z\right) + \rho^2\Sigma_{\mathscr{A}}\left(\rho^2,z\right).\]
\end{enumerate}
\end{definition}

Next we give the boundary conditions along $\partial\overline{\mathscr{B}_H} \cap\mathscr{H}$.
\begin{definition}\label{extend-horizon}Let $\tilde{\mathscr{H}} \subset \overline{\mathscr{B}_H}$ be an open set around the horizon $\mathscr{H}$.

We say that a stationary and axisymmetric spacetime $(\mathcal{M},g)$ is extendable along $\partial\overline{\mathscr{B}_H} \cap\mathscr{H}$ if there exists $\Omega \in \mathbb{R}$ and $\kappa > 0$ such that
\begin{enumerate}
    \item There exists a smooth function $V_{\mathscr{H}}(\rho,z) : \tilde{\mathscr{H}} \to \mathbb{R}$ such that $V_{\mathscr{H}}\left(0,z\right) > 0$ and
        \[\left(V\left(\rho,z\right) -2\Omega W\left(\rho,z\right) - \Omega^2X\left(\rho,z\right)\right)\big|_{\tilde{\mathscr{H}}} = \rho^2V_{\mathscr{H}}\left(\rho^2,z\right).\]
    \item There exists a smooth function $W_{\mathscr{H}}(\rho,z) : \tilde{\mathscr{H}} \to \mathbb{R}$ such that
        \[\left(W\left(\rho,z\right) + \Omega X\left(\rho,z\right)\right)\big|_{\tilde{\mathscr{H}}} = \rho^2W_{\mathscr{H}}\left(\rho^2,z\right).\]
    \item There exists a smooth function $X_{\mathscr{H}}(\rho,z) : \tilde{\mathscr{H}} \to \mathbb{R}$ such that $X_{\mathscr{H}}\left(0,z\right) > 0$ and
        \[X\left(\rho,z\right)\big|_{\tilde{\mathscr{H}}} = X_{\mathscr{H}}\left(\rho^2,z\right).\]
    \item There exists a smooth function $\Sigma_{\mathscr{H}}(\rho,z) : \tilde{\mathscr{H}} \to \mathbb{R}$ such that
        \[e^{2\lambda}\left(\rho,z\right)\big|_{\tilde{\mathscr{H}}} = \kappa^{-2}V_{\mathscr{H}}\left(\rho^2,z\right) + \rho^2\Sigma_{\mathscr{H}}\left(\rho^2,z\right).\]
\end{enumerate}
\end{definition}

Now we turn to the region where the horizon meets the axis.
\begin{definition}\label{extend-north-pole}We say that a stationary and axisymmetric spacetime $(\mathcal{M},g)$ is extendable along $\partial\overline{\mathscr{B}_N}$ if there exists $\Omega \in \mathbb{R}$ and $\kappa > 0$ such that
\begin{enumerate}
    \item There exists a smooth function $V_N(s,\chi) : \overline{\mathscr{B}_N} \to \mathbb{R}$ such that $V_N\left(0,\chi\right) > 0$,  $V_N\left(s,0\right) > 0$, and
        \[\left(V\left(s,\chi\right) -2\Omega W\left(s,\chi\right) - \Omega^2X\left(s,\chi\right)\right)\big|_{\overline{\mathscr{B}_N}} = \chi^2V_N\left(s^2,\chi^2\right).\]
    \item There exists a smooth function $W_N(s,\chi) : \overline{\mathscr{B}_N} \to \mathbb{R}$ such that
        \[\left(W\left(s,\chi\right) + \Omega X\left(s,\chi\right)\right)\big|_{\overline{\mathscr{B}_N}} = s^2\chi^2W_N\left(s^2,\chi^2\right).\]
    \item There exists a smooth function $X_N(s,\chi) : \overline{\mathscr{B}_N} \to \mathbb{R}$ such that $X_N\left(0,\chi\right) > 0$, $X_N\left(s,0\right) > 0$, and
        \[X\left(s,\chi\right)\big|_{\overline{\mathscr{B}_N}} = s^2X_N\left(s^2,\chi^2\right).\]
    \item There exists a smooth function $\Sigma_N^{(1)}(s,\chi),\Sigma_N^{(2)} : \overline{\mathscr{B}_N} \to \mathbb{R}$ such that
        \[\left(\chi^2+s^2\right)e^{2\lambda}\left(s,\chi\right)\big|_{\overline{\mathscr{B}_N}} = X_N\left(s^2,\chi^2\right) + s^2\Sigma_N^{(1)}\left(s^2,\chi^2\right),\]
        \[\left(\chi^2+s^2\right)e^{2\lambda}\left(s,\chi\right)\big|_{\overline{\mathscr{B}_N}} = \kappa^{-2}V_N\left(s^2,\chi^2\right) + \chi^2\Sigma_N^{(2)}\left(s^2,\chi^2\right).\]
\end{enumerate}
\end{definition}
Analogously we also obtain a definition of begin ``extendable along $\partial\overline{\mathscr{B}_S}$. ''

\begin{proposition}\label{canitextend}Let $(\mathcal{M},g)$ be a stationary and axisymmetric spacetime which is extendable along $\partial\overline{\mathscr{B}_A}\cap \mathscr{A}$, $\partial\overline{\mathscr{B}_H} \cap\mathscr{H}$, $\partial\overline{\mathscr{B}_N}$, and $\partial\overline{\mathscr{B}_S}$. Then $(\mathcal{M},g)$ may be extended to a Lorentzian manfiold with corners $(\tilde{\mathcal{M}},\tilde{g})$  which is stationary and axisymmetric, and whose boundary corresponds to a bifurcate Killing event horizon.
\end{proposition}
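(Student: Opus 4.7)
The plan is to construct the extension $(\tilde{\mathcal{M}},\tilde g)$ by introducing four explicit local coordinate charts -- one each for neighborhoods of the axis $\mathscr{A}$, the horizon $\mathscr{H}$, and the two corners $p_N$, $p_S$ -- within each of which the extendability hypotheses dictate the correct change of variables and make the metric components smooth functions of variables that cross the boundary. I would treat each chart separately, then glue them together on their overlaps with $\mathcal{M}$ and verify that the resulting object is a Hausdorff Lorentzian manifold with corners on which $T$, $\Phi$, and their horizon-generator combination extend smoothly as Killing fields.

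Along the axis I would introduce Cartesian coordinates $x \doteq \rho\cos\phi$, $y \doteq \rho\sin\phi$. The identities $\rho^2\,d\phi^2 + d\rho^2 = dx^2 + dy^2$, $\rho^2\,d\phi = x\,dy - y\,dx$, and $\rho\,d\rho = x\,dx + y\,dy$ are all smooth in $(x,y)$, so that together with Definition \ref{extend-axis} the metric takes the manifestly smooth form
\[ g = -V_{\mathscr{A}}\,dt^2 + 2W_{\mathscr{A}}(x\,dy - y\,dx)\,dt + X_{\mathscr{A}}(dx^2+dy^2+dz^2) + \Sigma_{\mathscr{A}}(x\,dx+y\,dy)^2, \]
with the capital coefficients evaluated at $(x^2+y^2,z)$. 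Lorentzian signature at $\{x=y=0\}$ is immediate from $V_{\mathscr{A}}(0,z), X_{\mathscr{A}}(0,z) > 0$.

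Along the horizon I would first pass to the corotating frame $\phi'\doteq\phi-\Omega t$, so that $T+\Omega\Phi = \partial_t|_{\phi'}$ and Definition \ref{extend-horizon} yields
\[ g = -\rho^2 V_{\mathscr{H}}\,dt^2 + 2\rho^2 W_{\mathscr{H}}\,dt\,d\phi' + X_{\mathscr{H}}(d\phi')^2 + (\kappa^{-2}V_{\mathscr{H}} + \rho^2\Sigma_{\mathscr{H}})(d\rho^2+dz^2). \]
Kruskal-type coordinates $U\doteq-\rho e^{-\kappa t}$, $\mathcal{V}\doteq \rho e^{\kappa t}$ satisfy $U\mathcal{V} = -\rho^2$ and $-dU\,d\mathcal{V} = d\rho^2 - \kappa^2\rho^2\,dt^2$, while $\rho^2\,dt = (2\kappa)^{-1}(\mathcal{V}\,dU - U\,d\mathcal{V})$ is smooth; since every remaining factor of $\rho^2$ equals $-U\mathcal{V}$, the metric extends smoothly and Lorentz-signedly across $\{U\mathcal{V}=0\}$, and the boundary $\{U=0\}\cup\{\mathcal{V}=0\}$ is recognized as the bifurcate Killing horizon of $T+\Omega\Phi$ with bifurcation surface $\{U=\mathcal{V}=0\}$ and surface gravity $\kappa$.

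The main obstacle -- and the reason for the somewhat intricate conditions in Definition \ref{extend-north-pole} -- is the corner chart at $p_N$ (and symmetrically $p_S$), where the axis and the horizon must be crossed simultaneously. My plan is to combine the two previous constructions: introduce Cartesian $x'\doteq s\cos\phi'$, $y'\doteq s\sin\phi'$ in the $s$-factor together with Kruskal $U\doteq -\chi e^{-\kappa t}$, $\mathcal{V}\doteq \chi e^{\kappa t}$ in the $\chi$-factor. By Lemma \ref{schi} the conformal piece is $(s^2+\chi^2)e^{2\lambda}(ds^2+d\chi^2)$, and the two distinct expressions for $(\chi^2+s^2)e^{2\lambda}$ in Definition \ref{extend-north-pole}(4) let this be split as
\[ [X_N + s^2\Sigma_N^{(1)}]\,ds^2 + [\kappa^{-2}V_N + \chi^2\Sigma_N^{(2)}]\,d\chi^2, \]
so that the $ds^2$ block pairs with $X d\phi'^2 = s^2 X_N\,d\phi'^2$ to give the axis-regular $X_N(dx'^2+dy'^2) + \Sigma_N^{(1)}(x'\,dx'+y'\,dy')^2$, while the $d\chi^2$ block pairs with $-\chi^2 V_N\,dt^2$ to give the horizon-regular $-(V_N/\kappa^2)\,dU\,d\mathcal{V} + \tfrac14\Sigma_N^{(2)}(\mathcal{V}\,dU+U\,d\mathcal{V})^2$. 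This double-compatibility of Definition \ref{extend-north-pole}(4) is the crux of the argument. The remaining cross term $2s^2\chi^2 W_N\,dt\,d\phi'$ is smooth once written as $2W_N\,(\chi^2\,dt)(s^2\,d\phi')$, both factors being smooth by the previous two paragraphs, and all coefficients are smooth functions of $(s^2,\chi^2) = (x'^2+y'^2,-U\mathcal{V})$. Patching the four charts, running the symmetric construction at $p_S$, and noting that $T$, $\Phi$, and $T+\Omega\Phi$ extend smoothly across each chart as coordinate-Killing fields then completes the construction and exhibits the boundary as the desired bifurcate Killing event horizon.
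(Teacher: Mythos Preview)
Your approach is correct and coincides with the paper's, which defers the argument to Appendix~\ref{coords}: Cartesian $(x,y)$ at the axis, a corotating shift followed by Kruskal-type coordinates $w^\pm=\rho e^{\pm\kappa t}$ at the horizon, and the hybrid Cartesian/Kruskal chart at the corners, with precisely the splitting of $(\chi^2+s^2)e^{2\lambda}$ via the two expressions in Definition~\ref{extend-north-pole}(4) that you identify as the crux. One trivial slip: your displayed axis metric omits the term $(x^2+y^2)\Sigma_{\mathscr{A}}\,dz^2$ coming from $e^{2\lambda}dz^2$, but this does not affect the argument.
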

\begin{proof}This is essentially the content of Section 10 of~\cite{carter2}. Since our use of $(s,\chi)$ coordinates in slightly different than the approach from~\cite{carter2} (we take these alternative coordinates from~\cite{weinstein2}), in Appendix~\ref{coords} we have explicitly given the coordinate systems which define $\left(\tilde{\mathcal{M}},\tilde{g}\right)$.
\end{proof}

Proposition~\ref{canitextend} naturally leads to the following definition.
\begin{definition}\label{defextend} We say that a stationary and axisymmetric spacetime $(\mathcal{M},g)$ is ``extendable to a regular black hole spacetime'' if $(\mathcal{M},g)$ satisfies the assumptions of Proposition~\ref{canitextend}.
\end{definition}

Finally, we define a suitable notion of asymptotic flatness.
\begin{definition}\label{asymflat} We say that a stationary and axisymmetric spacetime $(\mathcal{M},g)$ is asymptotically flat if in the $(t,x,y,z)$ coordinates from Appendix~\ref{axiscoor}, the metric $\tilde g$ has the smooth expansion\footnote{By smooth expansion, we mean that after every application of a $(t,x,y,z)$ derivative, the error decays one power faster.}
\[
\tilde g = \left(1+O\left(r^{-1}\right)\right)\left(-dt^{2} + dx^{2}+dy^{2}+dz^{2}\right) + O\left(r^{-2}\right)\left(dtdx + dtdy+dxdy\right).
\]
\end{definition}

\section{The Kerr Exterior is Extendable to a Regular Black Hole Spacetime}\label{kerr}
In this section we will show that Kerr exterior is extendable to a regular black hole spacetime in the sense of Definition~\ref{defextend}.

Let $M > 0$, $0 < |a| < M$, and define $\tilde r_{\pm} = M \pm\sqrt{M^2-a^2}$. Then the domain of outer communication (minus the axis of symmetry) of the Kerr spacetime of mass $M$ and angular momentum $aM$ may be covered by a Boyer--Lindquist coordinate chart\footnote{Note that we have used $\tilde r$ in place of the more commonly used $r$, because we will frequently use the definition $r^{2}=1+\rho^{2}+z^{2}$, for $\rho,z$ isothermal coordinates.} on
\[\{(t,\tilde r,\theta,\phi) \in \mathbb{R}\times (\tilde r_{+},\infty)\times (0,\pi) \times (0,2\pi)\}\]
where the metric takes the form
\begin{equation}\label{kerrmetric}
g_{a,M} = -\left( 1-\frac{2M\tilde r}{\Sigma^{2}}\right) dt^{2} - \frac{4Ma\tilde r \sin^{2}\theta}{\Sigma^{2}} dtd\phi+ \sin^{2}\theta \frac{\Pi}{\Sigma^{2}}d\phi^{2}  +\frac{\Sigma^{2}}{\Delta} d\tilde r^{2}  + \Sigma^{2} d\theta^{2}.
\end{equation}
Here
\begin{align*}
 \Delta & \doteq  \tilde r^{2} - 2M \tilde r+a^{2}\\
 \Sigma^{2} & \doteq\tilde r^{2}+a^{2}\cos^{2}\theta\\
 \Pi & \doteq (\tilde r^{2}+a^{2})^{2}-a^{2}\sin^{2}\theta \Delta.
\end{align*}
Observe that $T
\doteq \frac{\partial}{\partial t}$ and $\Phi \doteq \frac{\partial}{\partial \phi}$ are Killing vectors.

In order to bring~(\ref{kerrmetric}) into the standard form we introduce isothermal coordinates in the $(\tilde r,\theta)$ plane by
\[\rho\left(\tilde r,\theta\right) \doteq \sqrt{\Delta}\sin\theta,\]
\[z\left(\tilde r,\theta\right) \doteq  \left(\tilde r-M\right)\cos\theta.\]
After some straightforward calculations, one finds that the metric in $(t,\phi,\rho,z)$ coordinates is in the form
\begin{equation}\label{kerrmetric2}
g_{a,M} = -V_{K(a,M)} dt^{2} + 2W_{K(a,M)}dtd\phi+ X_{K(a,M)}d\phi^{2}  +e^{2\lambda_{K(a,M)}}\left(d\rho^2+dz^2\right).
\end{equation}
 with
\begin{align*}
V_{K(a,M)} & = 1 - \frac{2M\tilde r}{\Sigma^{2}}\\
W_{K(a,M)} & = - \frac{2Ma\tilde r\sin^{2}\theta}{\Sigma^{2}}\\
X_{K(a,M)} & = \sin^{2}\theta \frac{\Pi}{\Sigma^{2}}\\
e^{2\lambda_{K(a,M)}} & = \Sigma^{2}\Delta^{-1}\left(\frac{(\tilde r-M)^{2}}{\Delta} \sin^{2}\theta + \cos^{2}\theta \right)^{-1}.
\end{align*}
Often we will drop the $(a,M)$ from the notation and refer to these functions as $V_K$, $W_K$, $X_K$, and $e^{2\lambda_K}$. In this case, the dependence on the parameters $(a,M)$ is always implied. Let us also take the opportunity to note the important fact that $X_KV_K + W_K^2 = \rho^2$ (this is in fact how we found $\rho$ to begin with).

For any choice of parameters $(a,M)$ we define $\gamma^{2} \doteq M^{2}-a^{2}$. We now have

\begin{lemma}
For $M>0$ and $|a|< M$, the Kerr spacetime in $(t,\phi,\rho,z)$ coordinates is an asymptotically flat extendable stationary and axisymmetric spacetime with $\beta=\gamma$.
\end{lemma}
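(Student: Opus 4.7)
The plan is to directly verify, for Kerr in the $(t,\phi,\rho,z)$ coordinates of \eqref{kerrmetric2}, the four extendability conditions of Definitions~\ref{extend-axis}, \ref{extend-horizon}, \ref{extend-north-pole} (and its $S$-analogue), together with the asymptotic flatness condition of Definition~\ref{asymflat}. A preliminary observation fixes $\beta$: from $\rho = \sqrt{\Delta}\sin\theta$ and $z = (\tilde r - M)\cos\theta$, the horizon $\{\Delta = 0\} = \{\tilde r = \tilde r_+\}$ is mapped onto $\{\rho = 0,\ |z|\leq \tilde r_+ - M\} = \{\rho = 0,\ |z|\leq \gamma\}$, while the axis $\sin\theta = 0$ with $\tilde r > \tilde r_+$ is mapped onto $\{\rho=0,\ |z|>\gamma\}$. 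Hence the horizon endpoints sit at $z = \pm\gamma$, forcing the choice $\beta = \gamma$. The verification then reduces to showing that the explicit formulas for $V_K, W_K, X_K, e^{2\lambda_K}$ factor correctly in the relevant boundary defining functions, which in each regime I identify via the isothermal change of variables.

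For the axis extension I use $\rho^2 = \Delta \sin^2\theta$: this shows $(\rho^2, z)$ are smooth in $(\tilde r, \sin^2\theta)$ near the axis (with smooth inverse by the implicit function theorem applied to the nondegenerate Jacobian at $\sin\theta = 0$), and the $\sin^2\theta$ prefactors in $W_K$ and $X_K$ immediately yield the $\rho^2$ factors of Definition~\ref{extend-axis}(2)--(3). Condition (1) follows from $V_K|_{\sin\theta=0} = 1 - 2M\tilde r/(\tilde r^2 + a^2) > 0$ for $\tilde r > \tilde r_+$, and condition (4) (absence of conical singularity) reduces to checking that $X_K/\rho^2 = \Sigma^2/\Delta$ and $e^{2\lambda_K}$ agree at $\sin\theta = 0$, which is a direct computation from the formulas in \eqref{kerrmetric2}. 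For the horizon extension I take the standard choices $\Omega = a/(\tilde r_+^2 + a^2)$ and $\kappa = (\tilde r_+ - M)/(\tilde r_+^2 + a^2)$; a direct calculation gives $V_K - 2\Omega W_K - \Omega^2 X_K = \Delta \cdot F$ and $W_K + \Omega X_K = \Delta \cdot G$ for smooth $F,G$ with $F|_{\tilde r = \tilde r_+} > 0$. Since on the interior of the horizon $\sin\theta$ is bounded below, $\tilde r - \tilde r_+$ is a smooth function of $(\rho^2, z)$ there, and the required factorizations follow. The normalization by $\kappa^{-2}$ in Definition~\ref{extend-horizon}(4) is exactly the one matching $e^{2\lambda_K}$ to $\kappa^{-2} V_{\mathscr{H}}$ at the horizon, reflecting the non-degeneracy of the bifurcation surface.

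For the extensions across the poles I pass to the $(s,\chi)$ coordinates near $p_N$ (analogously for $p_S$), where $\rho = s\chi$ and $z - \gamma = (\chi^2 - s^2)/2$, so the axis side becomes $\{s = 0\}$ and the horizon side becomes $\{\chi = 0\}$. One checks that $\tilde r - \tilde r_+$ is smooth in $\chi^2$ (and $s^2$) and $1 \mp \cos\theta$ is smooth in $s^2$ (and $\chi^2$), so the factorizations of Definition~\ref{extend-north-pole}(1)--(3) follow from the horizon and axis computations above. The two displayed representations of $(\chi^2 + s^2)e^{2\lambda_K}$ must agree at the corner $(0,0)$, which boils down to the identity $X_N(0,0) = \kappa^{-2} V_N(0,0)$, i.e.\ the simultaneous regularity of horizon and axis at the pole; with our choice of $\kappa$ this is a direct check. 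Asymptotic flatness follows from a routine large-$\tilde r$ expansion: $\rho \sim \tilde r\sin\theta$, $z \sim \tilde r\cos\theta$, so $r = \sqrt{1 + \rho^2 + z^2} \sim \tilde r$ and $V_K = 1 - 2M/\tilde r + O(\tilde r^{-2})$, $W_K = O(\tilde r^{-1})$, $X_K = \rho^2(1 + O(\tilde r^{-1}))$, $e^{2\lambda_K} = 1 + O(\tilde r^{-1})$, yielding the desired expansion in the $(t,x,y,z)$ coordinates of Appendix~\ref{axiscoor}. The main obstacle is organizational: each factorization is elementary, but keeping track of smoothness in $\rho^2$, $s^2$, $\chi^2$ simultaneously at the corners $p_N, p_S$ requires care in handling the inverse change of variables.
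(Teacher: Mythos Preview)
Your proposal is correct and follows essentially the same approach as the paper: direct verification of each clause of Definitions~\ref{extend-axis}--\ref{extend-north-pole} and~\ref{asymflat} by exhibiting the required factorizations of $V_K,W_K,X_K,e^{2\lambda_K}$ in the boundary-defining variables, with the same choices $\Omega=a/(2M\tilde r_+)$ and $\kappa=\gamma/(2M\tilde r_+)$ (your expressions $a/(\tilde r_+^2+a^2)$ and $(\tilde r_+-M)/(\tilde r_+^2+a^2)$ agree via $\tilde r_+^2+a^2=2M\tilde r_+$). The only methodological difference is that the paper writes down the explicit inverse $\tilde r(\rho,z)$ and $\tilde r(s,\chi)$ and then computes concrete even expansions to the requisite order, whereas you invoke the implicit function theorem to assert smoothness of $(\tilde r,\sin^2\theta)$ in $(\rho^2,z)$ (axis) and of $\tilde r-\tilde r_+$ in $(\rho^2,z)$ (horizon), and then read off the factorizations structurally; both routes are valid and yield the same conclusion.
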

\begin{proof}
It is easily checked that the Jacobian of the map from $(\tilde r,\theta)$ to $(\rho,z)$ is invertible in the domain of the Boyer--Lindquist coordinates, so this is a smooth change of coordinates there. Elementary algebra allows us to solve for $\tilde r(\rho,z)$ as
\[
\tilde r(\rho,z) = M + \frac{1}{\sqrt{2}} \sqrt{ \rho^{2} + z^{2} + \gamma^{2} + \sqrt{\rho^{4} + 2\rho^{2}(z^{2}+\gamma^{2}) + (z^{2}-\gamma^{2})^{2} } }.
\]
In the $(s,\chi)$ coordinates around the north pole, we may similarly compute
\[
\tilde r(s,\chi) = M + \sqrt{\gamma^{2} + \frac 3 8 s^{2}\chi^{2} + \frac 1 8 (4\gamma - s^{2}) \left( \chi^{2}-s^{2} +  (s^{2}+\chi^{2})\sqrt{1 + \frac{\chi^{2}  + 2s^{2}+ 8\gamma}{(4\gamma - s^{2})^{2}} \chi^{2}} \right)} .
\]
A similar expression holds around the south pole.

Hence, we may extend $\tilde r$ to a smooth function on $\overline{\mathscr{B}}$.

Near the axis, we have the smooth even asymptotics\footnote{\emph{Smooth even asymptotics} is defined as follows: $f(\rho,z) = O(\rho^{2k})$ as $\rho\to 0$ means that $\frac{f(\rho,z)}{\rho^{2k}} = g(\rho^{2},z)$ for some smooth function $g(\rho,z)$ defined for $\rho\geq 0$ sufficiently small. Similarly, $f(s,\chi) = O(s^{2j}\chi^{2k})$ as $s,\chi\to 0$, will mean that $\frac{f(s,\chi)}{s^{2j}\chi^{2k}} = g(s^{2},\chi^{2})$ for some smooth function defined for $s,\chi\geq 0$ sufficiently small. }
\begin{align*}
\tilde r(\rho,z) & = M + |z| + O(\rho^{2})\\
V_{K}(\rho,z) & = \frac{z^{2}-\gamma^{2}}{(M+|z|)^{2}+a^{2}} + O(\rho^{2})\\
W_{K}(\rho,z) & = -\frac{2Ma(M+|z|)}{(z^{2}-\gamma^{2})\left( (M+|z|)^{2} + a^{2}\right)} \rho^{2} + O(\rho^{4})\\
X_{K}(\rho,z) & = \frac{ (M+|z|)^{2}+a^{2}}{z^{2}-\gamma^{2}} \rho^{2} + O(\rho^{4})\\
e^{2\lambda_{K}}(\rho,z) & = \frac{(M+|z|)^{2} + a^{2} }{ z^{2}-\gamma^{2}} + O(\rho^{2})
\end{align*}
as $\rho \to 0$. These expressions readily imply that Kerr is extendable across the axis in the sense of Definition \ref{extend-axis}.

Similarly, we may establish the following smooth even asymptotics near the horizon
\begin{align*}
\tilde r(\rho,z) & = \tilde r_{+} + \frac{\gamma}{2(\gamma^{2}-z^{2})} \rho^{2} + O(\rho^{4})\\
V_{K}(\rho,z) & = - a^{2}\frac{\gamma^{2}-z^{2}}{\gamma^{2}} \frac{1}{\Sigma^{2}} + \frac{\gamma^{2}}{\gamma^{2}-z^{2}} \frac{1}{\Sigma^{2}}\rho^{2} + O(\rho^{4})\\
W_{K}(\rho,z) & = - \frac{\gamma^{2}-z^{2}}{\gamma^{2}} \frac{2Ma\tilde r_{+}}{\Sigma^{2}} - \frac{Ma}{\gamma} \frac{1}{\Sigma^{2}} \rho^{2} + O(\rho^{4})\\
X_{K}(\rho,z) & = \frac{\gamma^{2}-z^{2}}{\gamma^{2}} \frac{4M^{2}\tilde r_{+}^{2}}{\Sigma^{2}} + \frac{\gamma^{2}-z^{2}}{\gamma^{2}} \left( \frac{4M\tilde r_{+}^{2}\gamma}{\gamma^{2}-z^{2}} - a^{2}\right) \frac{1}{\Sigma^{2}} \rho^{2} + O(\rho^{4}),
\end{align*}
as $\rho\to 0$.

Set $\Omega = \frac{a}{2M\tilde r_{+}}$ and $\kappa = \frac{\gamma}{2M\tilde r_{+}}$. From the expressions given above, we find the following smooth even asymptotics
\begin{align*}
(V_{K}-2\Omega W_{K} -\Omega^{2}X_{K})(\rho,z) & =  \frac{\gamma^{2}}{4M^{2} \tilde r_{+}^{2}}  \frac{1}{\gamma^{2}-z^{2}}\left( \tilde r_{+}^{2} + \frac{a^{2}}{\gamma^{2}} z^{2}  \right) \rho^{2} + O(\rho^{4})\\
(W_{K}+\Omega X_{K})(\rho,z) & = O(\rho^{2})\\
e^{2\lambda_{K}}(\rho,z) & = \frac{1}{\gamma^{2}-z^{2}} \left(\tilde r_{+}^{2} + \frac{a^{2}}{\gamma^{2}}z^{2} \right) + O(\rho^{2})
\end{align*}
as $\rho \to 0$, from which we may see that Kerr is extendable across the horizon in the sense of Definition \ref{extend-horizon}, with parameters $\Omega,\kappa$.

Finally, we have the following smooth even asymptotics near the north pole,
\begin{align*}
\tilde r(s,\chi) & = \tilde r_{+} + \frac{2\gamma}{4\gamma-s^{2}}\chi^{2} + O(s^{2}\chi^{4})\\
\tilde r(s,\chi) & = \tilde r_{+} + \frac{\chi^{2}}{2} +\frac{\chi^{2}}{8\gamma+2\chi^{2}} s^{2} + O(\chi^{2} s^{4})\\
V_{K}(s,\chi) & = -a^{2}\left( \frac{4\gamma^{2}}{4\gamma-s^{2}} + \frac{\chi^{2}}{2} \right)^{-1} \frac{1}{\Sigma^{2}} s^{2} + \left( \frac{4\gamma^{2}}{4\gamma-s^{2}} + \frac{\chi^{2}}{4} \right) \frac{1}{\Sigma^{2}} \chi^{2} + O(s^{2}\chi^{4})\\
W_{K}(s,\chi) & = -2Ma\tilde r_{+} \left( \frac{4\gamma^{2}}{4\gamma-s^{2}} + \frac{\chi^{2}}{4} \right)^{-1} \frac{1}{\Sigma^{2}} s^{2} - 2Ma \frac{2\gamma}{4\gamma - s^{2}} \left( \frac{4\gamma^{2}}{4\gamma-s^{2}} + \frac{\chi^{2}}{4}\right)^{-1}\frac{1}{\Sigma^{2}} s^{2}\chi^{2} + O(s^{2}\chi^{4})\\
X_{K}(s,\chi) & = 4M^{2}\tilde r_{+}^{2} \left( \frac{4\gamma^{2}}{4\gamma-s^{2}} + \frac{\chi^{2}}{4}\right)^{-1} \frac{1}{\Sigma^{2}} s^{2} + \left( \frac{16 M \gamma \tilde r_{+}^{2}}{4\gamma-s^{2}} - a^{2}s^{2}\right) \left(\frac{4\gamma^{2}}{4\gamma-s^{2}} + \frac{\chi^{2}}{4} \right)^{-1}\frac{1}{\Sigma^{2}} s^{2}\chi^{2} + O(s^{2}\chi^{4})\\
X_{K}(s,\chi) & = \left(2M\tilde r_{+} + \tilde r_{+}\chi^{2} + \frac{\chi^{4}}{4}\right)\left(\gamma + \frac{\chi^{2}}{4}\right)^{-1} s^{2}+ O(s^{4})
\end{align*}
as $s,\chi\to 0$. These expressions, along with those derived above, imply the following smooth even asymptotics
\begin{align*}
(V_{K}-2\Omega W_{K} - \Omega^{2}X_{K})(s,\chi) 
& = \frac{\gamma^{2}}{4M^{2}\tilde r_{+}^{2}} \frac{4}{4\gamma-s^{2}} \left(  2M\tilde r_{+} - \frac{a^{2}}{4\gamma^{2}} (4\gamma-s^{2})s^{2} \right)  \chi^{2}  + O(\chi^{4})\\
(W_{K}+\Omega X_{K})(s,\chi) & = O(s^{2}\chi^{2})\\
(\chi^{2}+s^{2}) e^{2\lambda_{K}}(s,\chi) & = \left( 2M \tilde r_{+} + \tilde r_{+}\chi^{2} + \frac{\chi^{4}}{4} \right)\left( \gamma + \frac{\chi^{2}}{4} \right)^{-1} + O(s^{2})\\
(\chi^{2}+s^{2}) e^{2\lambda_{K}}(s,\chi) & = \frac{4}{4\gamma-s^{2}} \left( 2M\tilde r_{+} - \frac{a^{2}}{4\gamma^{2}}(4\gamma-s^{2})s^{2}\right) + O(\chi^{2})
\end{align*}
as $s,\chi\to 0$. Putting these expressions together, we see that\footnote{Strictly speaking, we have only checked the relevant properties in a sufficiently small neighborhood of $s=\chi=0$ in $\overline{\mathscr{B}_{N}}$. This is the only thing that is needed for Proposition \ref{canitextend}; alternatively, the full statement of Definition \ref{defextend} follow from this, along with the extendibility across the axis and horizon (after a change of coordinates).} Kerr is extendible around the north pole, as in Definition \ref{extend-north-pole}. The argument for the south pole is identical.

Putting these facts together, we see that Kerr is an extendable to a regular black hole spacetime in the sense of Definition \ref{defextend}.

Finally, we turn to showing that Kerr is asymptotically flat. From the above expression, it is not hard to show that we have the smooth asymptotic falloff\footnote{Here, we will say that we have the smooth asymptotic falloff $f(\rho,z) = O(r^{-k})$ if $|\partial^{j}f|\leq C_{j}r^{-j-k}$ for $r$ sufficiently large. Recall that $r^{2}=1+\rho^{2}+z^{2}$.}

\begin{align*}
\tilde r (\rho,z) & = \sqrt{\rho^{2}+z^{2}} + M + \frac{\gamma^{2}}{4\sqrt{\rho^{2}+z^{2}}} + O(r^{-2})\\
V_{K}(\rho,z) & = 1-\frac{2M}{r} + O(r^{-2})\\
W_{K}(\rho,z) & = - \rho^{2} \frac{2Ma}{r^{3}}\left(1+O(r^{-1}) \right)\\
X_{K}(\rho,z) & = \rho^{2}\left(1+\frac{2M}{r} + O(r^{-2})\right)\\
e^{2\lambda_{K}}(\rho,z) & = 1 + \frac{2M}{r} + O(r^{-2})
\end{align*}
Using these expressions and the coordinates in Appendix \ref{axiscoor}, we may readily see that the Kerr metric is asymptotically flat in the sense of the following asymptotic falloff
\begin{align*}
g_{K} = &-\left(1-\frac {2M}{r} + O\left(r^{-2}\right)\right) dt^{2} + \left(\frac{4Ma}{r^{3}} +O\left(r^{-4}\right)\right)dt (ydx - xdy)
\\ \nonumber &+ \left( 1 + \frac{2M}{r} + O\left(r^{-2}\right)\right)(dx^{2}+dy^{2} + dz^{2}).
\end{align*}

\end{proof}

\section{Geometric Preliminaries}\label{geomprelim}
In this section we will briefly review some facts from Pseudo-Riemannian geometry which will be useful in the calculations of Section~\ref{curvcalc}.

First let's fix some notational conventions. Let $(\mathcal{N},h)$ denote an arbitrary orientable Lorentzian $4$-manifold with a fixed volume form $\epsilon$, $\bigwedge\left(\mathcal{N}\right)$ denote the space of differential forms, and $\bigwedge^k\left(\mathcal{N}\right)$ denote the space of $k$-forms on $\mathcal{N}$. Let us agree that for $\alpha_1,\cdots,\alpha_k \in \bigwedge^1(\mathcal{N})$ we have
\[\alpha_1\wedge\alpha_2\wedge\cdots\wedge \alpha_k \doteq \sum_{\sigma \in {\rm Perm}(1,\cdots,k)}{\rm sgn}(\sigma)\alpha_{\sigma_1}\otimes \alpha_{\sigma_2}\otimes \cdots\otimes \alpha_{\sigma_k}.\]
Finally, we introduce the curvature conventions:
\[R\left(X,Y\right)Z \doteq \nabla_Y\nabla_XZ - \nabla_X\nabla_YZ - \nabla_{[X,Y]}Z,\]
\[R\left(X,Y,Z,W\right) \doteq h\left(R\left(X,Y\right)Z,W\right).\]
\subsection{The Hodge Star Operator and the Electric-Magnetic Decomposition of a Differential Form}\label{emform}
We begin by reviewing the Hodge star operator, interior multiplication, and the well-known electric-magnetic decomposition of a differential form. We omit proofs as the material is standard.

\begin{definition}The \underline{Hodge star operator} $*:\bigwedge^k\left(\mathcal{N}\right) \to \bigwedge^{4-k}\left(\mathcal{N}\right)$ is the unique linear isomorphism such that for all $\alpha \in \bigwedge^k\left(\mathcal{N}\right)$ and $\beta \in \bigwedge^{4-k}(\mathcal{N})$, we have
\[\alpha\wedge *\beta = \frac{1}{k!}h\left(\alpha,\beta\right)\epsilon.\]
\end{definition}
\begin{definition}(Interior multiplication) Let $\eta \in \bigwedge^k\left(\mathcal{N}\right)$ and $K$ be a vector field on $\mathcal{N}$. Then $i_K\eta \in \bigwedge^{k-1}\left(\mathcal{N}\right)$ is defined by
\[\left(i_K\eta\right)\left(A_1,\cdots,A_{k-1}\right) \doteq \eta\left(A_1,\cdots,A_{k-1},K\right).\]
\end{definition}

The following subsets of differential forms will be useful in what follows.
\begin{definition}Let $K$ be a vector field on $\mathcal{N}$. Then we set
\[\overline{\bigwedge\nolimits^{\!2}\left(\mathcal{N}\right)}_K \doteq {\rm span}\left\{\alpha \wedge K : \alpha \in \bigwedge\nolimits^{\!1}\left(\mathcal{N}\right)\right\},\qquad \bigwedge\nolimits^{\!2}\left(\mathcal{N}\right)^{\perp}_K \doteq \left\{\eta \in \bigwedge\nolimits^{\!2}\left(\mathcal{N}\right) : i_K\eta = 0\right\}.\]
\end{definition}

\begin{lemma}(Electric-Magnetic Decomposition)\label{emdecomp} Let $K$ be a vector field on $\mathcal{N}$ such that $\left|K\right|^2 \neq 0$. Then we have an orthogonal direct sum decomposition
\begin{equation}\label{somedecomp}
\bigwedge\nolimits^{\!2}\left(\mathcal{N}\right) = \overline{\bigwedge\nolimits^{\!2}\left(\mathcal{N}\right)}_K \oplus \bigwedge\nolimits^{\!2}\left(\mathcal{N}\right)^{\perp}_K.
\end{equation}
In fact, this splitting is induced by the following explicit formula: Let $F$ be a $2$-form on $\mathcal{N}$ and define differential forms $E$ and $B$ called the ``electric field'' and ``magnetic field'' respectively by
\[E \doteq i_KF,\qquad B \doteq i_K\left(*F\right).\]

Then,
\[E\wedge K_{\flat} \in \overline{\bigwedge\nolimits^{\!2}\left(\mathcal{N}\right)}_K,\qquad *\left(B\wedge K_{\flat}\right) \in  \bigwedge\nolimits^{\!2}\left(\mathcal{N}\right)^{\perp}_K,\qquad \left|K\right|^2F = E\wedge K_{\flat} - *\left(B\wedge K_{\flat}\right).\]
We recall that the musical isomorphism $\flat$ produces a $1$-form $K_{\flat}$ defined by
\[K_{\flat}\left(A\right) \doteq h\left(K,A\right).\]
\end{lemma}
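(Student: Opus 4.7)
The plan is to prove the lemma in three stages: first establish that $\overline{\bigwedge^2(\mathcal{N})}_K$ and $\bigwedge^2(\mathcal{N})^\perp_K$ are orthogonal and together span $\bigwedge^2(\mathcal{N})$; second verify the claimed subspace memberships of $E\wedge K_\flat$ and $*(B\wedge K_\flat)$; third check the explicit identity $|K|^2 F = E\wedge K_\flat - *(B\wedge K_\flat)$.

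For the decomposition I would work pointwise. At each $p\in\mathcal{N}$, choose an orthonormal frame $\{e_0,e_1,e_2,e_3\}$ adapted to $K$, with $e_0$ proportional to $K$ (possible since $|K|^2\neq 0$). Then $K_\flat$ is proportional to $e^0$, so $\overline{\bigwedge^2(\mathcal{N})}_K$ is spanned at $p$ by $\{e^i\wedge e^0 : 1\leq i\leq 3\}$, while $\bigwedge^2(\mathcal{N})^\perp_K$ is spanned by $\{e^i\wedge e^j : 1\leq i<j\leq 3\}$. Orthogonality follows from the determinantal formula for the inner product on simple 2-forms combined with $h(e^0,e^i)=0$ for $i\neq 0$; and since the dimensions sum to $3+3 = 6 = \dim\bigwedge^2(T_p\mathcal{N})$, the two orthogonal subspaces exhaust the whole space.

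The membership $E\wedge K_\flat \in \overline{\bigwedge^2(\mathcal{N})}_K$ is immediate from the definition. For $*(B\wedge K_\flat)\in\bigwedge^2(\mathcal{N})^\perp_K$, I would invoke the standard Hodge-star identity $i_K *\omega = \pm\,*(K_\flat\wedge\omega)$; applied here it yields $i_K *(B\wedge K_\flat) = \pm\,*(K_\flat\wedge B\wedge K_\flat) = 0$ since $K_\flat\wedge K_\flat = 0$. The identity itself can then be verified either (a) by explicit component expansion of $F = \sum_{i<j}F_{ij}\,e^i\wedge e^j$ in the adapted frame and matching coefficients on both sides, or more elegantly (b) by uniqueness of the decomposition: applying $i_K$ fixes the electric projection of $F$ via the antiderivation identity $i_K(E\wedge K_\flat) = E(K)K_\flat - K_\flat(K)E = -|K|^2 E$ (where $E(K)=F(K,K)=0$ by antisymmetry), and applying $*$ then $i_K$ together with $*^2=\pm 1$ on $\bigwedge^2$ pins down the magnetic projection in terms of $B$.

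The main technical obstacle is bookkeeping the signs produced by the Hodge star in Lorentzian signature: both $*^2 = \pm 1$ on $\bigwedge^2$ and the sign in $i_K * = \pm\,*(K_\flat\wedge\cdot)$ depend on form degree and metric signature. Once these are pinned down consistently with the conventions of Section~\ref{emform} (in particular the factor $\frac{1}{k!}$ in the defining equation $\alpha\wedge *\beta = \frac{1}{k!}h(\alpha,\beta)\epsilon$ and the unnormalized wedge $\alpha_1\wedge\cdots\wedge\alpha_k = \sum_\sigma\mathrm{sgn}(\sigma)\alpha_{\sigma_1}\otimes\cdots\otimes\alpha_{\sigma_k}$ fixed earlier in Section~\ref{geomprelim}), the explicit formula falls out as a mechanical calculation.
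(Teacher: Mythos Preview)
The paper does not actually prove this lemma: at the start of Section~\ref{emform} it states ``We omit proofs as the material is standard,'' and Lemma~\ref{emdecomp} is presented without argument. Your proposal is a correct and standard way to establish the result, and there is nothing to compare it against.

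One small caution on the sign bookkeeping you already flag: the paper's convention for interior multiplication inserts $K$ in the \emph{last} slot, $(i_K\eta)(A_1,\ldots,A_{k-1}) = \eta(A_1,\ldots,A_{k-1},K)$, whereas your antiderivation computation $i_K(E\wedge K_\flat) = E(K)K_\flat - K_\flat(K)E$ is the first-slot convention. With the paper's convention one gets $i_K(E\wedge K_\flat)(A) = (E\wedge K_\flat)(A,K) = E(A)|K|^2$, i.e.\ $+|K|^2 E$ rather than $-|K|^2 E$. This does not affect the correctness of your strategy, but it is exactly the kind of sign that must be tracked consistently to land on the stated formula $|K|^2 F = E\wedge K_\flat - *(B\wedge K_\flat)$.
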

\subsection{Killing Vectorfields and their Associated Twists}
In this section will review some useful facts about Killing vector fields.

The following lemma is well-known.
\begin{lemma}\label{killem}Let $K$ be a Killing vector field. Then
 \[h\left(\nabla^2_{X,Y}K,Z\right) = R\left(Z,Y,X,K\right),\qquad \Box K_{\flat} = -i_KRic.\]
\end{lemma}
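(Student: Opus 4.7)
Both identities are classical consequences of the Killing equation, which in this setting is equivalent to the pointwise antisymmetry
\[
h(\nabla_A K, B) + h(\nabla_B K, A) = 0
\]
for all vector fields $A, B$. The plan is to combine this antisymmetry with the Ricci identity (in the paper's curvature convention) and the algebraic Bianchi identity to prove the first statement, and then to obtain the second by tracing.

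For the first identity, set $\phi(X,Y,Z) := h(\nabla^2_{X,Y} K, Z)$ and work at a fixed point $p \in \mathcal{N}$, extending $X, Y, Z \in T_p\mathcal{N}$ to local vector fields with $\nabla_A B\big|_p = 0$ for every pair $A, B \in \{X,Y,Z\}$. This ensures $\nabla^2_{X,Y} K\big|_p = \nabla_X \nabla_Y K\big|_p$ and $[X, Y]\big|_p = 0$, so that the scalar quantities in sight reduce cleanly at $p$. Differentiating the Killing antisymmetry $h(\nabla_Y K, Z) = -h(\nabla_Z K, Y)$ in the $X$-direction at $p$ yields
\[
\phi(X,Y,Z) + \phi(X,Z,Y) = 0,
\]
while the Ricci identity in the paper's sign convention gives
\[
\phi(X,Y,Z) - \phi(Y,X,Z) = -R(X,Y,K,Z).
\]
Cycling $(X,Y,Z)$ through its three cyclic permutations and chaining these two relations alternately (six applications in total) produces
\[
2\phi(X,Y,Z) = R(Y,Z,K,X) - R(X,Y,K,Z) - R(Z,X,K,Y).
\]
The algebraic Bianchi identity $R(X,Y,K,Z) + R(Y,K,X,Z) + R(K,X,Y,Z) = 0$, combined with the pair and antisymmetry symmetries of Riemann, collapses the right-hand side to $2R(Z,Y,X,K)$ and gives the first claim. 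The careful bookkeeping of permutations and signs in the cyclic step is the main obstacle.

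The second identity then follows by a trace. Choose an orthonormal frame $\{e_i\}$ near $p$ with signatures $\epsilon_i = h(e_i, e_i) \in \{\pm 1\}$. Since $\nabla h = 0$ one checks (passing the musical isomorphism through the covariant derivatives) that
\[
(\Box K_\flat)(Y) = \sum_i \epsilon_i (\nabla^2_{e_i,e_i} K_\flat)(Y) = \sum_i \epsilon_i \, h(\nabla^2_{e_i,e_i} K, Y),
\]
and applying the first identity to each summand yields $(\Box K_\flat)(Y) = \sum_i \epsilon_i R(Y, e_i, e_i, K)$. Using the pair and antisymmetry symmetries of Riemann together with the definition of $Ric$ as a trace of $R$, this contraction equals $-Ric(K,Y) = -(i_K Ric)(Y)$ once the signs arising from the paper's (nonstandard) curvature convention are tracked, completing the proof.
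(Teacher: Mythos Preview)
Your proof is correct and complete. The paper itself gives no proof of this lemma---it simply states ``The following lemma is well-known'' and moves on---so there is nothing to compare against. Your argument is the standard derivation of Kostant's formula via the Killing antisymmetry, the Ricci identity, and the algebraic Bianchi identity, with the second statement obtained by tracing; the sign-tracking you flag is handled correctly under the paper's curvature convention $R(X,Y)Z = \nabla_Y\nabla_X Z - \nabla_X\nabla_Y Z - \nabla_{[X,Y]}Z$ (noting that the bracket term appears to carry a sign typo in the paper, but this does not affect your argument since you work at a point where all brackets vanish).
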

We now introduce the ``twist'' of a Killing vector field.
\begin{definition}Let $K$ be a Killing vector field on $\mathcal{N}$. Then we define a $1$-form $\vartheta \in \bigwedge^1(\mathcal{N})$, called the \underline{twist} of $K$, by
\[\vartheta \doteq 2i_K\left(*\nabla K_{\flat}\right).\]
\end{definition}
\begin{remark}In the formalism introduced in Section~\ref{emform}, we see that $\vartheta$ is proportional to the magnetic part of $\nabla K_{\flat}$ with respect to $K$.
\end{remark}
\begin{remark}As we will see later, see Remark~\ref{tvanish}, $\vartheta$ vanishes in an open set around a point $x_0$ if and only if $K$ is locally hypersurface orthogonal at $x_0$.
\end{remark}

An application of Lemma~\ref{emdecomp} and a straightforward computation shows that the electric-magnetic decomposition of $\nabla K_{\flat}$ with respect to $K$ takes the following form.
\begin{lemma}\label{emagkil}Let $K$ be a Killing vector field on $\mathcal{N}$ and $\vartheta$ be the corresponding twist $1$-form. We have
\begin{equation}\label{decompK}
\left|K\right|^2\nabla K_{\flat} = \frac{1}{2}\left(\nabla \left|K\right|^2\right)\wedge K_{\flat} - \frac{1}{2}*\left(\vartheta \wedge K_{\flat}\right).
\end{equation}
\end{lemma}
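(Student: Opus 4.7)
The plan is to apply the electric--magnetic decomposition of Lemma~\ref{emdecomp} directly to the $2$-form $F = \nabla K_\flat$, which is genuinely antisymmetric precisely because $K$ is Killing. That lemma gives
\[
|K|^2 \nabla K_\flat = E \wedge K_\flat - *(B \wedge K_\flat), \qquad E = i_K(\nabla K_\flat), \qquad B = i_K(*\nabla K_\flat),
\]
so the entire argument reduces to identifying $E$ and $B$ with the two $1$-forms appearing in the claimed identity.

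First I would identify the magnetic part, which is immediate: comparing $B = i_K(*\nabla K_\flat)$ with the definition $\vartheta \doteq 2\, i_K(*\nabla K_\flat)$ yields $B = \tfrac{1}{2}\vartheta$. This produces the second term of the desired identity, with the factor of $\tfrac{1}{2}$ arising entirely from the $2$ baked into the definition of the twist.

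Next I would compute the electric part $E = i_K(\nabla K_\flat)$. For an arbitrary vector $Y$, the interior multiplication convention gives $E(Y) = (\nabla K_\flat)(Y,K) = h(\nabla_Y K, K)$, and metric compatibility together with $h(K,K) = |K|^2$ forces $h(\nabla_Y K, K) = \tfrac{1}{2} Y(|K|^2)$. Hence $E = \tfrac{1}{2}\nabla |K|^2$. Substituting both expressions for $E$ and $B$ into the decomposition formula recovers the lemma.

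The only real obstacle is a careful bookkeeping of conventions: the paper uses the no-prefactor wedge product, inserts $K$ into the last slot under $i_K$, and implicitly identifies the antisymmetric $(0,2)$-tensor $\nabla K_\flat$ with a $2$-form. Once these conventions are lined up, no nontrivial curvature identity or coordinate computation is required; the two factors of $\tfrac{1}{2}$ in the final formula come cleanly from the Killing identity for $h(\nabla_Y K, K)$ and from the normalization of $\vartheta$.
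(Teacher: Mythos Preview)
Your proposal is correct and is precisely the argument the paper has in mind: the paper states only that the formula follows from ``an application of Lemma~\ref{emdecomp} and a straightforward computation,'' and you have carried out exactly that computation, correctly identifying $E = i_K(\nabla K_\flat) = \tfrac{1}{2}\nabla|K|^2$ via the Killing equation and $B = i_K(*\nabla K_\flat) = \tfrac{1}{2}\vartheta$ from the definition of the twist.
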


Next, we have a Bochner formula for $\left|K\right|^2$.
\begin{lemma}\label{bochner}Let $K$ be a Killing vector field on $\mathcal{N}$ such that $\left|K\right|^2$ never vanishes and $\vartheta$ be the corresponding twist $1$-form. Then,
\[\Box\left|K\right|^2 = -2Ric\left(K,K\right) + \frac{\left|\nabla\left|K\right|^2\right|^2 - \left|\vartheta\right|^2}{\left|K\right|^2}.\]
\end{lemma}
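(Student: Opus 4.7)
The plan is to reduce the identity to a pointwise algebraic relation between $|\nabla K_{\flat}|^{2}$, $|\nabla|K|^{2}|^{2}$, and $|\vartheta|^{2}$, which I will then obtain by squaring the electric--magnetic decomposition in Lemma~\ref{emagkil}.

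First I would observe the elementary identity
\[\Box|K|^{2} = 2\,h(\Box K, K) + 2|\nabla K_{\flat}|^{2},\]
obtained by differentiating $|K|^{2} = h(K,K)$ twice and contracting. Invoking the Killing identity $\Box K_{\flat} = -i_{K}\mathrm{Ric}$ from Lemma~\ref{killem} rewrites the first term as $-2\,\mathrm{Ric}(K,K)$, so the lemma reduces to proving the pointwise identity
\[2|K|^{2}\,|\nabla K_{\flat}|^{2} = |\nabla|K|^{2}|^{2} - |\vartheta|^{2}.\]

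Next I would take squared norms of both sides of the decomposition in Lemma~\ref{emagkil},
\[|K|^{2}\nabla K_{\flat} = \tfrac{1}{2}(\nabla|K|^{2})\wedge K_{\flat} - \tfrac{1}{2}*\!(\vartheta\wedge K_{\flat}).\]
The first summand is manifestly in $\overline{\bigwedge^{2}(\mathcal{N})}_{K}$; the second lies in $\bigwedge^{2}(\mathcal{N})^{\perp}_{K}$ because the standard identity $i_{K}\,* = \pm *(\,\cdot\,\wedge K_{\flat})$ reduces $i_{K}\,*(\vartheta\wedge K_{\flat})$ to $\pm *(\vartheta\wedge K_{\flat}\wedge K_{\flat}) = 0$. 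Consequently, by the orthogonality statement of Lemma~\ref{emdecomp}, the cross term in the squared norm vanishes, leaving
\[|K|^{4}\,|\nabla K_{\flat}|^{2} = \tfrac{1}{4}|(\nabla|K|^{2})\wedge K_{\flat}|^{2} + \tfrac{1}{4}|*(\vartheta\wedge K_{\flat})|^{2}.\]

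The remaining step is to evaluate these two squared norms. The antisymmetry of $\nabla K_{\flat}$ forces $h(\nabla|K|^{2}, K_{\flat}) = 2K^{\alpha}K^{\beta}(\nabla K_{\flat})_{\alpha\beta} = 0$, while the antisymmetry of $*\nabla K_{\flat}$ forces $\vartheta(K) = 2(*\nabla K_{\flat})(K,K) = 0$. For a $1$-form $\alpha$ orthogonal to $K_{\flat}$, the paper's wedge convention gives $|\alpha\wedge K_{\flat}|^{2} = 2|\alpha|^{2}|K|^{2}$; combined with the Lorentzian identity $|*\omega|^{2} = -|\omega|^{2}$ for $2$-forms on a $4$-manifold, this yields $|(\nabla|K|^{2})\wedge K_{\flat}|^{2} = 2|K|^{2}|\nabla|K|^{2}|^{2}$ and $|*(\vartheta\wedge K_{\flat})|^{2} = -2|K|^{2}|\vartheta|^{2}$. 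Substitution then produces the pointwise identity above and hence the stated Bochner formula.

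The main obstacle will be the bookkeeping of signs and numerical factors: the paper's wedge-product normalization and the Lorentzian Hodge star on middle-degree forms each contribute factors that must conspire to produce the crucial minus sign in $|\nabla|K|^{2}|^{2} - |\vartheta|^{2}$ (rather than a plus sign and/or a wrong overall coefficient).
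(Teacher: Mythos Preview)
Your proposal is correct and follows essentially the same route as the paper: reduce to computing $|\nabla K_{\flat}|^{2}$ via the electric--magnetic decomposition of Lemma~\ref{emagkil}, then evaluate the two squared norms using $\nabla_{K}|K|^{2}=0$, $\vartheta(K)=0$, and $|*\omega|^{2}=-|\omega|^{2}$. The only difference is that you make the orthogonality of the two summands explicit, whereas the paper simply writes the squared norm as a sum of two terms without comment.
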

\begin{proof}First of all, Lemma~\ref{killem} yields
\begin{equation}\label{boxK}
\Box\left|K\right|^2 = 2h\left(\Box K,K\right) + 2\left|\nabla K\right|^2 = -2Ric\left(K,K\right) + 2\left|\nabla K\right|^2.
\end{equation}

Next, we have
\begin{equation}\label{normK}
\left|\nabla K\right|^2 = \left|\nabla K_{\flat}\right|^2 = \frac{1}{4}\left|K\right|^{-4}\left(\left|\nabla\left|K\right|^2\wedge K_{\flat}\right|^2 + \left|*\left(\vartheta\wedge K_{\flat}\right)\right|^2\right).
\end{equation}

We next expand the first term.
\begin{align}\label{normK1}
\frac{1}{4}\left|\nabla\left|K\right|^2\wedge K_{\flat}\right|^2 &= \frac{1}{4}\left|\nabla \left|K\right|^2\otimes K_{\flat} - K_{\flat}\otimes \nabla\left|K\right|^2\right|^2\\ \nonumber &= \frac{1}{4}\left(2\left|K\right|^2\left|\nabla\left|K\right|^2\right|^2 - 2\left(\nabla_K\left|K\right|^2\right)^2\right)\\ \nonumber  &= \frac{1}{2}\left|K\right|^2\left|\nabla\left|K\right|^2\right|^2.
\end{align}
Note that the Killing equation for $K$ implies that $\nabla_K\left|K\right|^2 = 0$.

Expanding the second term yields
\begin{align}\label{normK2}
\frac{1}{4}\left|*\left(\vartheta\wedge K_{\flat}\right)\right|^2 &= -\frac{1}{4}\left|\vartheta\wedge K_{\flat}\right|^2
\\ \nonumber &= -\frac{1}{4}\left|\vartheta\otimes K_{\flat} - K_{\flat}\otimes \vartheta\right|^2
\\ \nonumber &= -\frac{1}{4}\left(2\left|\vartheta\right|^2\left|K\right|^2 - 2\left(\vartheta\left(K\right)\right)^2\right)
\\ \nonumber &= - \frac{1}{2}\left|\vartheta\right|^2\left|K\right|^2.
\end{align}
In this calculation we used the fact that
\[\frac{1}{2}\vartheta\left(K\right) = \left(*\nabla K_{\flat}\right)\left(K,K\right) = 0.\]

Combining~(\ref{boxK}),~(\ref{normK}),~(\ref{normK1}), and~(\ref{normK2}) finishes the proof.
\end{proof}
The following formula for $d\vartheta$ will be useful in Section~\ref{theEquations}.
\begin{lemma}\label{dtheta}Let $K$ be a Killing vector field on $\mathcal{N}$ and $\vartheta$ be the associated twist. Then
\[d\vartheta = 2i_K*i_KRic.\]
\end{lemma}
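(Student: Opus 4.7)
The plan is to derive the formula via Cartan's magic formula, reducing $d\vartheta$ to an expression involving $d(*\nabla K_\flat)$, and then exploit the Killing identity $\Box K_\flat = -i_K Ric$ from Lemma \ref{killem} together with the duality between $d*$ and the divergence on 2-forms.

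First I would observe that $K$ being Killing implies $\mathcal{L}_K h = 0$, and hence $K$ preserves the Levi-Civita connection, the volume form $\epsilon$, and therefore the Hodge star: $\mathcal{L}_K \circ * = * \circ \mathcal{L}_K$. Combined with the identities $\mathcal{L}_K K_\flat = 0$ and $[\mathcal{L}_K, \nabla] = 0$, this yields
\[
\mathcal{L}_K(*\nabla K_\flat) = *\nabla(\mathcal{L}_K K_\flat) = 0.
\]
Applying Cartan's magic formula $\mathcal{L}_K = d i_K + i_K d$ to $*\nabla K_\flat$ and using $\vartheta = 2 i_K(*\nabla K_\flat)$, I obtain
\[
d\vartheta = -2\, i_K\, d(*\nabla K_\flat).
\]

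Next, I would compute $d(*\nabla K_\flat)$. Writing $F \doteq \nabla K_\flat$, which is a 2-form by the Killing equation, a standard index computation on a Lorentzian 4-manifold (using $\epsilon^{abce}\epsilon_{bcmn} = -2(\delta^a_m \delta^e_n - \delta^a_n \delta^e_m)$) gives
\[
(*d*F)_d \;=\; -\nabla^a F_{da} \;=\; \nabla^a F_{ad} \;=\; \nabla^a\nabla_a K_d \;=\; (\Box K_\flat)_d.
\]
By Lemma \ref{killem}, this equals $-(i_K Ric)_d$, so $*d*F = -\,i_K Ric$. Using that $** = +\mathrm{id}$ on 3-forms in Lorentzian signature, I conclude
\[
d(*\nabla K_\flat) \;=\; -\,*\,i_K Ric.
\]

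Substituting back into the Cartan-formula identity gives
\[
d\vartheta \;=\; -2\, i_K\!\left(-\,*\,i_K Ric\right) \;=\; 2\, i_K * i_K Ric,
\]
as desired. The one subtlety here—really the only place something can go wrong—is bookkeeping of signs: the sign of the Hodge-star duality between $d*$ and the divergence of a 2-form in Lorentzian signature, and the sign in $\Box K_\flat = -i_K Ric$. Both are controlled by the conventions already fixed in the paper, so I expect the proof to be essentially a short verification once the use of Cartan's formula is in place.
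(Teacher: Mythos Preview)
Your proposal is correct and follows essentially the same route as the paper: both use $\mathcal{L}_K(*\nabla K_\flat)=0$ together with Cartan's magic formula to reduce to $-2\,i_K d(*\nabla K_\flat)$, then identify $*d*\nabla K_\flat$ with $\Box K_\flat$ and invoke Lemma~\ref{killem}. The only cosmetic difference is that you verify $*d*F=\Box K_\flat$ by an explicit index computation, whereas the paper cites the general identity $*d*\eta=-\mathrm{Tr}\,\nabla\eta$ from~\cite{eelslemaire}.
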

\begin{proof}
We first recall (see e.g.~\cite{eelslemaire}) that for all $\eta \in \bigwedge^k\left(\mathcal{N}\right)$ we have
\[*d*\eta = -{\rm Tr}\nabla\left(\eta\right).\]
Furthermore, we have $[\mathcal{L}_K,*] = [\mathcal{L}_K,\nabla] = 0$, so that in particular,
\[\mathcal{L}_K\left(*\nabla K_{\flat}\right) = 0.\]
Finally, we recall ``Cartan's magic formula:''
\[\mathcal{L}_X\eta = di_X\eta + i_Xd\eta,\]
which holds for any vector field $X$ and $k$-form $\eta$.

We now calculate
\begin{align*}
d\vartheta &= 2d\left(i_K\left(*\nabla K_{\flat}\right)\right)
\\ \nonumber &= -2i_K\left(d*\nabla K_{\flat}\right) + 2\mathcal{L}_K\left(*\nabla K_{\flat}\right)
\\ \nonumber &= -2i_K\left(d*\nabla K_{\flat}\right)
\\ \nonumber &= 2 i_K\left(*\left(*d*\nabla K_{\flat}\right)\right)
\\ \nonumber &= -2i_K\left(*\Box K_{\flat}\right)
\\ \nonumber &= 2i_K*i_KRic. \qedhere
\end{align*}
\end{proof}
\begin{lemma}\label{divdiv}Let $\mathcal{N}$ be a $4$-dimensional Lorentzian manifold, $K$ be a Killing vector field with $\left|K\right|^2 \neq 0$, and let $\vartheta$ be the corresponding twist $1$-form. Then we have
\begin{equation}\label{divofthetheta}
{\rm div}\left(\vartheta\right) = 2\frac{h\left(\vartheta,\nabla\left|K\right|^2\right)}{\left|K\right|^2}.
\end{equation}
\end{lemma}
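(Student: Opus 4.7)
The plan is to exploit the identity $\mathrm{div}(\vartheta) = -*d*\vartheta$, which follows from the formula $*d*\eta = -\mathrm{Tr}\,\nabla(\eta)$ recorded in the proof of Lemma~\ref{dtheta}. Thus the task reduces to computing the 4-form $d*\vartheta$ and then applying the Hodge star.

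First I would simplify $*\vartheta$ using the standard identity $i_K(*\omega) = *(K_\flat \wedge \omega)$, valid (up to a sign depending on the form degree and signature, easily checked in local coordinates) for 2-forms on a 4-manifold. Applied to $\omega = \nabla K_\flat$, this rewrites the definition as $\vartheta = 2*(K_\flat \wedge \nabla K_\flat)$, so that $*\vartheta$ is a constant multiple of the 3-form $K_\flat \wedge \nabla K_\flat$. Next I would differentiate: since $K$ is Killing, $\nabla K_\flat$ is antisymmetric and equals a constant multiple of $dK_\flat$ as a 2-form, hence $d(\nabla K_\flat) = 0$. Therefore $d(K_\flat \wedge \nabla K_\flat) = dK_\flat \wedge \nabla K_\flat$, a constant multiple of the 4-form $\nabla K_\flat \wedge \nabla K_\flat$.

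The heart of the proof is to evaluate $\nabla K_\flat \wedge \nabla K_\flat$ using the electric-magnetic decomposition from Lemma~\ref{emagkil},
\[|K|^2 \nabla K_\flat \;=\; \tfrac{1}{2}(\nabla|K|^2)\wedge K_\flat \;-\; \tfrac{1}{2}*(\vartheta \wedge K_\flat).\]
Wedging this expression with itself and expanding, the two ``diagonal'' contributions vanish: $[(\nabla|K|^2)\wedge K_\flat]^{\wedge 2} = 0$ since $K_\flat \wedge K_\flat = 0$, and $[*(\vartheta \wedge K_\flat)]^{\wedge 2} = 0$ because, after using the symmetry $*\alpha \wedge \beta = \alpha \wedge *\beta$ on 2-forms and rearranging, one reaches $\vartheta \wedge \vartheta \wedge K_\flat \wedge K_\flat = 0$. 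Only the mixed ``electric--magnetic'' cross term $(\nabla|K|^2 \wedge K_\flat)\wedge *(\vartheta \wedge K_\flat)$ survives. Evaluating via the defining identity $\alpha\wedge *\beta = \tfrac{1}{2!}h(\alpha,\beta)\epsilon$ on 2-forms and expanding the induced inner product on decomposable elements gives
\[h(\nabla|K|^2 \wedge K_\flat,\, \vartheta \wedge K_\flat) \;=\; h(\nabla|K|^2,\vartheta)\,h(K_\flat,K_\flat)\;-\; h(\nabla|K|^2, K_\flat)\,h(K_\flat,\vartheta),\]
in which the second term drops because $h(\nabla|K|^2, K_\flat) = K(|K|^2) = 0$ (Killing) and $h(K_\flat,\vartheta) = \vartheta(K) = 0$ (noted in the proof of Lemma~\ref{bochner}). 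This leaves precisely $|K|^2 h(\vartheta, \nabla|K|^2)$, so $\nabla K_\flat \wedge \nabla K_\flat$ is a scalar multiple of $|K|^{-2} h(\vartheta, \nabla|K|^2)\,\epsilon$. Applying $*$ and chasing constants yields the claimed formula.

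The main obstacle is not conceptual but bookkeeping: one must carefully track the signs from $**$ acting on 2- versus 3-forms in Lorentzian signature, the $1/p!$ normalizations in the Hodge identities, and the proportionality constant between $dK_\flat$ and $\nabla K_\flat$ as 2-forms in the paper's wedge convention. These numerical conventions are what conspire to produce the specific factor $2$ on the right-hand side of \eqref{divofthetheta}, as opposed to some other constant.
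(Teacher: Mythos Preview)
Your approach is correct and essentially the same as the paper's: both reduce $\mathrm{div}(\vartheta)$ to the pairing of $\nabla K_\flat$ with its Hodge dual (the paper via $\mathrm{div}(*\nabla K_\flat)=0$ and a one-line index computation giving $2h(*\nabla K_\flat,\nabla K_\flat)$, you via $d*\vartheta\propto \nabla K_\flat\wedge\nabla K_\flat$), and then expand using the electric--magnetic decomposition of Lemma~\ref{emagkil}, where the diagonal terms vanish and only the cross term $h(\nabla|K|^2\wedge K_\flat,\vartheta\wedge K_\flat)$ survives. Your honest caveat about sign/factor bookkeeping is apt, but the argument is sound.
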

\begin{proof}We begin by noting that the Killing equation implies that
\[\nabla K_{\flat} = dK_{\flat},\]
and furthermore recall (see e.g.~\cite{eelslemaire}) that
\[{\rm div} = -*d*.\]
In particular,
\begin{equation}\label{div0}
{\rm div }\left(*\nabla K_{\flat}\right) = -\frac{1}{2}*d*\left(*dK_{\flat}\right) = -\frac{1}{2}*d^2K_{\flat} = 0.
\end{equation}

Then, letting $\{E_i\}$ denote a local orthonomal frame of $T\mathcal{N}$, we see that~(\ref{div0}) implies
\[{\rm div}\left(\vartheta\right) = 2\sum_i\left(*\nabla K_{\flat}\right)\left(E_i,\nabla_{E_i}K\right) = 2h\left(*\nabla K_{\flat},\nabla K_{\flat}\right).\]

Applying $*$ to~(\ref{decompK}) yields
\[\left|K\right|^2*\nabla K_{\flat} = \frac{1}{2}*\left(\nabla\left|K\right|^2\wedge K_{\flat}\right) + \frac{1}{2}\vartheta \wedge K_{\flat}.\]
Using this along with~(\ref{decompK}) yields
\begin{align}\label{divhelp}
2h\left(*\nabla K_{\flat},\nabla K_{\flat}\right) &= -\frac{1}{2}h\left(*\left(\nabla\left|K\right|^2\wedge K_{\flat}\right),*\left(\vartheta\wedge K_{\flat}\right)\right) + \frac{1}{2}h\left(\vartheta\wedge K_{\flat},\nabla \left|K\right|^2\wedge K_{\flat}\right)
\\ \nonumber &= \left|K\right|^{-4}h\left(\vartheta\wedge K_{\flat},\nabla \left|K\right|^2\wedge K_{\flat}\right)
\\ \nonumber &= \left|K\right|^{-4}h\left(\vartheta \otimes K_{\flat} - K_{\flat}\otimes \vartheta, \nabla\left|K\right|^2\otimes K_{\flat} - K_{\flat}\otimes \nabla\left|K\right|^2\right)
\\ \nonumber &= 2\left|K\right|^{-2}h\left(\vartheta,\nabla\left|K\right|^2\right).\qedhere
\end{align}
\end{proof}
\section{Curvature Calculations}\label{curvcalc}
In this section we will consider Lorentzian manifolds which have isometric $U(1)$ or $\mathbb{R}$ actions. We will establish formulas linking the Ricci curvature of the original manifold to the Ricci curvature of the quotient manifold equipped with an appropriate ``submersion'' metric.
\subsection{Curvature Decomposition for a $4$-dimensional Lorentzian Manifold Under a Spacelike $U(1)$ symmetry}\label{4to3}
In this section we let $(\mathcal{N},h)$ denote a $4$-dimensional Lorentzian manifold which admits an smooth, free, and proper isometric action by $U(1)$ with spacelike orbits. Many aspects of the presentation here mimic some calculations from~\cite{weinstein} which was concerned with the \emph{vacuum} Einstein equations on a stationary and axisymmetric spacetime.

Let $K$ denote a Killing vector field generating the symmetry, and let $\vartheta$ denote the corresponding twist $1$-form. Next, quotienting out by the orbits of the group action defines a smooth manifold $\overline{\mathcal{N}}$ and a projection map
\[\pi : \mathcal{N} \to \overline{\mathcal{N}}.\]
We now define the ``horizontal'' vector fields.
\begin{definition}We say that a vector $E$ in $T_n\mathcal{N}$ is \underline{horizontal} if $h\left(E,K\right) = 0$. We say that a vector field is horizontal if it is pointwise horizontal.
\end{definition}
Observe that it follows immediately that for each $n$, $\pi_*$ is an isomorphism from the set of horizontal vectors  in $T_n\mathcal{N}$ to $T_{\pi(n)}\overline{\mathcal{N}}$. This allows us to make the following definition.
\begin{definition}Let $\overline{E}$ be a vector field on $\overline{\mathcal{N}}$. Then the \underline{horizontal lift} of $\overline{E}$ is the unique horizontal vector field $E$ on $\mathcal{N}$ such that $\pi_*E = \overline{E}$. Any such vector field $E$ arising in this fashion is called \underline{basic}. Similarly, for any $1$-form $\overline{\eta}$ on $\overline{\mathcal{N}}$ we can define a horizontal lift $1$-form $\eta$ on $\mathcal{N}$. Again, any such $1$-form $\eta$ arising in this fashion is called basic. Finally, we introduce the convention that vector fields or $1$-forms on $\overline{\mathcal{N}}$ will come with bars, and the unbarred version will denote the horizontal lifts.
\end{definition}

The following lemma is easily proved.
\begin{lemma}\label{liezero}Let $E$ be a basic vector field. Then
\[\mathcal{L}_KE = 0.\]
\end{lemma}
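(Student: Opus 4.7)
My plan is to reduce the identity $\mathcal{L}_K E = 0$ to the uniqueness of the horizontal lift, using that the flow of $K$ acts by isometries that preserve the fibers of $\pi$.

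First I would let $\phi_t$ denote the flow of $K$. Because $K$ generates the smooth, free, proper isometric $U(1)$ action, each $\phi_t$ is an isometry of $(\mathcal{N},h)$, and because the fibers of $\pi$ are precisely the orbits of this action, $\pi \circ \phi_t = \pi$. In particular $(\phi_t)_* K = K$ since $[K,K]=0$.

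Next I would verify that $(\phi_t)_* E$ is once again the horizontal lift of $\bar E$, and then invoke uniqueness. Horizontality follows directly from the isometry property combined with $(\phi_t)_* K = K$:
\[
h\bigl((\phi_t)_* E,\, K\bigr) \;=\; h\bigl((\phi_t)_* E,\, (\phi_t)_* K\bigr) \;=\; h(E,K) \;=\; 0.
\]
The projection condition is equally quick: using $\pi \circ \phi_t = \pi$,
\[
\pi_*\bigl((\phi_t)_* E\bigr) \;=\; (\pi\circ\phi_t)_* E \;=\; \pi_* E \;=\; \bar E.
\]
The fact, recorded just before the lemma, that $\pi_*$ restricts to an isomorphism from horizontal vectors at a point to tangent vectors at the projected point ensures that the horizontal lift of $\bar E$ is unique. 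Hence $(\phi_t)_* E = E$ as vector fields on $\mathcal{N}$, and differentiating at $t = 0$ gives $\mathcal{L}_K E = [K,E] = 0$.

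There is no serious obstacle here; the only point that demands care is bookkeeping of base points, namely that $(\phi_t)_* E\big|_n$ lives at $\phi_t(n)$ and the comparison that has to be made is with $E\big|_{\phi_t(n)}$: both are horizontal vectors at $\phi_t(n)$ whose $\pi_*$-images equal $\bar E\big|_{\pi(\phi_t(n))}$, so uniqueness forces them to agree.
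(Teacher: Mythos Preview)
Your argument is correct and is exactly the natural proof one would expect: the paper omits the proof entirely, simply stating that the lemma ``is easily proved,'' and what you have written is the standard flow-based verification it has in mind. The only cosmetic point is that the final base-point paragraph is already implicit in steps (4)--(5), so you could drop it without loss.
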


This leads to the following definition and lemma.
\begin{definition}\label{horTen}We say that a tensor $\mathcal{T}$ on $\mathcal{N}$ is basic if $\mathcal{L}_K\mathcal{T}\left(E_1,\cdots,E_k,\eta_1,\cdots,\eta_l\right) = 0$ for basic vectors fields $\{E_i\}$ and $1$-forms $\{\eta_i\}$.
\end{definition}
\begin{lemma}\label{descend}Any basic tensor $\mathcal{T}$ descends to a tensor on $\overline{\mathcal{N}}$ is an unambiguous manner via the formula
\begin{equation}\label{formulaforthelift}
\mathcal{T}\left(\overline{E}_1,\cdots,\overline{E}_k,\overline{\eta}_1,\cdots,\overline{\eta}_l\right) \doteq \mathcal{T}\left(E_1,\cdots,E_k,\eta_1,\cdots,\eta_l\right),
\end{equation}
where we recall the convention that the barred vector fields and $1$-forms live on $\overline{\mathcal{N}}$ and the unbarred versions denote the horizontal lift.

More concretely, the claim is the following. Let $\overline{n} \in \overline{\mathcal{N}}$. Then, for any choice of $n$ in $\pi^{-1}\left(\overline{n}\right)$, the right hand side of~(\ref{formulaforthelift}) evaluates to the same number.
\end{lemma}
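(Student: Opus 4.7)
The plan is to show that $f(n) \doteq \mathcal{T}(E_1,\ldots,E_k,\eta_1,\ldots,\eta_l)(n)$ depends only on $\pi(n)$, i.e., that $f$ is constant along each fiber of $\pi$. Once this is established, formula \eqref{formulaforthelift} consistently defines a tensor on $\overline{\mathcal{N}}$ at a point $\overline{n}$ by extending any specified tangent and cotangent data at $\overline{n}$ to basic fields on $\mathcal{N}$, evaluating $f$ at an arbitrary preimage, and reading off the result; since horizontal lifts of vectors and covectors are unique at each point, the outcome depends only on the data at $\overline{n}$.

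To see that $f$ is fiber-constant, I would use that each fiber $\pi^{-1}(\overline{n})$ is a single integral curve of $K$ (since $U(1)$ is connected and one-dimensional), so any two points of the fiber are related by the flow of $K$. Thus it suffices to prove $\mathcal{L}_K f = 0$ pointwise. Applying the Leibniz rule for the Lie derivative,
\[
\mathcal{L}_K\bigl(\mathcal{T}(E_1,\ldots,E_k,\eta_1,\ldots,\eta_l)\bigr) = (\mathcal{L}_K \mathcal{T})(E_1,\ldots,\eta_l) + \sum_{i}\mathcal{T}(\ldots,\mathcal{L}_K E_i,\ldots) + \sum_{j}\mathcal{T}(\ldots,\mathcal{L}_K\eta_j,\ldots).
\]
The first term vanishes by the hypothesis that $\mathcal{T}$ is basic (Definition \ref{horTen}), and the terms $\mathcal{L}_K E_i$ in the second sum vanish by Lemma \ref{liezero}.

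The remaining ingredient, and essentially the only point that requires comment, is the analog of Lemma \ref{liezero} for basic $1$-forms: if $\eta$ is the horizontal lift of a $1$-form $\overline{\eta}$ on $\overline{\mathcal{N}}$, then $\mathcal{L}_K \eta = 0$. For this I would note that $\eta = \pi^{*}\overline{\eta}$ and that the flow $\phi^K_t$ of $K$ acts by elements of $U(1)$ which permute points within a single fiber, so $\pi \circ \phi^K_t = \pi$; then $(\phi^K_t)^{*}\eta = (\phi^K_t)^{*}\pi^{*}\overline{\eta} = \pi^{*}\overline{\eta} = \eta$, and differentiating at $t=0$ gives $\mathcal{L}_K\eta = 0$. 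With this, all three contributions on the right-hand side of the Leibniz expansion vanish, so $\mathcal{L}_K f = 0$ and the lemma follows. There is no substantive obstacle; the argument is purely formal, with Lemma \ref{liezero} (plus its $1$-form analog) providing the one nontrivial input.
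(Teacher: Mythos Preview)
Your proof is correct and follows essentially the same approach as the paper: both reduce to showing $K(\mathcal{T}(E_1,\ldots,\eta_l)) = 0$ by noting that the fiber is generated by the flow of $K$, and then invoke Lemma~\ref{liezero}. The paper is terser (it cites ``$K$ Killing and Lemma~\ref{liezero}'' without writing out the Leibniz expansion), whereas you spell out the expansion and explicitly supply the $1$-form analog $\mathcal{L}_K(\pi^*\overline\eta)=0$; this extra detail is sound and fills a small gap the paper leaves implicit.
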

\begin{proof}Since the fiber over any point $\overline{n}$ is generated by the flow of $K$, it suffices to show that
\[K\left(\mathcal{T}\left(E_1,\cdots,E_k,\eta_1,\cdots,\eta_l\right)\right) = 0.\]
However, this follows immediately from the fact that $K$ is Killing and Lemma~\ref{liezero}.
\end{proof}

Since the metric $h$ is clearly a basic tensor, via Lemma~\ref{descend} it descends to a metric on $\overline{\mathcal{N}}$ which we denote by $\overline{h}$. In particular, $\pi : (\mathcal{N},h) \to (\overline{\mathcal{N}},\overline{h})$ will now form a \emph{Riemannian submersion} in the sense of~\cite{oneil}. Our immediate goal is to express the Ricci curvature of $(\mathcal{N},h)$ in terms of $\left|K\right|^2$, $\vartheta$, and the Ricci curvature of $(\overline{\mathcal{N}},\overline{h})$. Lemmas~\ref{bochner} and~\ref{dtheta} have already shown that $i_KRic$ is determined by $\left|K\right|^2$ and $\vartheta$, so it remains to study $Ric\left(E,F\right)$ for horizontal tensors $E$ and $F$.

Following~\cite{oneil} we introduce the fundamental tensor $D$ on $\overline{\mathcal{N}}$.
\begin{definition}\label{Atensor}We define a tensor $D$ of type $(0,2)$ on $\mathcal{N}$ by
\[D\left(E,F\right) \doteq \left|K\right|^{-2}h\left(\nabla_EF,K\right).\]
It follows immediately from the fact that $K$ is Killing, Lemma~\ref{liezero}, and the Lie derivative product rule that $D$ is basic and hence descends to $\overline{\mathcal{N}}$.
\end{definition}

The following straightforward lemma is proved in~\cite{oneil}.
\begin{lemma}\label{levicivitbar}Let $\overline{E}$ and $\overline{F}$ be vector fields on $\overline{\mathcal{N}}$. Then $\mathcal{L}_K\left(\nabla_EF\right) = 0$, and thus we may define a connection on $\overline{\mathcal{N}}$ by
\[\overline{\nabla}_{\overline{E}}\overline{F} \doteq \pi_*\left(\nabla_EF\right).\]
In fact, this connection is equal to the Levi--Civita connection for $(\overline{N},\overline{h})$.

\end{lemma}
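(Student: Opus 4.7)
My plan is to split the lemma into its two assertions and treat each in turn, relying on the fact that $K$ is Killing and that both $E$ and $F$, being basic, are $K$-invariant (Lemma~\ref{liezero}).

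First I would establish $\mathcal{L}_K(\nabla_E F)=0$. The cleanest route is to use that the flow $\phi_t^K$ of $K$ is an isometry, so it commutes with the Levi--Civita connection $\nabla$ in the sense $(\phi_t^K)_*(\nabla_A B)=\nabla_{(\phi_t^K)_*A}(\phi_t^K)_*B$. Applying this with $A=E$, $B=F$ and using $(\phi_t^K)_*E=E$, $(\phi_t^K)_*F=F$ (which is the integrated form of $\mathcal{L}_K E=\mathcal{L}_K F=0$) gives $(\phi_t^K)_*(\nabla_E F)=\nabla_E F$, i.e.\ $\mathcal{L}_K(\nabla_E F)=0$. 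This shows $\nabla_E F$ is $K$-invariant along fibers of $\pi$. Because $\pi\circ \phi_t^K=\pi$, the push-forward $\pi_*(\nabla_E F)_n$ is independent of the choice of $n\in\pi^{-1}(\overline{n})$, so $\overline{\nabla}_{\overline{E}}\overline{F}\doteq \pi_*(\nabla_E F)$ is a well-defined vector field on $\overline{\mathcal{N}}$. Standard bilinearity/tensorial checks (that this depends $C^\infty(\overline{\mathcal{N}})$-linearly on $\overline{E}$ and satisfies the Leibniz rule in $\overline{F}$) follow from the corresponding properties of $\nabla$ combined with the fact that multiplication by a basic function commutes with horizontal lifting.

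Next I would verify that $\overline{\nabla}$ is torsion-free and metric-compatible; by uniqueness this forces $\overline{\nabla}$ to be the Levi--Civita connection of $\overline{h}$. For the torsion: if $\overline{E},\overline{F}$ are vector fields on $\overline{\mathcal{N}}$ with horizontal lifts $E,F$, then $E$ and $F$ are $\pi$-related to $\overline{E}$ and $\overline{F}$, so $[E,F]$ is $\pi$-related to $[\overline{E},\overline{F}]$, i.e.\ $\pi_*[E,F]=[\overline{E},\overline{F}]$. Thus
\[
\overline{\nabla}_{\overline{E}}\overline{F}-\overline{\nabla}_{\overline{F}}\overline{E}=\pi_*(\nabla_E F-\nabla_F E)=\pi_*[E,F]=[\overline{E},\overline{F}].
\]
For metric compatibility: since $E,F,G$ are horizontal and $h$ descends to $\overline{h}$, we have the identity $h(F,G)=\overline{h}(\overline{F},\overline{G})\circ\pi$, and differentiating along $E$ gives $E(h(F,G))=\overline{E}(\overline{h}(\overline{F},\overline{G}))\circ\pi$. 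On the other hand, metric compatibility of $\nabla$ yields $E(h(F,G))=h(\nabla_E F,G)+h(F,\nabla_E G)$, and since $G$ is horizontal $h(\nabla_E F,G)=h\bigl((\nabla_E F)^{\mathrm{hor}},G\bigr)=\overline{h}\bigl(\pi_*\nabla_E F,\overline{G}\bigr)\circ\pi=\overline{h}(\overline{\nabla}_{\overline{E}}\overline{F},\overline{G})\circ\pi$, with the analogous identity for the other term. Combining gives the desired metric compatibility of $\overline{\nabla}$.

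I do not anticipate a serious obstacle: the only slightly subtle point is the well-definedness at step one, where one must note that although $\nabla_E F$ need not itself be horizontal, its vertical component is also $K$-invariant (because the vertical distribution is spanned by $K$ and $|K|^2$ is $K$-invariant), and in any case is annihilated by $\pi_*$, so only the horizontal part matters; that horizontal part is basic by the $K$-invariance derived from the Killing flow argument. Everything else is a direct translation of O'Neill's Riemannian-submersion formulas to the Lorentzian setting, which goes through unchanged since $|K|^2>0$ means the fibers are (spacelike) non-degenerate and the horizontal-vertical splitting of $T\mathcal{N}$ is an orthogonal direct sum.
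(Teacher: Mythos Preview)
Your argument is correct and is precisely the standard O'Neill Riemannian-submersion argument; the paper does not spell out a proof but simply cites \cite{oneil}, so your write-up is in effect supplying the details of that citation.
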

Since the tensor $D$ measures the projection of $\nabla_EF$ onto the kernel of $\pi_*$, Lemma~\ref{levicivitbar} implies that $D$ may be interpreted as a measure of the difference between $\overline{\nabla}$ and $\nabla$.

In the next lemma we derive formulas linking $D$ to the Lie bracket and $\vartheta$
\begin{lemma}Let $D$ be the tensor from Definition~\ref{Atensor} and $E$ and $F$ be basic vector fields. We have
\[D\left(E,F\right) = \frac{1}{2}\left|K\right|^{-2}h\left([E,F],K\right) = \left|K\right|^{-4}*\left(\vartheta\wedge K_{\flat}\right)\left(E,F\right).\]
\end{lemma}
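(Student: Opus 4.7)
The plan is to reduce both equalities to two inputs already available in the text: the skew-symmetry of $\nabla K_\flat$ that follows from the Killing equation together with the horizontality relations $h(E,K)=h(F,K)=0$, and the electric-magnetic decomposition recorded in Lemma \ref{emagkil}.

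For the first equality $D(E,F)=\tfrac12 |K|^{-2} h([E,F],K)$, I would start directly from Definition \ref{Atensor}, so that one needs to identify $h(\nabla_E F,K)$ with $\tfrac12 h([E,F],K)$. Differentiating $h(F,K)=0$ in the direction of $E$, and the symmetric statement with $E,F$ interchanged, gives
\[
h(\nabla_E F,K)=-h(F,\nabla_E K), \qquad h(\nabla_F E,K)=-h(E,\nabla_F K).
\]
The Killing equation says $\nabla K_\flat$ is antisymmetric as a $(0,2)$-tensor, so $h(E,\nabla_F K)=-h(F,\nabla_E K)$. Subtracting the two displayed identities yields $h([E,F],K)=2h(\nabla_E F,K)$, which is exactly the desired first equality.

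For the second equality, I would feed the pair $(E,F)$, with $E,F$ horizontal and basic, into the decomposition supplied by Lemma \ref{emagkil}:
\[
|K|^{2}(\nabla K_\flat)(E,F)=\tfrac12\bigl(\nabla|K|^2\wedge K_\flat\bigr)(E,F)-\tfrac12 *(\vartheta\wedge K_\flat)(E,F).
\]
Since $K_\flat(E)=K_\flat(F)=0$ by horizontality, the first term on the right is identically zero. On the left, $(\nabla K_\flat)(E,F)=h(\nabla_E K,F)=-h(\nabla_E F,K)$, by the same calculation as above. Substituting in and using the already-established relation $h(\nabla_E F,K)=\tfrac12 h([E,F],K)$ converts the identity into the claimed equality between $\tfrac12|K|^{-2}h([E,F],K)$ and $|K|^{-4}*(\vartheta\wedge K_\flat)(E,F)$.

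The main thing to watch out for, rather than any serious obstacle, is bookkeeping of numerical factors: the paper's conventions for the wedge product (sum over permutations, so $\alpha\wedge\beta=\alpha\otimes\beta-\beta\otimes\alpha$), the normalization of $\vartheta$ (which comes with a factor $2$), and the Hodge star (fixed by $\alpha\wedge *\beta=\tfrac{1}{k!}h(\alpha,\beta)\epsilon$) must all be used consistently throughout so that the prefactors in the two halves of the identity match. Once the three conventions are lined up, both equalities follow in a couple of lines from the Killing identity, the horizontality of $E$ and $F$, and Lemma \ref{emagkil}.
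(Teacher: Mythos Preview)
Your proposal is correct and follows essentially the same route as the paper. For the second equality both you and the paper rewrite $D(E,F)=-|K|^{-2}(\nabla K_\flat)(E,F)$, feed in the electric--magnetic decomposition of Lemma~\ref{emagkil}, and use horizontality of $E,F$ to kill the $\nabla|K|^2\wedge K_\flat$ term; the only difference is that for the first equality the paper simply cites O'Neill, whereas you supply the short direct argument from the Killing equation and $h(E,K)=h(F,K)=0$.
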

\begin{proof}The proof of the first equality can be found in~\cite{oneil}.

The second equality follows from
\begin{align*}
D\left(E,F\right) &= -\left|K\right|^{-2}\left(\nabla K_{\flat}\right)\left(E,F\right)
\\ \nonumber &= -\left|K\right|^{-4}\left[\left(\nabla\left|K\right|^2\wedge K_{\flat}\right)\left(E,F\right) - *\left(\vartheta\wedge K_{\flat}\right)\left(E,F\right)\right]
\\ \nonumber &= \left|K\right|^{-4}*\left(\vartheta\wedge K_{\flat}\right)\left(E,F\right).\qedhere
\end{align*}
\end{proof}
\begin{remark}\label{tvanish}Note that this lemma implies that $\vartheta$ is the obstruction to $K$ being locally hypersurface orthogonal.
\end{remark}

The fundamental equations of a Riemannian submersion (see~\cite{oneil}) give us
\begin{proposition}Let $\overline{E}$, $\overline{F}$, $\overline{G}$, and $\overline{H}$ be vector fields on $\overline{\mathcal{N}}$ and let $\overline{R}$ denote the curvature tensor of $(\overline{\mathcal{N}},\overline{h})$. Then, keeping in mind that $K$ Killing implies that $R$ is a basic tensor, we have
\begin{equation}\label{curvsub}
\overline{R}\left(\overline{E},\overline{F},\overline{G},\overline{H}\right) =
\end{equation}
\[R\left(\overline{E},\overline{F},\overline{G},\overline{H}\right) + \left|K\right|^2\left(2D\left(\overline{E},\overline{F}\right)D\left(\overline{G},\overline{H}\right) - D\left(\overline{F},\overline{G}\right)D\left(\overline{E},\overline{H}\right) - D\left(\overline{G},\overline{E}\right)D\left(\overline{F},\overline{H}\right)\right).\]
\end{proposition}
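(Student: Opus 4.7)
The plan is to derive the stated identity by applying O'Neill's fundamental equations for a Riemannian submersion (\cite{oneil}) to $\pi : (\mathcal{N},h) \to (\overline{\mathcal{N}},\overline{h})$. Since $h$ is basic and descends to $\overline{h}$, and $\pi_*$ is an isometry of horizontal subspaces, $\pi$ is a Riemannian submersion in O'Neill's sense. Moreover, $D$ is essentially O'Neill's $A$-tensor restricted to horizontal inputs: for basic $E, F$ one has $A_E F = V(\nabla_E F) = D(E,F)\, K$, so $h(A_E F, A_G H) = |K|^2 D(E,F) D(G,H)$. The stated formula is then O'Neill's classical identity rewritten in terms of $D$ and $|K|^2$.

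To derive it directly, I would first write $\nabla_E F = H(\nabla_E F) + D(E,F)\, K$ for basic $E, F$; by Lemma~\ref{levicivitbar}, $H(\nabla_E F)$ is itself basic and lifts $\overline{\nabla}_{\overline{E}} \overline{F}$. Iterating this decomposition, the horizontal part of $\nabla_F \nabla_E G$ is the lift of $\overline{\nabla}_{\overline{F}} \overline{\nabla}_{\overline{E}} \overline{G}$ together with a correction $D(E,G)\, H(\nabla_F K)$, and symmetrically for $\nabla_E \nabla_F G$. The bracket term splits as $[E,F] = H[E,F] + V[E,F]$, where the skew-symmetry of $D$ (equivalent to its proportionality to $*(\vartheta \wedge K_\flat)$) applied to $V(\nabla_E F - \nabla_F E) = V[E,F]$ yields $V[E,F] = 2 D(E,F)\, K$. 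Thus $\nabla_{[E,F]} G$ becomes $\nabla_{H[E,F]} G + 2 D(E,F)\, \nabla_K G$, and Lemma~\ref{liezero} gives $[K,G] = 0$, so $\nabla_K G = \nabla_G K$.

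Assembling $R(E,F)G = \nabla_F \nabla_E G - \nabla_E \nabla_F G - \nabla_{[E,F]}G$ and taking the inner product with a basic $H$, the purely horizontal contributions telescope into $\overline{R}(\overline{E},\overline{F},\overline{G},\overline{H})$. The remaining correction terms each pair a $\nabla_\cdot K$ with a horizontal vector, and the Killing identity $h(\nabla_A K, B) = -|K|^2 D(A,B)$ for horizontal $A, B$ (immediate from $\nabla_a K_b + \nabla_b K_a = 0$ together with $h(K, B) = 0$) turns each into a quadratic expression in $D$ carrying a factor of $|K|^2$. Collecting the pieces reproduces the coefficients $2D(E,F)D(G,H) - D(F,G)D(E,H) - D(G,E)D(F,H)$ displayed in the proposition.

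The main obstacle is purely bookkeeping: two layers of covariant derivative generate many intermediate terms that must be sorted into horizontal and vertical components, and the signs must be tracked consistently with the curvature convention declared in Section~\ref{geomprelim}. What makes the accounting tractable is the skew-symmetry of $D$ together with $[K, G] = 0$ on basic $G$, which collapse a long list of a priori distinct contributions into the three quadratic terms in the statement; no analytic difficulty enters.
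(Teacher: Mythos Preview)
Your proposal is correct and aligns with the paper's approach: the paper simply cites O'Neill's fundamental equations for a Riemannian submersion without giving any argument, so your identification of $D$ with O'Neill's $A$-tensor via $A_EF = D(E,F)\,K$ is exactly what is needed to read off the formula. Your additional sketch of the direct derivation is sound and goes beyond what the paper provides.
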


Next, we want to trace~(\ref{curvsub}) in order to derive an equation for the Ricci curvature of $(\overline{\mathcal{N}},\overline{h})$. We will break the calculation up into a few lemmas. First of all, we record the easily proved fact that various tensors of interest are basic.
\begin{lemma}\label{theyarehorizontal}The function $\left|K\right|^2$ and tensors $\nabla\left|K\right|^2$, $\nabla K$, $\nabla^2K$, and $\vartheta$ are all basic and thus may be considered to be functions or tensors on $\overline{\mathcal{N}}$.
\end{lemma}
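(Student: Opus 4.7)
The plan is to verify the basic-ness criterion of Definition~\ref{horTen} for each tensor on the list by reducing everything to the single principle that, because $K$ is Killing, the Lie derivative $\mathcal{L}_K$ commutes with every natural geometric operation built from the metric: namely, $\mathcal{L}_K h = 0$, $\mathcal{L}_K K = 0$, $\mathcal{L}_K \nabla = 0$ (in the sense that $\mathcal{L}_K (\nabla_X Y) = \nabla_{\mathcal{L}_K X} Y + \nabla_X (\mathcal{L}_K Y)$ as operators on tensor fields), $\mathcal{L}_K \ast = 0$, and $[\mathcal{L}_K, d] = 0$, together with $\mathcal{L}_K (i_X \eta) = i_{\mathcal{L}_K X}\eta + i_X \mathcal{L}_K \eta$. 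Combined with Lemma~\ref{liezero} (which says $\mathcal{L}_K E = 0$ for any basic $E$), this will immediately give, for any tensor $\mathcal{T}$ satisfying $\mathcal{L}_K \mathcal{T} = 0$, that $K\bigl(\mathcal{T}(E_1,\dots,\eta_1,\dots)\bigr) = 0$ on basic arguments, i.e.\ that $\mathcal{T}$ is basic in the sense of Definition~\ref{horTen}.

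First I would dispatch $|K|^2 = h(K,K)$: since $\mathcal{L}_K h = 0$ and $\mathcal{L}_K K = 0$, we get $\mathcal{L}_K|K|^2 = 0$. Next, $\nabla |K|^2 = d|K|^2$ is handled by $\mathcal{L}_K d |K|^2 = d \mathcal{L}_K |K|^2 = 0$. For $\nabla K$, viewed as a $(1,1)$-tensor, the Killing condition $\mathcal{L}_K \nabla = 0$ together with $\mathcal{L}_K K = 0$ gives $\mathcal{L}_K (\nabla K) = 0$; the same argument, iterated once, handles $\nabla^2 K$. Finally, for $\vartheta = 2 i_K(\ast \nabla K_\flat)$: since $K_\flat$ is obtained by lowering an index with $h$ from $K$, we have $\mathcal{L}_K K_\flat = 0$; hence $\mathcal{L}_K \nabla K_\flat = 0$, hence $\mathcal{L}_K (\ast \nabla K_\flat) = 0$ by the invariance of $\ast$, and finally $\mathcal{L}_K \vartheta = 2 i_{\mathcal{L}_K K}(\ast\nabla K_\flat) + 2 i_K \mathcal{L}_K(\ast \nabla K_\flat) = 0$.

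Having verified $\mathcal{L}_K \mathcal{T} = 0$ for each $\mathcal{T}$ on the list, the final step is to read off the basic property: for basic vector fields $E_1,\dots,E_k$ and basic $1$-forms $\eta_1,\dots,\eta_l$, the product rule for $\mathcal{L}_K$ combined with Lemma~\ref{liezero} yields
\[
K\bigl(\mathcal{T}(E_1,\dots,E_k,\eta_1,\dots,\eta_l)\bigr) = (\mathcal{L}_K\mathcal{T})(E_1,\dots,E_k,\eta_1,\dots,\eta_l) = 0,
\]
so Definition~\ref{horTen} is satisfied. Lemma~\ref{descend} then delivers the descended functions/tensors on $\overline{\mathcal{N}}$. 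There is no real obstacle here; the only mild point requiring care is the identity $[\mathcal{L}_K,\nabla] = 0$ for Killing $K$, which is standard but is the sole nontrivial input (it follows because $\nabla$ is the unique torsion-free connection compatible with $h$, and $\mathcal{L}_K$ preserves both of those properties since $\mathcal{L}_K h = 0$).
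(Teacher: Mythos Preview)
Your proof is correct. The paper actually omits the proof entirely, introducing the lemma with the phrase ``we record the easily proved fact'' and leaving the verification to the reader; your argument via $\mathcal{L}_K\mathcal{T}=0$ for each tensor, using that $\mathcal{L}_K$ commutes with $\nabla$, $\ast$, $d$, and $i_K$ when $K$ is Killing, is exactly the routine check the authors have in mind.
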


The following calculation will be important.
\begin{lemma}\label{important}Let $E$ and $F$ be two basic vector fields. Then
\[h\left(i_E*\left(\vartheta\wedge K_{\flat}\right),i_F*\left(\vartheta\wedge K_{\flat}\right)\right) = -\left|K\right|^2\left[\left|\vartheta\right|^2h\left(E,F\right) - \vartheta\left(E\right)\vartheta\left(F\right)\right].\]
\end{lemma}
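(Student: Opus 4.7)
My plan is to establish the identity by a direct component calculation in four-dimensional Lorentzian signature. Write $\omega \doteq *(\vartheta\wedge K_\flat)$ for brevity. Both sides of the claim are manifestly symmetric bilinear in $(E,F)$, so no polarization is required; it suffices to compute the $(0,2)$-tensor $T_{ab'} \doteq \omega_{ac}\,\omega_{b'}{}^{c}$ and then contract with $E^a F^{b'}$. Using $(\vartheta\wedge K_\flat)_{cd} = \vartheta_c K_d - \vartheta_d K_c$ together with the antisymmetry of the volume form, the Hodge dual has the clean component expression
\[
\omega_{ab} \;=\; \epsilon_{abcd}\,\vartheta^c K^d.
\]

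The key step is the one-index epsilon contraction identity in Lorentzian $4$d,
\[
\epsilon_{\beta xyz}\,\epsilon^{\beta}{}_{x'y'z'} \;=\; -\,\delta^{xyz}_{x'y'z'},
\]
where $\delta^{xyz}_{x'y'z'}$ is the generalized Kronecker delta (the signed sum over the $3!$ permutations) and the overall minus sign records the Lorentzian $\det h = -1$. Substituting this into the expansion of $T_{ab'}$ produces six terms; three of them vanish once we use the orthogonality $\vartheta(K) = 0$ that was established in the course of the proof of Lemma~\ref{bochner}. The three surviving terms combine to give
\[
T_{ab'} \;=\; -\,h_{ab'}\,|\vartheta|^2|K|^2 \;+\; |K|^2\,\vartheta_a \vartheta_{b'} \;+\; |\vartheta|^2\, K_a K_{b'}.
\]

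Finally, since $E$ and $F$ are basic (hence horizontal), $K_\flat(E) = K_\flat(F) = 0$, so the $K_a K_{b'}$ term drops out upon contraction. What remains is precisely
\[
h(i_E\omega,\,i_F\omega) \;=\; E^a F^{b'}\,T_{ab'} \;=\; -\,|K|^2\bigl[\,|\vartheta|^2\, h(E,F) - \vartheta(E)\vartheta(F)\,\bigr],
\]
as claimed. The main obstacle is the sign and permutation bookkeeping in the epsilon contraction: one must track the overall Lorentzian $-1$ and correctly identify which three of the six permutations are annihilated by the relation $\vartheta(K)=0$. Once the expansion is laid out, the remainder of the argument is purely mechanical.
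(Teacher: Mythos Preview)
Your proof is correct. The computation of $T_{ab'}$ via the Lorentzian contraction identity $\epsilon^{c}{}_{apq}\epsilon_{cb'rs} = -\delta_{b'rs}^{apq}$ (after lowering indices) goes through exactly as you describe: the six permutation terms yield three $\vartheta(K)$-contractions that vanish, and the surviving three are $-h_{ab'}|\vartheta|^2|K|^2$, $|K|^2\vartheta_a\vartheta_{b'}$, and $|\vartheta|^2 K_a K_{b'}$. Horizontality of $E,F$ then kills the $K_aK_{b'}$ term. One should perhaps note explicitly that the paper's wedge and Hodge conventions do give $\omega_{ab}=\epsilon_{abcd}\vartheta^cK^d$ with coefficient exactly $1$ (not $\tfrac12$), so no stray normalization enters the squared expression; this is easily checked on a single component in an adapted frame.

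The route, however, is genuinely different from the paper's. The paper fixes an orthonormal frame $\{e_0,e_1,e_2,K/|K|\}$ adapted to $K$, writes out $*(\vartheta\wedge K_\flat)$ and $i_{e_j}*(\vartheta\wedge K_\flat)$ explicitly in that frame, and then verifies the identity by a case-by-case computation, treating the diagonal pairings $j=k$ and the off-diagonal pairings $j\neq k$ separately, with the sign bookkeeping handled through the auxiliary function $\gamma(l,p)$. Your abstract-index argument replaces this case split by a single application of the $\epsilon$-$\delta$ identity. The trade-off is the usual one: your approach is shorter and manifestly tensorial, at the cost of requiring fluency with the generalized Kronecker delta and the Lorentzian sign; the paper's frame computation is longer but entirely elementary and makes the role of the signature transparent at each step.
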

\begin{proof}This calculation is most easily done in a local orthonormal frame. Let $\left\{e_0,e_1,e_2,\frac{K}{\left|K\right|}\right\}$ be a local orthonormal frame in $T\mathcal{N}$ and $\left\{\omega^0,\omega^1,\omega^2,\frac{K_{\flat}}{\left|K\right|}\right\}$ be the dual local orthonormal frame in $T^*\mathcal{N}$. Let us furthermore assume that $\omega^0\wedge \omega^1 \wedge \omega^2 \wedge K_{\flat}$ is positively oriented and that $h\left(\omega^0,\omega^0\right) = -1$. Next, for $l$, $p$ in $\{0,1,2\}$ we introduce the notation $\gamma\left(l,p\right)$ for the unique number in $\{0,1,2\}\setminus \{l,p\}$.

An easy calculation then shows that it suffices to prove that
\begin{equation}\label{ans1}
h\left(i_{e_j}*\left(\vartheta\wedge K_{\flat}\right),i_{e_j}*\left(\vartheta\wedge K_{\flat}\right)\right) = \left|K\right|^2\sum_{i\neq j}^2\left(\vartheta_i\right)^2h\left(\omega^{\gamma(i,j)},\omega^{\gamma(i,j)}\right),
\end{equation}
and, for $j \neq k$,
\begin{align}\label{ans2}
h\left(i_{e_j}*\left(\vartheta\wedge K_{\flat}\right),i_{e_k}*\left(\vartheta\wedge K_{\flat}\right)\right) &= \left|K\right|^2\vartheta_k\vartheta_j.
\end{align}

We start by writing
\[\vartheta\wedge K_{\flat} = \sum_{i=0}^2\vartheta_i\omega^i\wedge K_{\flat},\]
and
\[*\left(\vartheta \wedge K_{\flat}\right) = \left|K\right|\sum_{i=0}^2\left(-1\right)^i\vartheta_ih\left(\omega^i,\omega^i\right)\omega^0\wedge\cdots\wedge \hat{\omega^i}\wedge\cdots \wedge \omega^2.\]

Continuing,
\[i_{e_j}*\left(\vartheta\wedge K_{\flat}\right) = \left|K\right|\sum_{i\neq j}^2\left(-1\right)^i{\rm sgn}\left(\gamma(i,j),j\right)h\left(\omega^i,\omega^i\right)\vartheta_i\omega^{\gamma(i,j)}.\]

We now easily see that
\begin{equation*}
h\left(i_{e_j}*\left(\vartheta\wedge K_{\flat}\right),i_{e_j}*\left(\vartheta\wedge K_{\flat}\right)\right) = \left|K\right|^2\sum_{i\neq j}^2\left(\vartheta_i\right)^2h\left(\omega^{\gamma(i,j)},\omega^{\gamma(i,j)}\right).
\end{equation*}
This establishes~(\ref{ans1}).

Now we consider the case when $j \neq k$. We have
\begin{align}\label{intertwistinner}
h&\left(i_{e_j}*\left(\vartheta\wedge K_{\flat}\right),i_{e_k}*\left(\vartheta\wedge K_{\flat}\right)\right)
\\ \nonumber &= \left|K\right|^2h\left(\sum_{i\neq j}^2\left(-1\right)^ih\left(\omega^i,\omega^i\right){\rm sgn}\left(\gamma(i,j),j\right)\vartheta_i\omega^{\gamma(i,j)},\sum_{i\neq k}^2\left(-1\right)^ih\left(\omega^i,\omega^i\right){\rm sgn}\left(\gamma(i,k),k\right)\vartheta_i\omega^{\gamma(i,k)}\right).
\end{align}

Then, we observe
\begin{align}\label{dothesum}
\sum_{i\neq j}^2\left(-1\right)^i{\rm sgn}\left(\gamma(i,j),j\right)\vartheta_i\omega^{\gamma(i,j)} = \left(-1\right)^k&h\left(\omega^k,\omega^k\right){\rm sgn}\left(\gamma(k,j),j\right)\vartheta_k\omega^{\gamma\left(k,j\right)} \\ \nonumber &+ \left(-1\right)^{\gamma(k,j)}h\left(\omega^{\gamma(k,j)},\omega^{\gamma(k,j)}\right){\rm sgn}\left(k,j\right)\vartheta_{\gamma\left(k,j\right)}\omega^k.
\end{align}
Plugging~(\ref{dothesum}) into~(\ref{intertwistinner} yields
\begin{align*}
h&\left(i_{e_j}*\left(\vartheta\wedge K_{\flat}\right),i_{e_k}*\left(\vartheta\wedge K_{\flat}\right)\right)
\\ \nonumber &= \left|K\right|^2(-1)^k(-1)^jh\left(\omega^j,\omega^j\right)h\left(\omega^k,\omega^k\right)h\left(\omega^{\gamma(k,j)},\omega^{\gamma(k,j)}\right){\rm sgn}\left(\gamma(k,j),j\right){\rm sgn}\left(\gamma(j,k),k\right)\vartheta_k\vartheta_j
\\ \nonumber &= \left|K\right|^2\vartheta_k\vartheta_j.
\end{align*}
This establishes~(\ref{ans2}) and thus concludes the proof.
\end{proof}

\begin{lemma}\label{sortofimportant}Let $E$ and $F$ be basic vector fields. Then
\[R\left(E,K,K,F\right) = \frac{1}{2}\nabla^2_{E,F}\left|K\right|^2 - \frac{1}{4}\left|K\right|^{-2}\left(\nabla_E\left|K\right|^2\right)\left(\nabla_F\left|K\right|^2\right) +\frac{1}{4}\left|K\right|^{-2}\left[\left|\vartheta\right|^2h\left(E,F\right) - \vartheta\left(E\right)\vartheta\left(F\right)\right].\]
\end{lemma}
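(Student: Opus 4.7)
The plan is to reduce $R(E,K,K,F)$ to a second derivative of $|K|^2$ plus a quadratic term in $\nabla K$, then evaluate that quadratic term via the electric--magnetic decomposition together with Lemma~\ref{important}.

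First, I would apply Lemma~\ref{killem} with the appropriate choice of arguments. Setting $X=E$, $Y=F$, $Z=K$ gives $h(\nabla^2_{E,F}K,K)=R(K,F,E,K)$, and the pair symmetry of the Riemann tensor then yields
\[R(E,K,K,F)=h(\nabla^2_{E,F}K,K).\]
Next, differentiate $|K|^2=h(K,K)$ twice via the product rule: since $\nabla_F|K|^2=2h(\nabla_F K,K)$, one finds
\[\tfrac{1}{2}\nabla^2_{E,F}|K|^2=h(\nabla^2_{E,F}K,K)+h(\nabla_E K,\nabla_F K),\]
and so
\[R(E,K,K,F)=\tfrac{1}{2}\nabla^2_{E,F}|K|^2-h(\nabla_E K,\nabla_F K).\]
Thus everything reduces to identifying $h(\nabla_E K,\nabla_F K)$ for horizontal $E,F$.

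For this, the key step is to decompose $\nabla_E K$ into its vertical part (along $K$) and horizontal part. The vertical part is dictated by the Killing identity $h(\nabla_E K,K)=\tfrac{1}{2}\nabla_E|K|^2$, so
\[\nabla_E K=\frac{\nabla_E|K|^2}{2|K|^2}K+V_E,\qquad h(V_E,K)=0.\]
For any horizontal $G$, Lemma~\ref{emagkil} gives
\[h(\nabla_E K,G)=(\nabla K_\flat)(E,G)=-\tfrac{1}{2}|K|^{-2}\,{*}(\vartheta\wedge K_\flat)(E,G),\]
because the term $(\nabla|K|^2)\wedge K_\flat$ vanishes when both slots are horizontal. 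Hence $V_E^{\flat}$, restricted to the horizontal distribution, equals $\tfrac12|K|^{-2}\,i_E\!*\!(\vartheta\wedge K_\flat)$. The orthogonality of the vertical/horizontal decomposition then gives
\[h(\nabla_E K,\nabla_F K)=\frac{(\nabla_E|K|^2)(\nabla_F|K|^2)}{4|K|^2}+h(V_E,V_F),\]
and since $i_K{*}(\vartheta\wedge K_\flat)=0$ (it lies in $\bigwedge^{2}(\mathcal{N})^{\perp}_K$), the horizontal inner product $h(V_E,V_F)$ coincides with $\tfrac{1}{4}|K|^{-4}h(i_E{*}(\vartheta\wedge K_\flat),i_F{*}(\vartheta\wedge K_\flat))$. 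Applying Lemma~\ref{important} converts this last expression into
\[h(V_E,V_F)=-\frac{1}{4|K|^{2}}\bigl[|\vartheta|^2 h(E,F)-\vartheta(E)\vartheta(F)\bigr].\]
Assembling the three contributions produces the stated formula.

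The conceptual obstacles are minor; the real work is bookkeeping. One must be careful with the sign conventions for $i_K$ and for $\alpha\wedge\beta$ in this paper (which carry no $1/k!$ factors), and with the fact that $\ast(\vartheta\wedge K_\flat)$ is horizontal in the sense $i_K\ast(\vartheta\wedge K_\flat)=0$, which is precisely what allows the substitution of Lemma~\ref{important} in the last step. No new geometric input beyond Lemmas~\ref{killem}, \ref{emagkil} and~\ref{important} is needed.
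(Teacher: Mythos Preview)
Your argument is correct and follows essentially the same route as the paper: both derive $R(E,K,K,F)=\tfrac12\nabla^2_{E,F}|K|^2-h(\nabla_EK,\nabla_FK)$ from the Killing identity of Lemma~\ref{killem} and the product rule, and then evaluate $h(\nabla_EK,\nabla_FK)$ by inserting the electric--magnetic decomposition of Lemma~\ref{emagkil} and invoking Lemma~\ref{important}. Your explicit vertical/horizontal splitting of $\nabla_EK$ is just a slightly more verbose way of writing the paper's one-line expansion $\nabla_EK=\tfrac12|K|^{-2}(\nabla_E|K|^2)K_\flat+\tfrac12|K|^{-2}i_E{*}(\vartheta\wedge K_\flat)$, so the two proofs coincide in substance.
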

\begin{proof}We start with the following identity:
\begin{align*}
\nabla^2_{E,F}\left|K\right|^2 &= 2h\left(\nabla^2_{E,F}K,F\right) + 2h\left(\nabla_EK,\nabla_FK\right).
\\ \nonumber &= 2R\left(E,K,K,F\right) + 2h\left(\nabla_EK,\nabla_F,K\right).
\end{align*}
Thus,
\begin{equation}\label{curvKK1}
R\left(E,K,K,F\right) = \frac{1}{2}\nabla^2_{E,F}\left|K\right|^2 - h\left(\nabla_EK,\nabla_FK\right).
\end{equation}
Now we expand
\[\nabla_EK = -i_E\left(\nabla K_{\flat}\right) = -\frac{1}{2}\left|K\right|^{-2}i_E\left(\nabla\left|K\right|^2\wedge K_{\flat} - *\left(\vartheta\wedge K_{\flat}\right)\right) = \frac{1}{2}\left|K\right|^{-2}\nabla_E\left|K\right|^2K_{\flat} + \frac{1}{2}\left|K\right|^{-2}i_E*\left(\vartheta\wedge K_{\flat}\right).\]
Using Lemma~\ref{important}, we then find that
\begin{align}\label{curvKK2}
h\left(\nabla_EK,\nabla_FK\right) &= \frac{1}{4}\left|K\right|^{-2}\left(\nabla_E\left|K\right|^2\right)\left(\nabla_F\left|K\right|^2\right) + \frac{1}{4}\left|K\right|^{-4}h\left(i_E*\left(\vartheta\wedge K_{\flat}\right),i_F*\left(\vartheta\wedge K_{\flat}\right)\right)
\\ \nonumber &= \frac{1}{4}\left|K\right|^{-2}\left(\nabla_E\left|K\right|^2\right)\left(\nabla_F\left|K\right|^2\right) -\frac{1}{4}\left|K\right|^{-2}\left[\left|\vartheta\right|^2h\left(E,F\right) - \vartheta\left(E\right)\vartheta\left(F\right)\right].
\end{align}
Combining this with~(\ref{curvKK1}) completes the proof.
\end{proof}

We are now ready to compute the Ricci curvature of $\overline{h}$.
\begin{proposition}\label{ricdecompose}
\begin{equation}
\overline{Ric}\left(\overline{E},\overline{F}\right) =
\end{equation}
\begin{equation*}
Ric\left(\overline{E},\overline{F}\right) + \frac{1}{2}\left|K\right|^{-2}\overline{\nabla}^2_{\overline{E},\overline{F}}\left|K\right|^2 - \frac{1}{4}\left|K\right|^{-4}\left(\overline{\nabla}_{\overline{E}}\left|K\right|^2\right)\left(\overline{\nabla}_{\overline{F}}\left|K\right|^2\right)-\frac{1}{2}\left|K\right|^{-4}\left[\left|\vartheta\right|^2\overline{h}\left(\overline{E},\overline{F}\right) -\vartheta\left(\overline{E}\right)\vartheta\left(\overline{F}\right)\right].
\end{equation*}
\end{proposition}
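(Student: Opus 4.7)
The plan is to take the trace of the submersion curvature formula~(\ref{curvsub}) over a horizontal orthonormal frame on $\overline{\mathcal{N}}$, and combine it with the explicit formula for $R(E, K, K, F)$ from Lemma~\ref{sortofimportant}. Concretely, I choose a local orthonormal frame $\{\overline{e}_0, \overline{e}_1, \overline{e}_2\}$ for $(\overline{\mathcal{N}}, \overline{h})$ with signs $\epsilon_i$, let $\{e_0, e_1, e_2\}$ denote its horizontal lift, and complete this to an orthonormal frame $\{e_0, e_1, e_2, K/|K|\}$ on $(\mathcal{N}, h)$. Summing~(\ref{curvsub}) against $\epsilon_i$ then expresses $\overline{Ric}(\overline{E}, \overline{F})$ as a contribution from $\sum_i \epsilon_i R(e_i, E, F, e_i)$ plus a contribution from the $D$-quadratic terms.

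For the Riemann piece, I complete the horizontal frame to the full orthonormal frame on $\mathcal{N}$ in order to write
\[
\sum_i \epsilon_i R(e_i, E, F, e_i) = Ric(E, F) - \frac{1}{|K|^2} R(K, E, F, K),
\]
and then use the Riemann pair-swap and antisymmetry identities to recognize $R(K, E, F, K) = R(E, K, K, F)$, which Lemma~\ref{sortofimportant} evaluates in terms of $\nabla^2 |K|^2$, $(\nabla |K|^2) \otimes (\nabla |K|^2)$, and $\vartheta$. Since $|K|^2$ is a basic function and the vertical component of $\nabla_E F$ annihilates $|K|^2$ (because $K(|K|^2)=0$), the Hessian $\nabla^2_{E,F}|K|^2$ coincides with $\overline{\nabla}^2_{\overline{E},\overline{F}}|K|^2$; the same reasoning identifies the gradient-squared factor with the analogous quantity on $\overline{\mathcal{N}}$.

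For the $D$-piece, I first note that the Lie bracket formula $D(E,F) = \tfrac{1}{2}|K|^{-2}h([E,F],K)$ from the preceding lemma makes $D$ antisymmetric in its two slots, so $\operatorname{tr}(D) = 0$ kills the middle term in the quadratic combination, and the antisymmetry lets me collapse $D(F,e_i)D(e_i,E)$ onto $D(E,e_i)D(e_i,F)$. The remaining sum is a contraction of the two horizontal $1$-forms $i_E *(\vartheta \wedge K_\flat)$ and $i_F *(\vartheta \wedge K_\flat)$; since these $1$-forms kill $K$, I may extend the sum over $i$ to a sum over the full orthonormal frame of $\mathcal{N}$ and apply Lemma~\ref{important} directly, obtaining a multiple of $|\vartheta|^2 \overline{h}(\overline{E},\overline{F}) - \vartheta(\overline{E})\vartheta(\overline{F})$.

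The main obstacle is bookkeeping: the paper's nonstandard sign convention for $R$, the Lorentzian signs $\epsilon_i$, the antisymmetry of $D$, and the curvature symmetries all have to be tracked carefully so that the $\vartheta$-contribution coming out of $R(E, K, K, F)$ combines constructively with the $\vartheta$-contribution from the $D$-sum to produce exactly the single coefficient $-\tfrac{1}{2}|K|^{-4}$ in the statement of the proposition (rather than cancelling or producing a spurious factor). Once those signs are reconciled, everything else is a direct substitution.
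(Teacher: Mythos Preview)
Your proposal is correct and follows essentially the same route as the paper: trace the submersion formula~(\ref{curvsub}) over a horizontal orthonormal frame, complete the Riemann trace to the full $4$-frame to pull out $Ric(E,F)$ plus the $|K|^{-2}R(E,K,K,F)$ term handled by Lemma~\ref{sortofimportant}, identify the Hessian of $|K|^2$ on $\mathcal{N}$ with that on $\overline{\mathcal{N}}$, and evaluate the quadratic $D$-sum via the expression $D=|K|^{-4}*(\vartheta\wedge K_{\flat})$ together with Lemma~\ref{important}. The paper carries out exactly these steps in the same order (with the Hessian identity isolated as~(\ref{hesssame})); the only difference is cosmetic bookkeeping in how the trace slots and signs are organized.
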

\begin{proof}First of all, since $\left|K\right|^2$ is basic, it may be considered to be a function on both $\overline{\mathcal{N}}$ and $\mathcal{N}$. With this abuse of notation in mind, for any vector field $Z$ on $\mathcal{N}$, we have
\[Z\left|K\right|^2 = \left(\pi_*Z\right)\left|K\right|^2.\]
In particular, letting $E$ and $F$ be horizontal lifts of $\overline{E}$ and $\overline{F}$, we have
\begin{align}\label{hesssame}
\nabla^2_{E,F}\left|K\right|^2 &= E\left(F\left|K\right|^2\right) - \nabla_EF\left(\left|K\right|^2\right)
\\ \nonumber &= \overline{E}\left(\overline{F}\left|K\right|^2\right) - \pi_*\left(\nabla_EF\right)\left(\left|K\right|^2\right)
\\ \nonumber &= \overline{E}\left(\overline{F}\left|K\right|^2\right) - \overline{\nabla}_EF\left(\left|K\right|^2\right)
\\ \nonumber &= \overline{\nabla}^2_{\overline{E},\overline{F}}\left|K\right|^2.
\end{align}

Using~(\ref{hesssame}) and additionally Lemma~\ref{theyarehorizontal} we see that the proposition will follow if we establish
\begin{equation}
\overline{Ric}\left(\overline{E},\overline{F}\right) =
\end{equation}
\begin{equation*}
Ric\left(E,F\right) + \frac{1}{2}\left|K\right|^{-2}\nabla^2_{E,F}\left|K\right|^2 - \frac{1}{4}\left|K\right|^{-4}\left(\nabla_{E}\left|K\right|^2\right)\left(\nabla_{F}\left|K\right|^2\right)-\frac{1}{2}\left|K\right|^{-4}\left[\left|\vartheta\right|^2h\left(E,F\right) -\vartheta\left(E\right)\vartheta\left(F\right)\right].
\end{equation*}

Letting $\{G_i\}$ denote the horizontal lift of a local orthonormal frame $\{\overline{G}_i\}$ in $T\overline{\mathcal{N}}$, tracing the formula~(\ref{curvsub}) and using Lemma~\ref{sortofimportant} yields
\begin{align}\label{curvsub2}
\overline{Ric}\left(\overline{E},\overline{F}\right) &= Ric\left(E,F\right) + \left|K\right|^{-2}R\left(E,K,K,F\right) - \sum_{i=1}^33\left|K\right|^2D\left(E,G_i\right)D\left(G_i,F\right)
\\ \nonumber &= Ric\left(E,F\right) + \left|K\right|^{-2}\frac{1}{2}\nabla^2_{E,F}\left|K\right|^2 - \frac{1}{4}\left|K\right|^{-4}\left(\nabla_E\left|K\right|^2\right)\left(\nabla_F\left|K\right|^2\right) \\ \nonumber &\qquad \qquad +\frac{1}{4}\left|K\right|^{-4}\left[\left|\vartheta\right|^2h\left(E,F\right) - \vartheta\left(E\right)\vartheta\left(F\right)\right]
\\ \nonumber &\qquad \qquad +3\left|K\right|^{-6}h\left(i_{E}D,i_{F}D\right)
\\ \nonumber &= Ric\left(E,F\right) + \frac{1}{2}\left|K\right|^{-2}\nabla^2_{E,F}\left|K\right|^2 - \frac{1}{4}\left|K\right|^{-4}\left(\nabla_E\left|K\right|^2\right)\left(\nabla_F\left|K\right|^2\right)
\\ \nonumber & \qquad \qquad -\frac{1}{2}\left|K\right|^{-4}\left[\left|\vartheta\right|^2h\left(E,F\right) - \vartheta\left(E\right)\vartheta\left(F\right)\right].\qedhere
\end{align}

\end{proof}

Finally, we can get rid of the term proportional to the Hessian of $|K|^2$ by carrying out a conformal transformation.

\begin{proposition}\label{hatRic}Define a new metric $\hat h \doteq \left|K\right|^2 \overline{h}$. Then we have
\begin{align*}
\hat{Ric}\left(\overline{E},\overline{F}\right) &= Ric\left(\overline{E},\overline{F}\right) + \left|K\right|^{-2}Ric\left(K,K\right)\overline{h}\left(\overline{E},\overline{F}\right) +  \\ \nonumber&\qquad \frac{1}{2}\left|K\right|^{-4}\left[\left(\hat{\nabla}_{\overline{E}}\left|K\right|^2\right)\left(\hat{\nabla}_{\overline{F}}\left|K\right|^2\right) + \vartheta\left(\overline{E}\right)\vartheta\left(\overline{F}\right)\right].
\end{align*}
\end{proposition}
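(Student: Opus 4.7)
The plan is to combine the standard conformal change formula for the Ricci tensor with Proposition \ref{ricdecompose}, and then use Lemma \ref{bochner} to trade the remaining Laplacian of $|K|^2$ for a $\mathrm{Ric}(K,K)$ term. Concretely, since $\overline{\mathcal{N}}$ is $3$-dimensional and $\hat h = e^{2f}\overline h$ with $f = \frac{1}{2}\log |K|^2$, the standard formula
\[
\hat{\mathrm{Ric}}(\overline E,\overline F) = \overline{\mathrm{Ric}}(\overline E,\overline F) - \overline\nabla^2_{\overline E,\overline F} f + \overline\nabla_{\overline E} f \cdot \overline\nabla_{\overline F} f - \bigl(\overline\Delta f + |\overline\nabla f|^2 \bigr) \overline h(\overline E,\overline F)
\]
(which simplifies in dimension three) expresses $\hat{\mathrm{Ric}}$ in terms of $\overline{\mathrm{Ric}}$, the Hessian and gradient of $|K|^2$, and $\overline{\Delta}|K|^2$. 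Rewriting in terms of $\phi = |K|^2$ is an elementary calculation.

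Next I would substitute the expression for $\overline{\mathrm{Ric}}$ supplied by Proposition \ref{ricdecompose}. The key cancellation is that the $-\tfrac{1}{2}|K|^{-2}\overline\nabla^2 |K|^2$ term produced by the conformal change exactly kills the $+\tfrac{1}{2}|K|^{-2}\overline\nabla^2 |K|^2$ term appearing in Proposition \ref{ricdecompose}. After collecting the remaining $|\overline\nabla|K|^2|^2$ contributions and the $\vartheta \otimes \vartheta$ and $|\vartheta|^2 \overline h$ pieces, what survives has the form
\[
\mathrm{Ric}(\overline E,\overline F) + \tfrac{1}{2}|K|^{-4}\bigl[\overline\nabla_{\overline E}|K|^2\,\overline\nabla_{\overline F}|K|^2 + \vartheta(\overline E)\vartheta(\overline F)\bigr] + \Lambda\, \overline h(\overline E,\overline F),
\]
where $\Lambda$ is a scalar built from $|\vartheta|^2$, $|\overline\nabla|K|^2|^2$, and $\overline\Delta|K|^2$.

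The remaining task is to show $\Lambda = |K|^{-2}\mathrm{Ric}(K,K)$. Here I would use Lemma \ref{bochner}, which computes $\Box |K|^2$ on $(\mathcal{N},h)$, together with the submersion identity relating $\Box$ and $\overline\Delta$ on basic functions. Since the fiber is a $K$-orbit with spacelike unit tangent $K/|K|$ and since $K$ is Killing one has $\nabla_K K = -\tfrac{1}{2}\overline\nabla|K|^2$, whence for any basic function $\phi$,
\[
\Box\phi = \overline\Delta\phi + \tfrac{1}{2}|K|^{-2}\overline h(\overline\nabla|K|^2,\overline\nabla\phi).
\]
Applying this with $\phi=|K|^2$ (and noting $|\nabla|K|^2|^2 = |\overline\nabla|K|^2|^2$ because $|K|^2$ is basic) converts Lemma \ref{bochner} into an expression for $\overline\Delta|K|^2$ in terms of $\mathrm{Ric}(K,K)$, $|\vartheta|^2$, and $|\overline\nabla|K|^2|^2$; substituting this into $\Lambda$ yields the desired collapse.

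The main obstacle is essentially bookkeeping: keeping track of signs and factors in the dimension-three conformal change formula, and correctly handling the passage between $\Box$ on $\mathcal{N}$ and $\overline\Delta$ on $\overline{\mathcal{N}}$. Once the submersion-Laplacian relation is in hand and one is careful about the $\tfrac{1}{4\phi^2}$ versus $\tfrac{1}{2\phi^2}$ coefficients arising from the two sources (the conformal formula and Proposition \ref{ricdecompose}), the identification $\Lambda = |K|^{-2}\mathrm{Ric}(K,K)$ is forced.
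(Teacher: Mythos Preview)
Your proposal is correct and follows essentially the same route as the paper: both combine the three-dimensional conformal change formula for Ricci curvature with Proposition~\ref{ricdecompose} so that the Hessian term cancels, and then invoke Lemma~\ref{bochner} together with the identity $\nabla_K K = -\tfrac12\nabla|K|^2$ to express $\overline\Delta|K|^2$ in terms of $Ric(K,K)$. The only cosmetic difference is that the paper writes the conformal change directly as $\hat h = f\overline h$ with $f=|K|^2$, whereas you parametrize via $\hat h = e^{2f}\overline h$ with $f=\tfrac12\log|K|^2$; after rewriting, the two computations are identical.
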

\begin{proof}If we define a new metric $\hat h \doteq f\overline{h}$ on $\overline{\mathcal{N}}$, the Ricci curvature transforms according to the general formula
\[\hat{Ric} = \overline{Ric} - \frac{1}{f}\overline{\nabla}^2f - \frac{1}{2f}\left(\Box_{\overline{h}} f \right)\overline{h} + \frac{3}{4f^2}\overline{\nabla}f\otimes \overline{\nabla}f + \frac{1}{4f^2}\left|\overline{\nabla}f\right|^2\overline{h}.\]

If we set $f = \left|K\right|^2$ and use Proposition~\ref{ricdecompose}, then we obtain
\[\hat{Ric}\left(\overline{E},\overline{F}\right) = \]
\[Ric\left(\overline{E},\overline{F}\right) + \frac{1}{2}\left|K\right|^{-4}\left(\overline{\nabla}_{\overline{E}}\left|K\right|^2\right)\left(\overline{\nabla}_{\overline{F}}\left|K\right|^2\right) - \frac{1}{2}\left|K\right|^{-4}\left[\left|\vartheta\right|^2\overline{h}\left(\overline{E},\overline{F}\right) - \vartheta\left(\overline{E}\right)\vartheta\left(\overline{F}\right)\right]\]
 \[- \frac{1}{2}\left|K\right|^{-2}\left(\Box_{\overline{h}}\left|K\right|^2\right)\overline{h}\left(\overline{E},\overline{F}\right) + \frac{1}{4}\left|K\right|^{-4}\left|\overline{\nabla}\left|K\right|^2\right|^2\overline{h}\left(\overline{E},\overline{F}\right).\]

Next, using Lemma~\ref{bochner}, and the fact that $\nabla_KK_{\flat} = -\frac{1}{2}\nabla\left|K\right|^2$ (this follows immediately from the Killing equation) we compute
\begin{align*}
\Box_{\overline{h}}\left|K\right|^2 &= \Box_h\left|K\right|^2 - \frac{1}{\left|K\right|^2}\nabla^2_{K,K}\left|K\right|^2
\\ \nonumber &= -2Ric\left(K,K\right) + \left|K\right|^{-2}\left[\left|\overline{\nabla}\left|K\right|^2\right|^2 - \left|\vartheta\right|^2\right] + \left|K\right|^{-2}\left(\nabla_KK\right)\left|K\right|^2
\\ \nonumber &= -2Ric\left(K,K\right) + \left|K\right|^{-2}\left[\frac{1}{2}\left|\overline{\nabla}\left|K\right|^2\right|^2 - \left|\vartheta\right|^2\right].
\end{align*}
Plugging this into the formula above finishes the proof.

\end{proof}
\subsection{Prescribing the Ricci Curvature for a $3$-dimensional Lorentzian Manifold Under an Orthogonal Timelike Translational Symmetry with a Conformally Flat Quotient}\label{3to2}
In this section we will consider a $3$-dimensional Lorentzian manifold $\left(\hat{\mathcal{N}},\hat{h}\right)$ which admits an smooth, free, and proper isometric action by $\mathbb{R}$ with \emph{globally hypersurface orthogonal} timelike orbits. We let $U$ denote the infinitesimal generator of the symmetry, and define $Q \doteq \sqrt{-\hat{h}\left(U,U\right)}$. Quotienting out by the orbits of the group action defines a smooth manifold $\slashed{N}$ and a projection map
\[\pi : \hat{\mathcal{N}} \to \slashed{N}.\]
Since $U$ is assumed to be globally hypersurface orthogonal, we may define (non-uniquely) an inclusion map
\[i: \slashed{N} \to \hat{\mathcal{N}}\]
which will satisfy $\pi \circ i = {\rm Id}$. Fixing the inclusion map $i$ we now identify $\slashed{N}$ and $i\left(\slashed{N}\right)$. Since $\slashed{N}$ is now a submanifold of $\hat{\mathcal{N}}$ we get an induced metric $\slashed{h}$. We now make the further \emph{assumption} that $(\slashed{N},\slashed{h})$ is conformally flat in that there exists a single coordinate chart $(x,y) \in (0,\infty)\times \mathbb{R}$ covering $\slashed{N}$ and that there exists a function $f : (0,\infty) \times \mathbb{R} \to \mathbb{R}$ so that in the $(x,y)$ coordinate system
\[\slashed{h} = e^{2f}\left(dx^2 + dy^2\right),\footnote{We use coordinates in $(0,\infty)\times \mathbb{R}$ rather than $\mathbb{R}^2$ since this will fit naturally with other aspects of our gauge. However, the calculations of this section will all be local, so the range of $x$ and $y$ will not in fact enter into any of our arguments.} \]
and thus
\[\hat{h} = -Q^2dU^2 + e^{2f}\left(dx^2+dy^2\right).\]
Note that $Q>0$ on $\hat{\mathcal{N}}$.

The goal of this section is to understand to what extent the Ricci curvature of $\hat{h}$ may be freely prescribed. General vector fields, the connection, and curvature quantities associated to $\hat{h}$ will be hatted and general vector fields, the connection, and curvature quantities associated to $\slashed{h}$ will be slashed.

We begin with the (trivial) observation that $\slashed{N}$ is totally geodesic and hence that the second fundamental form vanishes.

An immediate corollary of the vanishing of the second fundamental form and Codazzi's equation (see~\cite{doCarmo}) is the following.
\begin{corollary}\label{hatRiemU}Let $\slashed{E}$, $\slashed{F}$, and $\slashed{G}$ be vector fields in $T\slashed{N}$. Then
\[\hat{R}\left(U,\slashed{E},\slashed{F},\slashed{G}\right) = 0.\]
In particular, $\hat{Ric}\left(U,\partial_x\right) = \hat{Ric}\left(U,\partial_y\right) = 0$.
\end{corollary}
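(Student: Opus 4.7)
The plan is to combine the Codazzi--Mainardi equation for the hypersurface $i(\slashed{N}) \hookrightarrow \hat{\mathcal{N}}$, the vanishing of its second fundamental form (the ``trivial observation'' just recorded in the text), and the algebraic symmetries of the $(0,4)$-Riemann tensor. The hypersurface-orthogonality of $U$ along $i(\slashed{N})$ ensures that $U/Q$ is the unit normal, so any $\slashed{E}, \slashed{F}, \slashed{G}$ tangent to $\slashed{N}$ are automatically $\hat{h}$-orthogonal to $U$.

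Writing $\Pi$ for the scalar second fundamental form in the direction $U/Q$, the Codazzi equation reads
\[
\hat R(\slashed{E}, \slashed{F}, \slashed{G}, U/Q) = (\hat\nabla_{\slashed{E}} \Pi)(\slashed{F}, \slashed{G}) - (\hat\nabla_{\slashed{F}} \Pi)(\slashed{E}, \slashed{G})
\]
(up to an overall sign depending on convention). Since $\Pi \equiv 0$ along $\slashed{N}$, its tangential covariant derivatives also vanish and the right-hand side is identically zero, giving $\hat R(\slashed{E}, \slashed{F}, \slashed{G}, U) = 0$. Combining this with the standard symmetries $\hat R(A,B,C,D) = -\hat R(A,B,D,C)$ and $\hat R(A,B,C,D) = \hat R(C,D,A,B)$ yields
\[
\hat R(U, \slashed{E}, \slashed{F}, \slashed{G}) = \hat R(\slashed{F}, \slashed{G}, U, \slashed{E}) = -\hat R(\slashed{F}, \slashed{G}, \slashed{E}, U) = 0,
\]
which is the principal claim.

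For the Ricci part I would pick the adapted orthonormal frame $\{U/Q,\, e^{-f}\partial_x,\, e^{-f}\partial_y\}$ and expand
\[
\hat{Ric}(U, \partial_x) = -\hat R(U/Q, U, \partial_x, U/Q) + e^{-2f} \sum_{j \in \{x, y\}} \hat R(\partial_j, U, \partial_x, \partial_j).
\]
The first term vanishes by antisymmetry in the first pair of slots, and each term in the sum may be rewritten via pair-swap as $\hat R(\partial_x, \partial_j, \partial_j, U)$. Now the first three arguments are all tangent to $\slashed{N}$ and the last is $U$, so each of these vanishes by the first part of the proof. The identical argument handles $\hat{Ric}(U, \partial_y)$. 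There is no genuine obstacle; the only care needed is to keep track of the paper's opposite convention $R(X,Y)Z = \nabla_Y\nabla_X Z - \nabla_X\nabla_Y Z - \nabla_{[X,Y]}Z$, which flips an overall sign but preserves the algebraic symmetries being used.
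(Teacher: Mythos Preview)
Your proof is correct and follows essentially the same approach as the paper, which simply cites the Codazzi equation together with the vanishing of the second fundamental form. You have merely spelled out the details of the curvature symmetry manipulation and the orthonormal-frame trace for the Ricci part.
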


In order to determine further constraints of $\hat{Ric}$ it will be useful to fully calculate $\hat{\nabla}$. The following lemma follows from a straightforward calculation.
\begin{lemma}\label{derU}We have
\begin{align*}
\hat{\nabla}_UU = Qe^{-2f}\left[\partial_xQ\partial_x + \partial_yQ\partial_y\right],&\qquad \hat{\nabla}_xU = Q^{-1}(\partial_xQ)U,
\\ \nonumber \hat{\nabla}_yU = Q^{-1}(\partial_yQ)U,&\qquad \hat{\nabla}_U\partial_x = Q^{-1}(\partial_xQ)U,
\\ \nonumber \hat{\nabla}_U\partial_y = Q^{-1}(\partial_yQ) U,&\qquad \hat{\nabla}_{\partial_x}\partial_x = \partial_xf\partial_x - \partial_yf\partial_y,
\\ \nonumber \hat{\nabla}_{\partial_y}\partial_y = - \partial_xf\partial_x + \partial_yf\partial_y ,&\qquad \hat{\nabla}_{\partial_x}\partial_y = \partial_yf\partial_x + \partial_xf\partial_y.
\end{align*}
\end{lemma}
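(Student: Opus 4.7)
The plan is to carry out a direct coordinate calculation. Since $U$ is a timelike Killing vector with globally hypersurface-orthogonal orbits and the quotient $(\slashed{N},\slashed{h})$ admits the global conformally flat chart $(x,y)$, I can introduce a coordinate $t$ along the orbits of $U$ with $U = \partial_t$ and such that the level sets $\{t = \text{const}\}$ are the orthogonal hypersurfaces identified with $\slashed{N}$ via the inclusion $i$. In these coordinates the metric takes the form
\[\hat{h} = -Q^{2}\,dt^{2} + e^{2f}(dx^{2}+dy^{2}),\]
with $Q,f$ depending only on $(x,y)$ (they are basic, since $Q^{2} = -\hat{h}(U,U)$ and $e^{2f}$ descend to $\slashed{N}$).

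The key observation is that all three coordinate vector fields $\partial_{t}, \partial_{x}, \partial_{y}$ commute, so the Koszul formula collapses to
\[2\hat{h}(\hat{\nabla}_{\partial_{i}}\partial_{j},\partial_{k}) = \partial_{i}\hat{h}(\partial_{j},\partial_{k}) + \partial_{j}\hat{h}(\partial_{i},\partial_{k}) - \partial_{k}\hat{h}(\partial_{i},\partial_{j}).\]
Equivalently, one reads off the nonzero Christoffel symbols of a diagonal metric: with $g_{tt} = -Q^{2}$ and $g_{xx} = g_{yy} = e^{2f}$, the inverse is $g^{tt} = -Q^{-2}$ and $g^{xx} = g^{yy} = e^{-2f}$, and all metric components are $t$-independent.

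From here I would simply work through the eight formulas case by case. For example, $\Gamma^{x}_{tt} = -\frac{1}{2}g^{xx}\partial_{x}g_{tt} = e^{-2f}Q\partial_{x}Q$ and similarly $\Gamma^{y}_{tt} = e^{-2f}Q\partial_{y}Q$, yielding $\hat{\nabla}_{U}U = Qe^{-2f}(\partial_{x}Q\,\partial_{x} + \partial_{y}Q\,\partial_{y})$. Likewise $\Gamma^{t}_{xt} = \tfrac{1}{2}g^{tt}\partial_{x}g_{tt} = Q^{-1}\partial_{x}Q$ gives $\hat{\nabla}_{\partial_{x}}U = Q^{-1}(\partial_{x}Q)U$, and $\hat{\nabla}_{U}\partial_{x} = \hat{\nabla}_{\partial_{x}}U$ since torsion-freeness combined with $[U,\partial_{x}] = 0$ forces these to agree (this handles two of the identities at once). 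The remaining three derivatives $\hat{\nabla}_{\partial_{x}}\partial_{x}, \hat{\nabla}_{\partial_{y}}\partial_{y}, \hat{\nabla}_{\partial_{x}}\partial_{y}$ follow from the standard Christoffel symbols of a $2$-dimensional conformal metric $e^{2f}(dx^{2}+dy^{2})$, since the mixed symbols $\Gamma^{t}_{xx}$, etc., all involve derivatives of $g_{tt}$ in the $t$ direction and therefore vanish.

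There is no real obstacle: the argument is a routine unwinding of the Koszul formula. The only care needed is to verify that one may globally choose the coordinate $t$ (which uses the hypersurface-orthogonality and global inclusion $i$) and that $Q, f$ are genuinely $t$-independent (which uses that $U$ is Killing and the metric descends as claimed).
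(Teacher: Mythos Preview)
Your proposal is correct and is precisely the ``straightforward calculation'' the paper alludes to (the paper gives no further details). Your introduction of the coordinate $t$ with $U=\partial_t$ simply makes explicit the meaning of the paper's notation $dU^2$, and the Koszul/Christoffel computation you outline is the intended argument.
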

Now we are ready to derive a fundamental equation linking $\hat{\nabla}^2Q$ to $\hat{R}$.
\begin{proposition}\label{curvtohess}We have
\[\hat{R}\left(U,\slashed{E},U,\slashed{F}\right) = Q\hat{\nabla}^2_{\slashed{E},\slashed{F}}Q.\]
\end{proposition}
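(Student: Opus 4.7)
The plan is to obtain the identity from the Killing equation, using Lemma~\ref{killem} as the starting point, and then evaluate the resulting Hessian $\hat\nabla^{2}_{\slashed E,\slashed F}U$ directly via the explicit Christoffel symbols of Lemma~\ref{derU}. Since $U$ is Killing, Lemma~\ref{killem} specialized to $X=\slashed E$, $Y=\slashed F$, $K=U$, $Z=U$ gives
\[
\hat h\!\left(\hat\nabla^{2}_{\slashed E,\slashed F}U,U\right)=\hat R\!\left(U,\slashed F,\slashed E,U\right).
\]
Applying the usual Riemann symmetries (pair exchange followed by antisymmetry in the last two slots, both of which hold for the paper's curvature convention), the right-hand side equals $-\hat R(U,\slashed E,U,\slashed F)$. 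Thus it suffices to establish
\[
\hat h\!\left(\hat\nabla^{2}_{\slashed E,\slashed F}U,U\right)=-Q\,\hat\nabla^{2}_{\slashed E,\slashed F}Q.
\]

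To do this, I would compute $\hat\nabla^{2}_{\slashed E,\slashed F}U$ explicitly. Writing horizontal fields as linear combinations of $\partial_x$ and $\partial_y$, Lemma~\ref{derU} immediately gives $\hat\nabla_{\slashed F}U=Q^{-1}\slashed F(Q)\,U$ for any horizontal $\slashed F$. Then
\[
\hat\nabla_{\slashed E}\hat\nabla_{\slashed F}U=\slashed E\!\left(Q^{-1}\slashed F(Q)\right)U+Q^{-1}\slashed F(Q)\,\hat\nabla_{\slashed E}U,
\]
and the $-Q^{-2}\slashed E(Q)\slashed F(Q)\,U$ from differentiating $Q^{-1}$ exactly cancels the $Q^{-2}\slashed F(Q)\slashed E(Q)\,U$ coming from $\hat\nabla_{\slashed E}U=Q^{-1}\slashed E(Q)U$, leaving $Q^{-1}\slashed E(\slashed F(Q))\,U$. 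Because $\slashed N$ is totally geodesic in $\hat{\mathcal N}$ (the trivial observation noted just before Corollary~\ref{hatRiemU}), the field $\hat\nabla_{\slashed E}\slashed F$ is again horizontal, so the same formula applies and $\hat\nabla_{\hat\nabla_{\slashed E}\slashed F}U=Q^{-1}(\hat\nabla_{\slashed E}\slashed F)(Q)\,U$. Subtracting yields
\[
\hat\nabla^{2}_{\slashed E,\slashed F}U=Q^{-1}\bigl(\hat\nabla^{2}_{\slashed E,\slashed F}Q\bigr)\,U,
\]
and pairing against $U$ with $\hat h(U,U)=-Q^{2}$ gives exactly $-Q\,\hat\nabla^{2}_{\slashed E,\slashed F}Q$, closing the argument.

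There is no substantial obstacle here; the one point that demands care is the sign bookkeeping, since the paper's convention $\hat R(X,Y)Z=\hat\nabla_{Y}\hat\nabla_{X}Z-\hat\nabla_{X}\hat\nabla_{Y}Z-\hat\nabla_{[X,Y]}Z$ is opposite to some standard references, and the reduction above uses the statement of Lemma~\ref{killem} in that convention together with pair-exchange and antisymmetry. A short sanity check, namely verifying the identity directly for $\slashed E=\slashed F=\partial_{x}$ by brute force from Lemma~\ref{derU}, confirms that no factor of two or sign has been dropped.
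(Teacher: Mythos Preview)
Your argument is correct. The route differs slightly from the paper's: the paper expands $\hat R(\slashed E,U,U,\slashed F)$ directly from the definition of the curvature tensor, computing $\hat h(\hat\nabla_U\hat\nabla_{\slashed E}U,\slashed F)$ and $\hat h(\hat\nabla_{\slashed E}\hat\nabla_UU,\slashed F)$ term by term via Lemma~\ref{derU} and using $[U,\slashed E]=0$; you instead invoke the Killing identity of Lemma~\ref{killem} to convert the curvature into $\hat h(\hat\nabla^{2}_{\slashed E,\slashed F}U,U)$ and then compute that Hessian. Both approaches reduce to the same explicit Christoffel data, but your version trades the vanishing of $[U,\slashed E]$ for the totally geodesic fact that $\hat\nabla_{\slashed E}\slashed F$ is horizontal. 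The approaches are essentially equivalent in length and difficulty; yours has the mild advantage of making transparent that the result is really just the Killing--Hessian formula applied in this special geometry.
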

\begin{proof}We have
\[\hat{R}\left(\slashed{E},U,U,\slashed{F}\right) = \hat{h}\left(\hat{\nabla}_U\hat{\nabla}_{\slashed{E}}U - \hat{\nabla}_{\slashed{E}}\hat{\nabla}_UU - \hat{\nabla}_{[U,\slashed{E}]}U,\slashed{F}\right).\]

First of all, $[U,\slashed{E}] = 0$, so that in particular,
\[-\hat{h}\left(\hat{\nabla}_{[U,\slashed{E}]}U,\slashed{F}\right) = 0.\]

Next, using Lemma~\ref{derU} we compute
\[\hat{h}\left(\hat{\nabla}_U\hat{\nabla}_{\slashed{E}}U,\slashed{F}\right) = \hat{h}\left(\hat{\nabla}_U\left(Q^{-1}\left(\hat{\nabla}_{\slashed{E}}Q\right)U\right),\slashed{F}\right) = \left(\hat{\nabla}_{\slashed{E}}Q\right)\left(\hat{\nabla}_{\slashed{F}}Q\right).\]

Then, using Lemma~\ref{derU} again, we compute

\[-\hat{h}\left(\hat{\nabla}_{\slashed{E}}\hat{\nabla}_UU,\slashed{F}\right) = -\hat{h}\left(\hat{\nabla}_{\slashed{E}}\left(Q\left(\hat{\nabla} Q\right)\right),\slashed{F}\right) = -\left(\hat{\nabla}_{\slashed{E}}Q\right)\left(\hat{\nabla}_{\slashed{F}}Q\right) - Q\hat{\nabla}_{\slashed{E},\slashed{F}}Q.\]

Adding these three equations together yields the result.
\end{proof}

Keeping in mind that $\hat{\nabla} = \slashed{\nabla}$ (since the second fundamental form of $\slashed{N}$ vanishes), tracing the statement of Proposition~\ref{curvtohess} immediately yields the following corollary.
\begin{corollary}\label{deltQRic}We have
\[Q\slashed{\Delta}Q = \hat{Ric}\left(U,U\right).\]
Equivalently,
\[Q\left(\partial_x^2Q + \partial_y^2Q\right) = e^{2f}\hat{Ric}\left(U,U\right).\]
\end{corollary}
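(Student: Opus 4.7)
The plan is to trace the statement of Proposition~\ref{curvtohess} over the two spacelike directions tangent to $\slashed{N}$ and identify the result with $Q \slashed{\Delta} Q$.

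First I would choose a local orthonormal frame $\{E_0, E_1, E_2\}$ for $(\hat{\mathcal{N}}, \hat{h})$ with $E_0 = Q^{-1} U$ and $E_1, E_2$ spanning $T\slashed{N}$; concretely one may take $E_1 = e^{-f}\partial_x$ and $E_2 = e^{-f}\partial_y$. The $UU$-component of the Ricci tensor is then the signed sum of $\hat{R}(E_a, U, U, E_a)$ over $a$. The $a = 0$ term vanishes by the antisymmetry of the Riemann tensor in its first two slots. Proposition~\ref{curvtohess} identifies each of the remaining two terms with a multiple of $Q \hat{\nabla}^2_{E_i, E_i} Q$.

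Next I would invoke the observation preceding the statement that $\slashed{N}$ is totally geodesic in $\hat{\mathcal{N}}$, so that $\hat{\nabla}_{\slashed{E}} \slashed{F} = \slashed{\nabla}_{\slashed{E}} \slashed{F}$ whenever $\slashed{E}, \slashed{F} \in T\slashed{N}$. Since $U$ is Killing, $Q$ is $U$-invariant and therefore descends to a function on $\slashed{N}$, and the coincidence of the connections just noted implies $\hat{\nabla}^2_{\slashed{E}, \slashed{F}} Q = \slashed{\nabla}^2_{\slashed{E}, \slashed{F}} Q$ on tangential arguments. Summing $i = 1, 2$ therefore produces $Q$ times the trace of $\slashed{\nabla}^2 Q$ over an orthonormal basis of $T\slashed{N}$, which is $Q \slashed{\Delta} Q$ by definition of the Laplacian on $(\slashed{N}, \slashed{h})$.

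The second form of the statement then follows by inserting the coordinate expression $\slashed{h} = e^{2f}(dx^2 + dy^2)$ into the standard formula $\slashed{\Delta} Q = |\slashed{h}|^{-1/2} \partial_a (|\slashed{h}|^{1/2} \slashed{h}^{ab} \partial_b Q)$, which collapses on a conformally flat $2$-surface to $\slashed{\Delta} Q = e^{-2f}(\partial_x^2 Q + \partial_y^2 Q)$; multiplying through by $Q e^{2f}$ rearranges this into the displayed equation. The only genuine point of care is the signed-trace bookkeeping: with the paper's curvature conventions (see Lemma~\ref{killem}) one must confirm that the spacelike trace of $\hat{R}(E_i, U, U, E_i)$ gives $+\hat{Ric}(U,U)$ with no extra sign. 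Once this is pinned down, Proposition~\ref{curvtohess} together with the total geodesy of $\slashed{N}$ does all of the real work.
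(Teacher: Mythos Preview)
Your proposal is correct and follows exactly the paper's own argument: the paper simply states that the corollary follows by tracing Proposition~\ref{curvtohess} and using that $\hat{\nabla} = \slashed{\nabla}$ on $\slashed{N}$ (total geodesy), which is precisely what you do, only with the tracing and the coordinate Laplacian spelled out in more detail. Your flagging of the sign bookkeeping as the sole delicate point is apt, and the paper itself does not address it any more explicitly than you do.
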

\begin{remark}\label{Qequalx}In the case when $\hat{Ric}\left(U,U\right) = 0$ we see that $Q$ is harmonic with respect to the flat metric $dx^2 + dy^2$; in particular, there is no reference to $f$, and $Q$ satisfies a linear PDE. However, when $\hat{Ric}\left(U,U\right) \neq 0$, then the above Corollary implies that $Q$ is a solution to a non-linear PDE with a right hand side depending on $\hat{Ric}\left(U,U\right)$ and $f$.
\end{remark}

The final fundamental constraints on $\hat{Ric}$ arise from the requirement that the Einstein tensor be divergence free.

\begin{lemma}\label{bianchi-equations}The requirement that the Einstein tensor be divergence free is
\[\hat {\rm div}\left( \hat{Ric} - \frac{1}{2}\hat{R}\hat{h}\right) = 0.\]
Using that $U$ is Killing, a straightforward calculation shows that the $U$ component of the identity is vacuous. In contrast, the $\partial_x$ and $\partial_y$ components take the following form when written out in coordinates.
\begin{align}\label{ein1}
\frac{1}{2}e^{-2f}&\partial_x\hat{Ric}(\partial_x,\partial_x) -\frac{1}{2} e^{-2f}\partial_x\hat{Ric}\left(\partial_y,\partial_y\right) + \frac{1}{2}Q^{-2}\partial_x\hat{Ric}(U,U) \\ \nonumber &+ e^{-2f}\partial_y\hat{Ric}(\partial_x,\partial_y)
 +e^{-2f}Q^{-2}(\partial_xQ)\hat{Ric}(\partial_x,\partial_x) + Q^{-1}e^{-2f}(\partial_yQ)\hat{Ric}(\partial_x,\partial_y) = 0,
\end{align}
\begin{align}\label{ein2}
\frac{1}{2}e^{-2f}&\partial_y\hat{Ric}(\partial_y,\partial_y) -\frac{1}{2} e^{-2f}\partial_y\hat{Ric}\left(\partial_x,\partial_x\right) + \frac{1}{2}Q^{-2}\partial_y\hat{Ric}(U,U) \\ \nonumber &+ e^{-2f}\partial_x\hat{Ric}(\partial_x,\partial_y)
 +e^{-2f}Q^{-2}(\partial_yQ)\hat{Ric}(\partial_y,\partial_y) + Q^{-1}e^{-2f}(\partial_xQ)\hat{Ric}(\partial_x,\partial_y) = 0.
\end{align}
\end{lemma}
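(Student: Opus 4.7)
The plan is to expand the twice-contracted second Bianchi identity $\hat\nabla^\mu \hat G_{\mu\nu} = 0$, where $\hat G \doteq \hat{Ric} - \frac{1}{2}\hat R \hat h$ is the Einstein tensor, directly in the coordinate basis $(U,\partial_x,\partial_y)$. The inverse metric is diagonal with $\hat h^{UU} = -Q^{-2}$ and $\hat h^{xx} = \hat h^{yy} = e^{-2f}$, and all Christoffel symbols required are precisely those collected in Lemma \ref{derU}. Two preliminary observations reduce the bookkeeping: by Corollary \ref{hatRiemU} the only nonzero mixed Ricci components are within the $(\partial_x,\partial_y)$ block and $\hat{Ric}(U,U)$, so
\[
\hat R = -Q^{-2}\hat{Ric}(U,U) + e^{-2f}\bigl(\hat{Ric}(\partial_x,\partial_x) + \hat{Ric}(\partial_y,\partial_y)\bigr),
\]
and $\hat G(U,\partial_x) = \hat G(U,\partial_y) = 0$.

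For the $U$-component, I would write
\[
(\hat\nabla^\mu \hat G)_{\mu U} = \hat h^{\mu\nu}\bigl(\partial_\nu \hat G_{\mu U} - \Gamma^\lambda_{\nu\mu}\hat G_{\lambda U} - \Gamma^\lambda_{\nu U}\hat G_{\mu\lambda}\bigr).
\]
Since $U$ is Killing, every component of $\hat G$ is independent of the $U$-parameter, killing all $\partial_U$ derivatives. Combined with $\hat G_{Ux} = \hat G_{Uy} = 0$ and the vanishing of the Christoffel symbols $\Gamma^U_{UU}$, $\Gamma^U_{xx}$, $\Gamma^U_{yy}$ (read off from Lemma \ref{derU}), every surviving term pairs a nonzero Christoffel with a vanishing component of $\hat G$, so the identity reduces to $0=0$. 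This is the meaning of ``vacuous.''

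For the $\partial_x$-component, I would expand
\[
0 = -Q^{-2}\hat\nabla_U \hat G_{Ux} + e^{-2f}\hat\nabla_x \hat G_{xx} + e^{-2f}\hat\nabla_y \hat G_{yx},
\]
substitute $\hat G_{xx} = \hat{Ric}(\partial_x,\partial_x) - \frac{1}{2}\hat R e^{2f}$, $\hat G_{yy} = \hat{Ric}(\partial_y,\partial_y) - \frac{1}{2}\hat R e^{2f}$, $\hat G_{yx} = \hat{Ric}(\partial_y,\partial_x)$, and use Lemma \ref{derU} for all covariant derivatives. Differentiating $\hat R e^{2f}$ via the trace formula above produces exactly the three ``half'' terms $\tfrac{1}{2}e^{-2f}\partial_x\hat{Ric}(\partial_x,\partial_x)$, $-\tfrac{1}{2}e^{-2f}\partial_x\hat{Ric}(\partial_y,\partial_y)$, and $\tfrac{1}{2}Q^{-2}\partial_x\hat{Ric}(U,U)$. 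The connection terms involving $\partial_x f,\partial_y f$ arising from $\hat\nabla_x\hat G_{xx}$ and $\hat\nabla_y \hat G_{yx}$ are expected to cancel against those produced by the $\hat R e^{2f}$ trace, leaving only the $\partial_x Q, \partial_y Q$ contributions coming through $\Gamma^x_{UU} = Qe^{-2f}\partial_x Q$ and $\Gamma^U_{Ux} = Q^{-1}\partial_x Q$. After simplification these yield precisely \eqref{ein1}, and swapping $x \leftrightarrow y$ yields \eqref{ein2}.

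The main obstacle is algebraic bookkeeping in the last step: verifying that all $\partial_x f$ and $\partial_y f$ contributions from the Christoffel symbols cancel exactly against those from differentiating $\hat R e^{2f}$, so that the final equations depend only on $Q$ and its derivatives (and not on $f$). Once that cancellation is confirmed, the precise coefficients in \eqref{ein1} and \eqref{ein2} follow by collecting terms.
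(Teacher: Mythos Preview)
Your proposal is correct and follows essentially the same route as the paper. The only organizational difference is that the paper separates the divergence of $\hat{Ric}$ from $\tfrac{1}{2}\,d\hat R$ (computing $(\hat\nabla\hat{Ric})(U,U,\partial_x)$, $(\hat\nabla\hat{Ric})(\partial_x,\partial_x,\partial_x)$, $(\hat\nabla\hat{Ric})(\partial_y,\partial_y,\partial_x)$ and $\partial_x\hat R$ individually before combining), whereas you bundle these into the Einstein tensor $\hat G$; the $\partial f$ cancellation you anticipate is exactly the one that occurs in the paper's explicit computation.
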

\begin{proof}
\begin{align*}
\left(\hat\nabla \hat{Ric}\right)\left(U,U,\partial_x\right) &= -\hat{Ric}\left(\hat\nabla_UU,\partial_x\right) - \hat{Ric}\left(U,\hat\nabla_U\partial_x\right)
\\ \nonumber &= -e^{-2f}Q(\partial_xQ) \hat{Ric}\left(\partial_x,\partial_x\right) - e^{-2f}Q(\partial_yQ)Ric\left(\partial_x,\partial_y\right) - Q^{-1}(\partial_xQ)\hat{Ric(U,U)}.
\end{align*}
Next,
\begin{align*}
\left(\hat\nabla \hat{Ric}\right)\left(\partial_x,\partial_x,\partial_x\right) &= \partial_x\hat{Ric}\left(\partial_x,\partial_x\right) - 2\hat{Ric}\left(\hat\nabla_{\partial_x}\partial_x,\partial_x\right)
\\ \nonumber &= \partial_x\hat{Ric}\left(\partial_x,\partial_x\right) - 2\partial_xf\hat{Ric}\left(\partial_x,\partial_x\right) + 2\partial_yf\hat{Ric}\left(\partial_x,\partial_y\right).
\end{align*}
Then,
\begin{align*}
\left(\hat{\nabla}\hat{Ric}\right)\left(\partial_y,\partial_y,\partial_x\right) &= \partial_y\hat{Ric}\left(\partial_x,\partial_y\right) - \hat{Ric}\left(\hat{\nabla}_{\partial_y}\partial_y,\partial_x\right) - \hat{Ric}\left(\partial_y,\hat{\nabla}_{\partial_y}\partial_x\right)
\\ \nonumber &= \partial_y\hat{Ric}(\partial_x,\partial_y) - \partial_yf\hat{Ric}\left(\partial_y,\partial_x\right) + \partial_xf\hat{Ric}(\partial_x,\partial_x) - \partial_yf\hat{Ric}(\partial_x,\partial_y) - \partial_xf\hat{Ric}(\partial_y,\partial_y).
\end{align*}
Finally,
\begin{align*}
\partial_x\hat{R} &= \partial_x\left(-Q^{-2}\hat{Ric}(U,U) + e^{-2f}\hat{Ric}\left(\partial_x,\partial_x\right) + e^{-2f}\hat{Ric}\left(\partial_y,\partial_y\right)\right)
\\ \nonumber &= -Q^{-3}\partial_xQ\hat{Ric}(U,U)  - Q^{-2}\partial_x\hat{Ric}(U,U) - 2e^{-2f}\partial_xf\hat{Ric}(\partial_x,\partial_x)
 \\ \nonumber &\qquad \qquad \qquad + e^{-2f}\partial_x\hat{Ric}(\partial_x,\partial_x) - 2e^{-2f}\partial_xf\hat{Ric}(\partial_y,\partial_y) + e^{-2f}\partial_x\hat{Ric}(\partial_y,\partial_y).
\end{align*}

Putting everything together yields
\begin{align*}
0 &= -Q^{-2}(\hat{\nabla}\hat{Ric})(U,U,\partial_x) + e^{-2f}(\hat{\nabla}\hat{Ric})(\partial_x,\partial_x,\partial_x) + e^{-2f}(\hat{\nabla}\hat{Ric})(\partial_y,\partial_y,\partial_x) - \frac{1}{2}\partial_x\hat{R}
\\ \nonumber &= \frac{1}{2}e^{-2f}\partial_x\hat{Ric}(\partial_x,\partial_x) -\frac{1}{2} e^{-2f}\partial_x\hat{Ric}\left(\partial_y,\partial_y\right) + \frac{1}{2}Q^{-2}\partial_x\hat{Ric}(U,U) + e^{-2f}\partial_y\hat{Ric}(\partial_x,\partial_y)
\\ \nonumber &\qquad \qquad +e^{-2f}Q^{-2}\partial_xQ\hat{Ric}(\partial_x,\partial_x) + Q^{-1}e^{-2f}\partial_yQ\hat{Ric}(\partial_x,\partial_y)
\end{align*}
Reversing the roles of $x$ and $y$ yields
\begin{align*}
0 &= -Q^{-2}(\hat{\nabla}\hat{Ric})(U,U,\partial_y) + e^{-2f}(\hat{\nabla}\hat{Ric})(\partial_y,\partial_y,\partial_y) + e^{-2f}(\hat{\nabla}\hat{Ric})(\partial_x,\partial_x,\partial_y) - \frac{1}{2}\partial_y\hat{R}
\\ \nonumber &= \frac{1}{2}e^{-2f}\partial_y\hat{Ric}(\partial_y,\partial_y) -\frac{1}{2} e^{-2f}\partial_y\hat{Ric}\left(\partial_x,\partial_x\right) + \frac{1}{2}Q^{-2}\partial_y\hat{Ric}(U,U) + e^{-2f}\partial_x\hat{Ric}(\partial_y,\partial_x)
\\ \nonumber &\qquad \qquad +e^{-2f}Q^{-2}\partial_yQ\hat{Ric}(\partial_y,\partial_y) + Q^{-1}e^{-2f}\partial_xQ\hat{Ric}(\partial_x,\partial_y)\qedhere
\end{align*}
\end{proof}

\begin{remark}Note the possibly surprising fact that~(\ref{ein1}) and~(\ref{ein2}) do not involve $\nabla f$.
\end{remark}

As is well known, the scalar curvature $\slashed{R}$ determines the conformal factor $f$ via the Liouville equation. We record the Liouville equation for later use.
\begin{lemma}\label{liouville-f-eqn}
The conformal factor $f$ satisfies
\[
2(\partial^{2}_{x}f + \partial^{2}_{y}f) = Q^{-1}(\partial^{2}_{x}Q+\partial^{2}_{y}Q) - \hat{Ric}(\partial_{x},\partial_{x}) - \hat{Ric}(\partial_{y},\partial_{y})
\]
\end{lemma}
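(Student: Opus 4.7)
The proof will combine the Gauss equation on the totally geodesic slice $\slashed{N}$ with the classical 2-dimensional Liouville formula, in the same spirit as the derivation of Corollary~\ref{deltQRic} but now tracing the ambient Ricci tensor on the horizontal directions instead of the vertical one. The strategy is to express $\hat{Ric}(\partial_x, \partial_x) + \hat{Ric}(\partial_y, \partial_y)$ as (i) a piece involving the Hessian of $Q$ coming from Proposition~\ref{curvtohess}, plus (ii) the intrinsic 2D scalar curvature $\slashed{R}$ coming from the Gauss equation, and then to trade $\slashed{R}$ for $\partial_x^2 f + \partial_y^2 f$ via Liouville.

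First I would exploit that $\slashed{N}$ is totally geodesic in $\hat{\mathcal{N}}$ (Lemma~\ref{derU} shows that $\hat{\nabla}_{\partial_i}\partial_j$ has no component along $U$ for $i, j \in \{x, y\}$), so the Gauss equation gives $\hat{R}(\partial_x, \partial_y, \partial_x, \partial_y) = \slashed{R}(\partial_x, \partial_y, \partial_x, \partial_y)$. Using the inverse metric $\hat{h}^{-1} = -Q^{-2}\partial_U \otimes \partial_U + e^{-2f}(\partial_x \otimes \partial_x + \partial_y \otimes \partial_y)$, I would trace the Ricci tensor: for $i \neq j$ in $\{x, y\}$,
\[\hat{Ric}(\partial_i, \partial_i) = -Q^{-2}\hat{R}(U, \partial_i, U, \partial_i) + e^{-2f}\hat{R}(\partial_j, \partial_i, \partial_j, \partial_i).\]
Proposition~\ref{curvtohess} identifies the first term as $-Q^{-1}\hat{\nabla}^2_{\partial_i, \partial_i}Q$, and a short computation using the Christoffel symbols from Lemma~\ref{derU} shows that the Christoffel corrections to the two horizontal Hessians cancel upon summing in $i$, so that $\hat{\nabla}^2_{\partial_x, \partial_x}Q + \hat{\nabla}^2_{\partial_y, \partial_y}Q = \partial_x^2 Q + \partial_y^2 Q$. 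The second term equals $\slashed{R}(\partial_j, \partial_i, \partial_j, \partial_i)$ by the Gauss identity. Summing over $i$ yields
\[\hat{Ric}(\partial_x, \partial_x) + \hat{Ric}(\partial_y, \partial_y) = -Q^{-1}(\partial_x^2 Q + \partial_y^2 Q) + 2e^{-2f}\slashed{R}(\partial_x, \partial_y, \partial_x, \partial_y).\]

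It remains to apply the 2D Liouville formula for the conformally flat $\slashed{h} = e^{2f}(dx^2 + dy^2)$: one has $\slashed{R}(\partial_x, \partial_y, \partial_x, \partial_y) = -e^{2f}(\partial_x^2 f + \partial_y^2 f)$, equivalently $\slashed{R} = -2e^{-2f}(\partial_x^2 f + \partial_y^2 f)$. Substituting into the display above and solving for $2(\partial_x^2 f + \partial_y^2 f)$ produces the claimed identity. The only real obstacle is careful sign bookkeeping under the paper's Riemann convention $R(X,Y)Z = \nabla_Y\nabla_X Z - \nabla_X\nabla_Y Z - \nabla_{[X,Y]}Z$; the already-established relation $\hat{Ric}(U, U) = Q\slashed{\Delta} Q$ of Corollary~\ref{deltQRic} serves as a convenient internal sanity check to pin down the overall sign at each step of the trace and Gauss reductions.
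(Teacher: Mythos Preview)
Your approach is correct and is precisely the natural derivation. The paper itself does not supply a proof of this lemma: it simply introduces it with the sentence ``As is well known, the scalar curvature $\slashed{R}$ determines the conformal factor $f$ via the Liouville equation. We record the Liouville equation for later use,'' and then states the formula. Your argument---tracing $\hat{Ric}$ in the horizontal directions, invoking Proposition~\ref{curvtohess} for the $U$--$U$ curvature piece, using the Gauss equation on the totally geodesic slice $\slashed{N}$ to identify $\hat R(\partial_x,\partial_y,\partial_x,\partial_y)$ with its intrinsic counterpart, and then substituting the classical 2D Liouville relation for $\slashed{h}=e^{2f}(dx^2+dy^2)$---is exactly what the authors are tacitly invoking. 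Indeed, the Ricci decomposition you write down,
\[
\hat{Ric}(\partial_i,\partial_i) = -Q^{-2}\hat R(U,\partial_i,U,\partial_i) + e^{-2f}\hat R(\partial_j,\partial_i,\partial_j,\partial_i),
\]
appears verbatim (up to pair symmetry) in the paper's proof of the proposition immediately preceding Corollary~\ref{first-order-eqns-for-f}, and your observation that the Christoffel corrections in $\hat\nabla^2_{\partial_x,\partial_x}Q+\hat\nabla^2_{\partial_y,\partial_y}Q$ cancel follows directly from Lemma~\ref{derU}. Your caveat about sign bookkeeping under the paper's curvature convention is well placed; using Corollary~\ref{deltQRic} as a calibration, as you suggest, is the right way to pin everything down.
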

It is also useful to record the derivative of the Liouville equation.
\begin{lemma}\label{deriv-liouville-f-eqn}
The conformal factor $f$ satisfies
\begin{align*}
2\partial_{x}(\partial^{2}_{x}f + \partial^{2}_{y}f) & = \partial_{x}(Q^{-1}(\partial^{2}_{x}Q + \partial^{2}_{y}Q)) - \partial_{x}\hat Ric(\partial_{x},\partial_{x}) - \partial_{x}\hat Ric(\partial_{y},\partial_{y})\\
2\partial_{y}(\partial^{2}_{x}f + \partial^{2}_{y}f) & = \partial_{y}(Q^{-1}(\partial^{2}_{x}Q + \partial^{2}_{y}Q)) - \partial_{y}\hat Ric(\partial_{x},\partial_{x}) - \partial_{y}\hat Ric(\partial_{y},\partial_{y})
\end{align*}
\end{lemma}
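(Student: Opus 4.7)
The statement is an immediate corollary of the preceding Liouville equation (Lemma \ref{liouville-f-eqn}), so my plan is essentially to differentiate both sides and record that the resulting identity is what the lemma claims. In detail, Lemma \ref{liouville-f-eqn} asserts the pointwise identity
\[
2(\partial^{2}_{x}f + \partial^{2}_{y}f) = Q^{-1}(\partial^{2}_{x}Q+\partial^{2}_{y}Q) - \hat{Ric}(\partial_{x},\partial_{x}) - \hat{Ric}(\partial_{y},\partial_{y})
\]
on $\slashed{N}$, understood as an equation between smooth functions in the $(x,y)$ chart. Since both sides are smooth in the regime we are working in (we have $Q>0$ on $\hat{\mathcal{N}}$, so $Q^{-1}$ is smooth, and $f$ together with $Q$ inherit the necessary regularity from the underlying metric), we may differentiate the identity termwise.

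My plan is therefore:
\begin{enumerate}
\item Invoke Lemma \ref{liouville-f-eqn} to write down the scalar curvature / Liouville identity for $f$.
\item Apply $\partial_{x}$ to both sides of the identity. On the left this produces $2\partial_{x}(\partial^{2}_{x}f + \partial^{2}_{y}f)$ directly. On the right, since $\partial_{x}$ commutes with itself, with $\partial_{y}$, and acts as an ordinary derivation on pointwise products, we obtain exactly
\[
\partial_{x}\bigl(Q^{-1}(\partial^{2}_{x}Q + \partial^{2}_{y}Q)\bigr) - \partial_{x}\hat{Ric}(\partial_{x},\partial_{x}) - \partial_{x}\hat{Ric}(\partial_{y},\partial_{y}).
\]
\item Repeat the argument with $\partial_{y}$ in place of $\partial_{x}$ to obtain the second identity.
\end{enumerate}

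There is no real obstacle here: the only technical point worth flagging is regularity, namely that we need one more derivative on $f$, $Q$, and the components of $\hat{Ric}$ than was strictly used to formulate Lemma \ref{liouville-f-eqn}. In the applications in the rest of the paper (which involve $C^{2,\alpha}$ or $C^{3,\alpha}$ data) this is automatic, so I would simply note in passing that the differentiation is valid under the standing regularity hypotheses and record the two resulting identities as the statement of Lemma \ref{deriv-liouville-f-eqn}.
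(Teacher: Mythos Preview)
Your proposal is correct and matches the paper's approach exactly: the paper introduces this lemma with the sentence ``It is also useful to record the derivative of the Liouville equation'' and gives no further argument, so differentiating Lemma~\ref{liouville-f-eqn} termwise is precisely what is intended.
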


The Liouville equation will be useful in a few places below, but our use of isothermal coordinates on $\slashed{N}$ will in fact allow us to derive a first order equation for $f$ which will end up being much easier to analyze analytically than the Liouville equation would have been, which we now do.

\begin{proposition}The conformal factor $f$ satisfies
\[(\partial_xf)(\partial_xQ) - (\partial_yf)(\partial_yQ) = \frac{Q}{2}\left(\hat{Ric}\left(\partial_x,\partial_x\right) - \hat{Ric}\left(\partial_y,\partial_y\right)\right) + \partial_x^2Q - \partial_y^2Q,\]
\[(\partial_xf)(\partial_yQ) + (\partial_yf)(\partial_xQ) = \partial_{x,y}^2Q + Q\hat{Ric}\left(\partial_x,\partial_y\right).\]
\end{proposition}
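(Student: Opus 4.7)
The plan is to compute $\hat{Ric}(\partial_x,\partial_x) - \hat{Ric}(\partial_y,\partial_y)$ and $\hat{Ric}(\partial_x,\partial_y)$ directly in terms of $Q$, $f$, and their derivatives, and then solve the resulting algebraic identities for the first-order $f$-combinations appearing on the left-hand sides. Two ingredients from earlier in the section suffice: Lemma~\ref{derU}, which provides the Christoffel symbols in the $(U,x,y)$ basis, and Proposition~\ref{curvtohess}, which expresses the $U$-direction curvature $\hat{R}(U,\slashed{E},U,\slashed{F})$ as a Hessian of $Q$.

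First I would combine Lemma~\ref{derU} with the identity $\hat{\nabla}^2_{X,Y}Q = X(YQ) - (\hat{\nabla}_XY)Q$ to record the Hessians of $Q$ in the coordinate basis:
\[\hat{\nabla}^2_{\partial_x,\partial_x}Q = \partial_x^2Q - (\partial_xf)(\partial_xQ) + (\partial_yf)(\partial_yQ),\]
the analogous formula with $(x,y)$ interchanged for $\hat{\nabla}^2_{\partial_y,\partial_y}Q$, and
\[\hat{\nabla}^2_{\partial_x,\partial_y}Q = \partial_{xy}^2Q - (\partial_yf)(\partial_xQ) - (\partial_xf)(\partial_yQ).\]
Crucially, these formulas already exhibit the two first-order $f$-combinations that we wish to isolate on the left; they arise from the Christoffel symbols of the conformally flat 2D metric, not from any second derivative of $f$.

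Next I would compute the Ricci components by tracing the Riemann tensor in the orthonormal frame $\{Q^{-1}U,\,e^{-f}\partial_x,\,e^{-f}\partial_y\}$. Since $\slashed{N}$ is totally geodesic in $\hat{\mathcal{N}}$ (noted just before Corollary~\ref{hatRiemU}), the Gauss equation gives $\hat{R}(\slashed{E},\slashed{F},\slashed{G},\slashed{H}) = \slashed{R}(\slashed{E},\slashed{F},\slashed{G},\slashed{H})$ on vectors tangent to $\slashed{N}$; hence the trace of $\hat{R}$ over the 2D spacelike frame vectors reproduces the intrinsic Ricci tensor of $\slashed{h}$, while the trace contribution from the timelike frame vector $Q^{-1}U$ is controlled by Proposition~\ref{curvtohess}. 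The key observation is that $\slashed{h} = e^{2f}(dx^2+dy^2)$ is conformally flat in 2D, so its intrinsic Ricci tensor is pure trace (equal to the Gauss curvature of $\slashed{h}$ times $\slashed{h}$); its tracefree $xy$-plane combinations therefore vanish identically. This cancellation---which erases the $\Delta f$ term hidden inside the Gauss curvature---is precisely what makes the resulting equations for $f$ first-order rather than second-order, and after it happens one is left with clean identities of the form
\[Q\bigl[\hat{Ric}(\partial_x,\partial_x) - \hat{Ric}(\partial_y,\partial_y)\bigr] = -\bigl[\hat{\nabla}^2_{\partial_x,\partial_x}Q - \hat{\nabla}^2_{\partial_y,\partial_y}Q\bigr],\qquad Q\,\hat{Ric}(\partial_x,\partial_y) = -\hat{\nabla}^2_{\partial_x,\partial_y}Q,\]
with the signs fixed by the Ricci normalization already used in Corollary~\ref{deltQRik} and Lemma~\ref{liouville-f-eqn}.

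Finally I would substitute the explicit Hessian expressions from the first step into these two identities and rearrange to place the targeted first-order combinations of $\partial f$ and $\partial Q$ on the left; this yields the two displayed equations. The main subtlety is careful bookkeeping of sign conventions through the Ricci trace contribution from the timelike frame vector $Q^{-1}U$; once these are pinned down, the rest is routine algebra, and no further analytic input (neither the Bianchi identities of Lemma~\ref{bianchi-equations} nor the Liouville equation of Lemma~\ref{liouville-f-eqn}) is required.
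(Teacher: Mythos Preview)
Your proposal is correct and follows essentially the same route as the paper: both arguments compute $\hat{Ric}(\partial_x,\partial_x)$, $\hat{Ric}(\partial_y,\partial_y)$, $\hat{Ric}(\partial_x,\partial_y)$ by tracing the Riemann tensor in the $(U,\partial_x,\partial_y)$ basis, invoke Proposition~\ref{curvtohess} to convert the $U$-contributions into Hessians of $Q$, and then expand those Hessians via the Christoffel symbols of Lemma~\ref{derU}. The only difference is cosmetic: where the paper simply observes that the common term $e^{-2f}\hat R(\partial_x,\partial_y,\partial_x,\partial_y)$ cancels upon subtracting the two diagonal Ricci components, you explain the same cancellation conceptually as the vanishing of the trace-free part of the intrinsic two-dimensional Ricci tensor of $\slashed h$; this is a nice gloss but not a genuinely different argument. (Minor typo: your reference \texttt{deltQRik} should be \texttt{deltQRic}.)
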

\begin{proof}We have
\begin{align*}
\hat{Ric}\left(\partial_x,\partial_x\right) &= -Q^{-2}\hat{R}\left(\partial_x,U,\partial_x,U\right) + e^{-2f}\hat{R}\left(\partial_x,\partial_y,\partial_x,\partial_y\right)
\\ \nonumber &= -Q^{-1}\hat{\nabla}^2_{\partial_x,\partial_x}Q + e^{-2f}\hat{R}\left(\partial_x,\partial_y,\partial_x,\partial_y\right)
\\ \nonumber &= -Q^{-1}\partial_x^2Q + Q^{-1}\left[(\partial_xf)(\partial_xQ)-(\partial_yf)(\partial_yQ)\right] + e^{-2f}\hat{R}\left(\partial_x,\partial_y,\partial_x,\partial_y\right).
\end{align*}
Reversing the roles of $x$ and $y$ yields
\[\hat{Ric}\left(\partial_y,\partial_y\right) = -Q^{-1}\partial_y^2Q + Q^{-1}\left[(\partial_yf)(\partial_yQ)-(\partial_xf)(\partial_xQ)\right] + e^{-2f}\hat{R}\left(\partial_x,\partial_y,\partial_x,\partial_y\right).\]
Subtracting the two equations then yields
\[(\partial_xf)(\partial_xQ) - (\partial_yf)(\partial_yQ) = \frac{Q}{2}\left(\hat{Ric}\left(\partial_x,\partial_x\right) - \hat{Ric}\left(\partial_y,\partial_y\right)\right) + \partial_x^2Q - \partial_y^2Q.\]
Next,
\begin{align*}
\hat{Ric}\left(\partial_x,\partial_y\right) &= -Q^{-2}\hat{R}\left(\partial_x,U,\partial_y,U\right)
\\ \nonumber &= -Q^{-1}\hat{\nabla}^2_{x,y}Q
\\ \nonumber &= -Q^{-1}\partial_{x,y}^2Q + Q^{-1}\left[(\partial_xf)(\partial_yQ) + (\partial_yf)(\partial_xQ)\right].
\end{align*}
Rearranging yields
\[(\partial_xf)(\partial_yQ) + (\partial_yf)(\partial_xQ) = \partial_{x,y}^2Q + Q\hat{Ric}\left(\partial_x,\partial_y\right).\qedhere\]
\end{proof}
\begin{corollary}\label{first-order-eqns-for-f} Assuming that $|\partial Q| \not = 0$, we can rewrite the above
\[\partial_xf = \frac{\frac{1}{2}(\partial_xQ)Q\left(\hat{Ric}(\partial_x,\partial_x) - \hat{Ric}(\partial_y,\partial_y)\right) + \partial_xQ(\partial_x^2Q - \partial_y^2Q) + (\partial_yQ)(\partial^2_{x,y}Q) + (\partial_yQ)Q\hat{Ric}(\partial_x,\partial_y)}{(\partial_xQ)^2 + (\partial_yQ)^2},\]
\[\partial_yf = \frac{-\frac{1}{2}(\partial_yQ)Q\left(\hat{Ric}(\partial_x,\partial_x) - \hat{Ric}(\partial_y,\partial_y)\right) - \partial_yQ(\partial_x^2Q - \partial_y^2Q) + (\partial_xQ)(\partial^2_{x,y}Q) + (\partial_xQ)Q\hat{Ric}(\partial_x,\partial_y)}{(\partial_xQ)^2 + (\partial_yQ)^2}.\]
\end{corollary}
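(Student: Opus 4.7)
The plan is to view the two identities from the preceding proposition as a $2\times 2$ linear system in the unknowns $(\partial_x f, \partial_y f)$ and invert it. Writing
\[
A \doteq \tfrac{Q}{2}\left(\hat{Ric}(\partial_x,\partial_x) - \hat{Ric}(\partial_y,\partial_y)\right) + \partial_x^2 Q - \partial_y^2 Q, \qquad B \doteq \partial^2_{x,y}Q + Q\,\hat{Ric}(\partial_x,\partial_y),
\]
the proposition just proved says
\[
\begin{pmatrix} \partial_x Q & -\partial_y Q \\ \partial_y Q & \phantom{-}\partial_x Q \end{pmatrix} \begin{pmatrix} \partial_x f \\ \partial_y f \end{pmatrix} = \begin{pmatrix} A \\ B \end{pmatrix}.
\]
The coefficient matrix has determinant $(\partial_x Q)^2 + (\partial_y Q)^2$, which is exactly $|\partial Q|^2$ in the flat background metric $dx^2 + dy^2$.

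The key observation is that the standing assumption $|\partial Q| \neq 0$ makes this matrix invertible, so we may simply solve for $\partial_x f$ and $\partial_y f$. Applying Cramer's rule (equivalently, multiplying the first equation by $\partial_x Q$, the second by $\partial_y Q$, and adding; then multiplying the first by $-\partial_y Q$, the second by $\partial_x Q$, and adding) gives
\[
\partial_x f = \frac{(\partial_x Q)\,A + (\partial_y Q)\,B}{(\partial_x Q)^2 + (\partial_y Q)^2}, \qquad \partial_y f = \frac{-(\partial_y Q)\,A + (\partial_x Q)\,B}{(\partial_x Q)^2 + (\partial_y Q)^2}.
\]
Substituting the definitions of $A$ and $B$ and rearranging yields exactly the formulas in the statement. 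The remaining work is a brief bookkeeping check that all the indicated terms land in their claimed positions; there are no real obstacles, since the derivation is purely algebraic and the nondegeneracy hypothesis is precisely what is needed to make the inversion legitimate. No step is conceptually hard: the substantive geometric content was already absorbed into the preceding proposition via Proposition~\ref{curvtohess} and the computation of the sectional curvature $\hat R(\partial_x, \partial_y, \partial_x, \partial_y)$ using Lemma~\ref{derU}.
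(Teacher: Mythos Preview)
Your argument is correct and is exactly the intended one: the paper states this as a corollary with no proof, since it follows from the preceding proposition by inverting the $2\times 2$ linear system under the nondegeneracy assumption $|\partial Q|\neq 0$, precisely as you describe.
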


\subsection{Compatibility of the first order equations} In this section, we will consider $Q$ as a fixed function. We would like to investigate to what degree we may prescribe $\hat{Ric}(\partial_{x},\partial_{x})$, $\hat{Ric}(\partial_{x},\partial_{y})$, and $\hat{Ric}(\partial_{y},\partial_{y})$ for the metric
\[
\hat h = - Q^{2}dU^{2} + e^{2f}(dx^{2}+dy^{2})
\]
by solving the first order equations for $f$ from Corollary \ref{first-order-eqns-for-f}.

 Recalling that
\[
\hat{Ric}(U,U) = e^{-2f} Q(\partial_{x}^{2}Q+\partial^{2}_{y}Q),
\]
the Bianchi equations derived in Lemma \ref{bianchi-equations} can be rewritten as the requirement that the $1$-form $\mathfrak{B} = \mathfrak{B}_{x}dx + \mathfrak{B}_{y}dy$ vanishes, where
\begin{align}\label{ein1-noU}
\mathfrak{B}_{x} & \doteq \frac{1}{2}\partial_x\hat{Ric}(\partial_x,\partial_x) -\frac{1}{2} \partial_x\hat{Ric}\left(\partial_y,\partial_y\right) + \partial_y\hat{Ric}(\partial_x,\partial_y)
 +Q^{-2}(\partial_xQ)\hat{Ric}(\partial_x,\partial_x)   \\ \nonumber & + Q^{-1}(\partial_yQ)\hat{Ric}(\partial_x,\partial_y) - Q^{-1}(\partial^{2}_{x}Q+\partial^{2}_{y}Q) \partial_{x}f  + \frac{1}{2} Q^{-2}\partial_x(Q(\partial^{2}_{x}Q+\partial^{2}_{y}Q)) ,
\end{align}
\begin{align}\label{ein2-noU}
\mathfrak{B}_{y} & \doteq \frac{1}{2}\partial_y\hat{Ric}(\partial_y,\partial_y) -\frac{1}{2} \partial_y\hat{Ric}\left(\partial_x,\partial_x\right) + \partial_x\hat{Ric}(\partial_x,\partial_y)
 +Q^{-2}(\partial_yQ)\hat{Ric}(\partial_y,\partial_y)  \\ \nonumber & + Q^{-1}(\partial_xQ)\hat{Ric}(\partial_x,\partial_y)   - Q^{-1}(\partial^{2}_{x}Q+\partial^{2}_{y}Q)\partial_{y}f + \frac{1}{2}Q^{-2}\partial_y(Q(\partial^{2}_{x}Q+\partial^{2}_{y}Q)) .
\end{align}
An important observation is that by rewriting the Bianchi equations in this manner, we have cancelled the $e^{2f}$ factors. Moreover, since we are considering $Q$ as a fixed function, the pointwise value of these expressions depend (affine) linearly on the pointwise values of $\hat{Ric},\partial\hat{Ric}$, and $\partial f$. This observation plays an essential role in our analysis of the compatibility condition.

Now, based on Corollary \ref{first-order-eqns-for-f} we define a $1$-form
\begin{equation}\label{alphadef}
\alpha= \alpha_{x}dx+ \alpha_{y}dy
\end{equation}
by the expressions
\[\alpha_{x} \doteq \frac{\frac{1}{2}(\partial_x Q )Q \left(\hat{Ric}(\partial_x ,\partial_x ) - \hat{Ric}(\partial_y ,\partial_y )\right) + \partial_x Q (\partial_x ^2Q  - \partial_y ^2Q ) + (\partial_y Q )(\partial^2_{x ,y }Q ) + (\partial_y Q )Q \hat{Ric}(\partial_x ,\partial_y )}{(\partial_x Q )^2 + (\partial_y Q )^2},\]
\[\alpha_{y} \doteq \frac{-\frac{1}{2}(\partial_y Q )Q \left(\hat{Ric}(\partial_x ,\partial_x ) - \hat{Ric}(\partial_y ,\partial_y )\right) - \partial_y Q (\partial_x ^2Q  - \partial_y ^2Q ) + (\partial_x Q )(\partial^2_{x ,y }Q ) + (\partial_x Q )Q \hat{Ric}(\partial_x ,\partial_y )}{(\partial_x Q )^2 + (\partial_y Q )^2} .\]
Then, we can write the first order equations for $f$ succinctly as $df = \alpha$. Thus, if we consider $Q$, $\hat{Ric}(\partial_{x},\partial_{x})$,$\hat{Ric}(\partial_{x},\partial_{y})$, and $\hat{Ric}(\partial_{y},\partial_{y})$ to be chosen, then the (local) obstruction to solving the first-order equation is $d\alpha = 0$.

We are thus led to investigate the relationship between the Bianchi equations and $d\alpha$ in the form of the following compatibility condition.
\begin{proposition}\label{compat-condition}
Suppose that $\mathcal{U}\subset\slashed{\mathcal{N}}$ is a simply connected neighborhood of some point $p \in \slashed{\mathcal{N}}$. Assume that $Q \in C^{3,\alpha}(\mathcal{U})$ is a fixed positive function with $|\partial Q| \not = 0$ on $\mathcal{U}$. Then, there are $1$-forms $\beta_{1},\beta_{2} \in C^{0,\alpha}(\mathcal{U};T^{*}\slashed{\mathcal{N}})$ with the following property.

Consider $R_{xx},R_{xy},R_{yy},f \in C^{1,\alpha}(\mathcal{U})$. Then, if we substitute $R_{xx}$ for $\hat{Ric}(\partial_{x},\partial_{x})$, $R_{xy}$ for $\hat{Ric}(\partial_{x},\partial_{y})$ and so on in the above expressions for $\alpha$ and $\mathfrak{B}$, then
\[
d\alpha = \beta_{1}\wedge \mathfrak{B} + \beta_{2}\wedge (df - \alpha).
\]
Moreover, these requirements uniquely determine the $1$-forms $\beta_{1},\beta_{2}$; indeed their coefficients at $p$ are algebraic functions for the $3$-jet of $Q$ at $p$. We will determine $\beta_{2}$ explicitly below.

Finally, if $R_{xx},R_{xy},R_{yy},f$ are chosen so that $\alpha$ and $\mathfrak{B}$ satisfy $\mathfrak{B} = df - \alpha = 0$ on $\mathcal{U}$, then $f$ is in $C^{2,\alpha}(\mathcal{U})$ and it satisfies the Liouville equation from Lemma \ref{liouville-f-eqn}, i.e.
\[
2(\partial^{2}_{x}f + \partial^{2}_{y}f) = Q^{-1}(\partial^{2}_{x}Q+\partial^{2}_{y}Q) - R_{xx} - R_{yy}.
\]
\end{proposition}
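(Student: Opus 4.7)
The proof hinges on the structural observation that, with $Q$ regarded as fixed, each of $d\alpha$, $\mathfrak{B}$, and $df - \alpha$ is an affine expression in the independent data $(R_{xx}, R_{xy}, R_{yy}, \partial_{i} R_{jk}, \partial_{i} f)$, whose coefficients are rational functions of $Q$ and its derivatives up to order three. The claimed identity is therefore equivalent to a polynomial identity in these formal data, and can be verified by matching coefficients of each distinct monomial.

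First I would match the terms linear in $\partial R$. Since $df - \alpha$ has no $\partial R$ content, such terms on the left must come entirely from $\beta_{1} \wedge \mathfrak{B}$; using that the $\partial R$ part of $\mathfrak{B}$ is the twisted contracted Bianchi combination $\tfrac{1}{2}\partial_{x}(R_{xx}-R_{yy}) + \partial_{y}R_{xy}$ (and its $y$-analogue), this pins down $\beta_{1}$ in terms of $Q$ and $\partial Q$. Next I would match the $\partial f$ coefficients: the only $\partial f$ contribution to $\beta_{1} \wedge \mathfrak{B}$ comes from the $-Q^{-1}(\Delta Q)\,\partial f$ pieces of $\mathfrak{B}$, and since both $d\alpha$ and $\beta_{2} \wedge \alpha$ are free of $\partial f$, the cancellation condition determines $\beta_{2}$ uniquely. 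Both $\beta_{1}$ and $\beta_{2}$ thus depend algebraically on the 3-jet of $Q$, and their uniqueness is immediate from this matching.

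The main obstacle is the third step: verifying that with these $\beta_{1}, \beta_{2}$ in hand, the remaining contributions — those linear in the undifferentiated $R_{ij}$, and the purely $Q$-dependent ones — agree on both sides of the identity. This amounts to four polynomial identities in the 3-jet of $Q$: one per $R_{ij}$ and one for the $Q$-only part. I would verify them by direct algebraic calculation, clearing the common denominator $[(\partial_{x}Q)^{2}+(\partial_{y}Q)^{2}]^{2}$ and collecting in powers of $\partial_{x}Q, \partial_{y}Q$. These identities are structurally forced by the fact that the entire framework descends from a single consistent geometric picture, namely the Ricci of the metric $\hat h = -Q^{2}dU^{2} + e^{2f}(dx^{2}+dy^{2})$; nonetheless, the bookkeeping must be carried out by hand in the present formal setting.

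For the regularity upgrade and the Liouville equation, suppose $\mathfrak{B} = 0$ and $df = \alpha$ hold on $\mathcal{U}$. The explicit expression for $\alpha$ in Corollary \ref{first-order-eqns-for-f} puts it in $C^{1,\alpha}(\mathcal{U})$, using $Q \in C^{3,\alpha}$ and $R_{ij} \in C^{1,\alpha}$, so $df = \alpha$ upgrades $f$ to $C^{2,\alpha}(\mathcal{U})$. To derive the Liouville equation I would compute $\partial_{x}\alpha_{x} + \partial_{y}\alpha_{y}$ directly from the formulas for $\alpha_{x,y}$; the resulting expression involves derivatives of $R_{ij}$ and of $f$, which can be eliminated using the two relations $\mathfrak{B}_{x} = \mathfrak{B}_{y} = 0$ together with the substitution $\partial f = \alpha$. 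After simplification one obtains $2(\partial_{x}^{2}f + \partial_{y}^{2}f) = Q^{-1}(\partial_{x}^{2}Q + \partial_{y}^{2}Q) - R_{xx} - R_{yy}$, precisely the Liouville equation of Lemma \ref{liouville-f-eqn}.
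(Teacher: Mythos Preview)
Your proposal is correct and follows what the paper itself calls the ``direct but somewhat long computation'' route, which the authors explicitly acknowledge as a valid alternative before opting for a more structural argument. Both approaches rest on the same key observation --- that with $Q$ fixed, everything is affine in $(R_{ab},\partial R_{ab},\partial f)$ --- but exploit it differently.

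The paper's proof packages the data into finite-dimensional vector spaces $\mathbf{V}_{\text{curv}}$ and $\mathbf{V}_{\text{conf}}$, builds a commutative diagram with exact rows, and uses a dimension count together with the fact that \emph{genuine} metrics $-Q^2 dU^2 + e^{2f}(dx^2+dy^2)$ automatically satisfy both $\mathfrak{B}=0$ and $df=\alpha$. The existence of the linear map $\beta$ then falls out of a diagram chase, without ever expanding the $R_{ab}$-linear or $Q$-only terms. This entirely bypasses what you identify as the ``main obstacle'': the explicit verification of those residual polynomial identities in the $3$-jet of $Q$. Your approach, by contrast, determines $\beta_1,\beta_2$ constructively by coefficient matching (on $\partial R$ and $\partial f$ respectively) and then must close the loop with a hand computation; your aside that the identities are ``structurally forced'' by the underlying geometry is exactly the insight the paper formalizes to avoid that computation.

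For the final claim (regularity plus Liouville), your argument is essentially the same as the paper's, just phrased concretely: the paper encodes the Liouville check as the second component of its compatibility map $c$, which compares $\operatorname{tr} J_\alpha$ against the right-hand side of the Liouville equation --- precisely your computation of $\partial_x\alpha_x+\partial_y\alpha_y$ followed by elimination via $\mathfrak{B}=0$ and $df=\alpha$.
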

It is possible to prove this by a direct but somewhat long computation. Instead, we give a proof based on the (affine) linearity of certain expressions, along with a dimension count.

\begin{proof}Throughout, we will consider $p \in \mathcal{U}\subset \slashed{\mathcal{N}}$ and $Q \in C^{3,\alpha}(\mathcal{U})$ with $|\partial Q|\not =0$ on $\mathcal{U}$ all to be fixed data.

We define two vector spaces, $\mathbf{V}_{\text{curv}}$ and $\mathbf{V}_{\text{conf}}$, which will contain data pertaining to $\hat{Ric}$ and its derivative, and certain derivatives of $f$, respectively. First of all, we set
\[
\mathbf{V}_{\text{curv}} \doteq Sym^{2}(T^{*}_{p}\slashed{\mathcal{N}}) \oplus (T^{*}_{p}\slashed{\mathcal{N}} \otimes Sym^{2}(T^{*}_{p}\slashed{\mathcal{N}})).
\]
This should be thought of as a metric containing the $1$-jet of the Ricci curvature at $p$ (only in the $\partial_{x}$ and $\partial_{y}$ directions). Note that given a metric
\[
\hat h = -Q^{2}dU^{2} + e^{2f}(dx^{2} + dy^{2}),
\]
we may naturally associate it to an element of $\mathbf{V}_{\text{curv}}$ as
\[
\hat h \mapsto \left( \hat{Ric}(\partial_{x},\partial_{x}) dx^{2} +\dots + \hat{Ric}(\partial_{y},\partial_{y})dy^2, \partial_{x}\hat{Ric}(\partial_{x},\partial_{x}) dx\otimes(dx^{2}) + \dots + \partial_{y}\hat{Ric}(\partial_{y},\partial_{y}) dy\otimes (dy^{2}) \right).
\]
Note that we only keep the $x,y$ components of the curvature and its derivative.

Secondly, we define
\[
\mathbf{V}_{\text{conf}} \doteq T^{*}_{p}\slashed{\mathcal{N}} \oplus Sym^{2}(T^{*}_{p}\slashed{\mathcal{N}}) \oplus \bigwedge\nolimits^{\!2} T^{*}_{p}\slashed{\mathcal{N}} \oplus \mathbb{R} \oplus T^{*}_{p}\slashed{\mathcal{N}}.
\]
This should (loosely) be thought of as containing information about the $2$-jet of the conformal factor $f$ at $p$ (except for $f(p)$) as well as the $1$-jet of the Laplacian of $f$ at $p$. Note that if $f \in C^{2,\alpha}(\mathcal{U})$, then we can naturally map it to the following element of $\mathbf{V}_{\text{conf}}$:
\[
f\mapsto (df(p),D^{2}f(p),0,(\partial^{2}_{x}f + \partial^{2}_{y}f)(p),d(\partial^{2}_{x}f + \partial^{2}_{y}f)(p))\in\mathbf{V}_{\text{conf}}.
\]
Let us emphasize that $\mathbf{V}_{\text{conf}}$ does not record the value of $f(p)$ under this map.

Note that the relationship between $f$ and the element of $\mathbf{V}_{\text{conf}}$ is not as straightforward as the description of $\mathbf{V}_{\text{curv}}$ above. Firstly, $\mathbf{V}_{\text{conf}}$ contains a slot for an ``antisymmetric component of the Hessian.'' Secondly, $\mathbf{V}_{\text{conf}}$ contains information about the Laplacian twice (as it contains a slot for all of the second derivatives of $f$, as well as a slot for the Laplacian term). Essentially, the goal of the proof of Proposition \ref{compat-condition} is to show that a certain element of $\mathbf{V}_{\text{conf}}$ (naturally constructed out of the $1$-form $\alpha$ and other data) has no ``antisymmetric component of the Hessian,'' and the two versions of the Laplacian agree.

Based on this discussion, we define a compatibility map $c: \mathbf{V}_{\text{conf}}\to \bigwedge^{2}T^{*}_{p}\slashed{\mathcal{N}} \oplus \mathbb{R}$ by
\[
(f_{x}dx + f_{y}dy, f_{xx}dx^{2} + 2f_{xy}dxdy + f_{yy}dy^{2}, \tilde f_{xy} dx\wedge dy, \hat f, \hat f_{x}dx + \hat f_{y}dy) \mapsto (\tilde f_{xy}dx\wedge dy, f_{xx} + f_{yy} - \hat f).
\]
for \emph{real numbers} $f_{x},f_{y},\dots,\hat f_{y}$. In words, the kernel of $c$ is precisely the elements of $\mathbf{V}_{\text{conf}}$ that have ``vanishing anti-symmetric Hessian'' and whose both possible Laplacians agree. We set \[\widetilde{\mathbf{V}}_{\text{conf}} \doteq \ker c.\]
We note for later use that $c$ is surjective (by inspection).

Now, we define a map $F: \mathbf{V}_{\text{curv}}\to \mathbf{V}_{\text{conf}}$, which essentially bundles up the data of $\alpha$ (and its derivative) as well as the Liouville equation (and its derivative), both of which we can think of giving us information about some part of the $3$-jet of $f$ from the $1$-jet of $\hat{Ric}$. More precisely, given an element of $\mathbf{V}_{\text{curv}}$, using~\eqref{alphadef} we may define a $1$-form $\alpha$ (recall that we are considering $Q \in C^{3,\alpha}(\mathcal{U})$ to be fixed). Moreover, we may formally compute the derivative of each coefficient of $\alpha$ and then use the element of $\mathbf{V}_{\text{curv}}$ (along with $Q$) to find a ``Jacobian'' matrix for $\alpha$, i.e.
\[
J_{\alpha} \doteq \left(\begin{matrix}
\partial_{x}\alpha_{x} & \partial_{y}\alpha_{x}\\\partial_{x}\alpha_{y} & \partial_{y}\alpha_{y}
\end{matrix}\right).
\]
Furthermore, we may use Lemmas \ref{liouville-f-eqn} and \ref{deriv-liouville-f-eqn} to use the data from $\mathbf{V}_{\text{curv}}$ (along with $Q$) to formally define $\Delta_{f},d(\Delta_{f})$ at $p$. Putting this together, $F$ will map the chosen element of $\mathbf{V}_{\text{curv}}$ to
\[
(\alpha, Sym^{2} J_{\alpha},{\textstyle\bigwedge}^{2}J_{\alpha},\Delta_{f},d(\Delta_{f}))\in \mathbf{V}_{\text{conf}}.
\]
The map $F$ is not linear, but it is \emph{affine}\footnote{
For our purposes, it is convenient to define an affine space to be a pair $(\mathbf{V}',\mathbf{V})$ where $\mathbf{V}$ is a (real) vector space, and $\mathbf{V}'$ is a subset so that $\mathbf{V}'-v'$ is a linear subspace for some (and hence any) $v' \in\mathbf{V}'$. Vector spaces are (trivially) affine linear spaces. We will usually drop $\mathbf{V}$ from the notation for an affine space. A map $a : \mathbf{V}_{1}'\to\mathbf{V}_{2}'$ is said to be affine linear if there is $v_{1}\in V'$ so that the function $a'(v) := a(v-v_{1}') + a(v_{1}')$ defines a linear map $a': \mathbf{V}_{1}'-v_{1}' \to \mathbf{V}_{2}'-a(v_{1}')$.}
linear.\footnote{Note that this is precisely why we are considering $Q$ as fixed. If we tried to bundle $Q$ up into a map of this sort, then we would completely lose any sort of linearity, as $\alpha$ depends on the $1$-jet of $Q$ in a truly nonlinear way.} It is convenient to define $\hat F \doteq F \circ \proj_{\mathbf{V}_{\text{curv}}} : \mathbf{V}_{\text{curv}} \oplus T^{*}_{p}\slashed{\mathcal{N}} \to \mathbf{V}_{\text{conf}}$.

We claim that $F$ is surjective. To see this, we first consider the restricted map sending $R_{ab}$ to $(\alpha,\Delta_{f})$. Then, if we consider the coefficient of $R_{xx},R_{xy},R_{yy}$ in $\Delta_{f},\alpha_{x},\alpha_{y}$, we find the matrix
\[
\bordermatrix{
~ & \Delta_{f} & \alpha_{x} & \alpha_{y} \cr
R_{xx} & -1 & \frac 12 Q |\partial Q|^{-2} \partial_{x}Q & -\frac 12Q |\partial Q|^{-2}\partial_{y}Q\cr
R_{xy} & 0 & Q |\partial Q|^{-2} \partial_{y}Q & Q |\partial Q|^{-2}\partial_{x}Q \cr
R_{yy} & -1 & -\frac 12 Q |\partial Q|^{-2} \partial_{x}Q & \frac 12 Q |\partial Q|^{-2} \partial_{y}Q \cr}
\]
which has determinant $-Q \not = 0$. Hence, we see that we may freely prescribe $\Delta_{f},\alpha_{x},\alpha_{y}$ by choosing $R_{xx},R_{xy},R_{yy}$. Thinking of this choice as fixed, we may then use an identical argument to prescribe $\partial_{x}(\Delta_{f})$, $\partial_{x}\alpha_{x}$, and $\partial_{x}\alpha_{y}$ in terms of $\partial_{x}R_{ab}$, and then so on for the $y$-derivative (the reason for this is that we may consider any terms in $\partial_{x}(\Delta_{f})$, $\partial_{x}\alpha_{a}$ where the $\partial_{x}$-derivative does not fall on a Ricci term as being part of the affine shift; this is because we have already chosen the $R_{ab}$ terms and $Q$ is fixed, as usual. This shows that the corresponding matrix is identical). Next, note that by a dimension count:
\[
\dim \mathbf{V}_{\text{curv}} = 3 + 2 \times 3 = 9 = 2 + 3 + 1 + 1 + 2 = \dim\mathbf{V}_{\text{conf}},
\]
we in fact conclude that $F$ is bijective.

Conversely, we define a map $R :\widetilde{\mathbf{V}}_{\text{conf}}\to \mathbf{V}_{\text{curv}}\oplus T^{*}_{p}\slashed{\mathcal{N}}$ as follows. Given
\[
(f_{x}dx+f_{y}dy, f_{xx}dx^{2}+2f_{xy}dxdy +f_{yy}dy^{2},0,f_{xx}+f_{yy},\hat f_{x}dx +\hat f_{y} dy) \in \widetilde{\mathbf{V}}_{\text{conf}},
\]
there is a unique polynomial function $f(x,y)$ so that $f(p) = 0$, $\partial_{a} f(p) = f_{a}$ for $a \in \{x,y\}$, $\partial^{2}_{a,b}f(p) = f_{ab}$ for $a,b\in\{x,y\}$, $\partial^{3}_{aaa}f(p) = \hat f_{a}$ for $a\in\{x,y\}$, and finally forcing all other derivatives to vanish.\footnote{Note that there was some freedom in choosing the third derivatives of $f$. In fact, we only need to choose the third derivatives to satisfy $d(\Delta f) = \hat f_{x}dx + \hat f_{y}dy$, so we have chosen to also require all mixed third derivatives vanish.} 
Then, using this function $f$, we consider the metric $-Q^{2}dU^{2} + e^{2f}(dx^{2}+dy^{2})$ and compute its Ricci curvature and derivative of Ricci curvature (all in the $\partial_{x},\partial_{y}$ directions) at $p$. As usual, we can naturally associate this to an element of $\mathbf{V}_{\text{curv}}$. This defines the value of $\proj_{\mathbf{V}_{\text{curv}}}\circ R$. For $\proj_{T^{*}_{p}\slashed{\mathcal{N}}}\circ R$, we simply define this to be $f_{x}dx + f_{y}dy$.

Note that it is not at all clear that $R$ is affine linear. However, the computations done in Section \ref{3to2} imply that $\hat F \circ R$ is the same as the inclusion of $\widetilde{\mathbf{V}}_{\text{conf}}$ into ${\mathbf{V}}_{\text{conf}}$. More concretely, if we pick some function $f$, then if we compute the associated Ricci curvature (and derivatives), then we have seen that it is possible to recover the terms in the $3$-jet of $f$ at $p$ which are contained in $\mathbf{V}_{\text{curv}}$. Because we have seen that $F$ is a bijection, we thus find that $\proj_{\mathbf{V}_{\text{curv}}} \circ R$ is affine linear.

Finally, we define $\Xi : \mathbf{V}_{\text{curv}}\oplus T^{*}_{p}\slashed{\mathcal{N}} \to  T^{*}_{p}\slashed{\mathcal{N}} \oplus  T^{*}_{p}\slashed{\mathcal{N}}$, which maps the Ricci curvature and ``$df$'' to the Bianchi equations and $df-\alpha$. More precisely, we send
\[
\left(\sum_{a,b \in \{x,y\}} R_{ab}da db, \sum_{a,b,c\in\{x,y\}} \partial_{a}R_{bc}da \otimes (db\ dc), f_{x}dx+f_{y}dy\right)\mapsto (\mathfrak{B},(f_{x} - \alpha_{x})dx + (f_{y}-\alpha_{y})dy),
\]
where $\alpha$ and $\mathfrak{B}$ are formed by substituting in $R_{ab}$, $\partial_{a}R_{bc}$ and $f_{a}$ for the curvature, derivative of curvature, and differential of $f$ in the relevant equations. Direct inspection of the relevant formulas imply that $\Xi$ is surjective.

We are thus led to consider the following diagram of affine linear spaces\footnote{Note that the kernel of an affine map between vector spaces $a: \mathbf{V}_{1}\to\mathbf{V_{2}}$ can be defined to be the affine space $(a^{-1}(0), \mathbf{V}_{1})$. Here, we only consider the kernel of such maps. } and maps (where the horizontal sequences are exact\footnote{Recall that the horizontal arrows being \emph{exact} means that the composition of two horizontal maps is the zero map.}). We have defined the solid arrows above and we will define the dashed arrows below.
\[
\xymatrix{
0 \ar[r]& \ker \Xi \ar[r] \ar@/^/@{-->}[d]^{F'}& \mathbf{V}_{\text{curv}} \oplus T_{p}^{*}\slashed{\mathcal{N}}  \ar[r]^-{\Xi} \ar[d]^{\hat F} & T^{*}_{p}\slashed{\mathcal{N}}\oplus T^{*}_{p}\slashed{\mathcal{N}} \ar[r] \ar@{-->}[d]^{\beta} & 0 \\
0 \ar[r] & \widetilde{\mathbf{V}}_{\text{conf}}\ar[r] \ar[ur]_{R} \ar@/^/@{-->}[u]^{R'} & \mathbf{V}_{\text{conf}} \ar[r]^-{c} & \bigwedge^{2}T^{*}_{p}\slashed{\mathcal{N}} \oplus \mathbb{R} \ar[r] & 0
}
\]
By our above observations, the solid arrows commute. Moreover, by the computations in Section \ref{3to2}, we see that $\Xi \circ R = 0$. This is simply saying that if we start with an actual metric $-Q^{2}dU^{2} + e^{2f}(dx^{2}+dy^{2})$, then the Bianchi equations are satisfied, and $f$ satisfies the first order equation $df=\alpha$. Thus, we see that $R$ must lift to a map
\[
R' : \widetilde{\mathbf{V}}_{\text{conf}} \to \ker\Xi.
\]
Now, we claim that $\hat F$ lifts to a map $F' : \ker\Xi \to \widetilde{\mathbf{V}}_{\text{conf}}$ as indicated in the diagram. However, this follows the properties we have just proven, along with a straightforward dimension count:
\[
\dim\ker\Xi = (3+2\times 3 + 2) - (2+2) = 7 = (2+3+1+1+2) - (1+1) = \dim\widetilde{\mathbf{V}}_{\text{conf}}.
\]
In particular, we see that $F'$ is bijective. Because $\hat F\circ R$ equals the inclusion of $\widetilde{\mathbf{V}}_{\text{conf}}$ into ${\mathbf{V}}_{\text{conf}}$, we thus see that $R'$ is invertible, and $F'=(R')^{-1}$.

Now, the existence of an affine linear map $\beta: T^{*}_{p}\slashed{\mathcal{N}} \oplus T^{*}_{p}\slashed{\mathcal{N}}  \to \bigwedge^{2}T_{p}^{*}\slashed{\mathcal{N}} \oplus \mathbb{R}$ making the above diagram commute is clear: any element in $T^{*}_{p}\slashed{\mathcal{N}} \oplus T^{*}_{p}\slashed{\mathcal{N}} $ has a pre-image under $\Xi$. Pushing this element down by $\hat F$ and composing with $c$ yields an element of $\bigwedge^{2}T_{p}^{*}\slashed{\mathcal{N}} \oplus \mathbb{R}$. It is not hard to check that this yields $\beta$ as a well-defined,\footnote{Recall that the map being ``well defined'' in this context has to do with the possibility of choosing a different pre-image under $\Xi$. To show that the definition of the map is independent of this choice, note that the difference of two such elements lies in $\ker \Xi$, and is thus mapped to zero under $c\circ \hat F$. This shows that any other choice of pre-image would still yield the same element of $\wedge^{2}T^{*}_{p}\slashed{\mathcal{N}} \oplus \mathbb{R}$, so $\beta$ is ``well-defined.''} affine linear map. Moreover, because we have seen that $\hat F$ restricted to $\ker\Xi$ has image in $\widetilde{\mathbf{V}}_{\text{conf}}$, then it is easy to show that $\beta(0) = 0$. We thus define
\[
\proj_{\wedge^{2}T^{*}_{p}\slashed{\mathcal{N}}} \beta(0,dy) = (\beta_{2})_{x}dx\wedge dy, \qquad \proj_{\wedge^{2}T^{*}_{p}\slashed{\mathcal{N}}} \beta(0,dx) = -(\beta_{2})_{y}dx\wedge dy,
\]
and similarly for $\beta_{1}$. From this, the expression
\[
d\alpha = \beta_{1}\wedge \mathfrak{B} + \beta_{2} \wedge (df - \alpha)
\]
is simply a matter of un-winding the above diagram and definitions. Moreover, it is clear that $\beta_{1},\beta_{2}$ are uniquely defined by this prescription, and by inspecting the expression for $\alpha$, it is clear that their coefficients at $p$ are algebraic functions of the $3$-jet of $Q$ at $p$.

Finally, if $\mathfrak{B}=df-\alpha = 0$, then because $\alpha \in C^{1,\alpha}(\mathcal{U})$, we see that $f \in C^{2,\alpha}(\mathcal{U})$. Moreover, by considering the second factor in the compatibility map $c$, we see that the Laplacian defined as the trace of $J_{\alpha}$, the Jacobian of $\alpha$, and the Laplacian defined from the combination of $Q$ and $R_{ab}$ on the right hand side of the Lioiville equation must agree. Because $df=\alpha$, we see that
\[
\partial^{2}_{x}f + \partial^{2}_{y} f = \tr J_{\alpha}.
\]
Putting this together, the Liouville equation holds for $f$.
\end{proof}

As we see from the proof, it is not hard to compute $\beta_{1},\beta_{2}$ explicitly, by unwinding their definitions. In particular, it is not hard to derive the following expressions:

\begin{corollary}\label{determine-beta2-compat}
For $\mathcal{U}$ and $Q$ as in Proposition \ref{compat-condition}, the $1$-form $\beta_{2} = (\beta_{2})_{x}dx + (\beta_{2})_{y}dy$ satisfies
\begin{align*}
(\beta_{2})_{x} & = \frac{1}{2} \frac{(\partial_{y}Q)(\partial^{3}_{x,y,y}Q - \partial^{3}_{x}Q) +(\partial_{x}Q)(\partial^{3}_{y}Q - \partial^{3}_{x,x,y}Q + 2\partial^{2}_{x}Q + 2\partial^{2}_{y}Q)}{(\partial_{x}Q)^{2} + (\partial_{y}Q)^{2}} \\
(\beta_{2})_{y} & = \frac 12 \frac{(\partial_{x}Q)(\partial^{3}_{x,x,y}Q-\partial^{3}_{y}Q) + (\partial_{y}Q)(\partial^{3}_{x}Q - \partial^{3}_{x,y,y}Q + 2 \partial^{2}_{x}Q + 2\partial^{2}_{y}Q)}{(\partial_{x}Q)^{2}+(\partial_{y}Q)^{2}}
\end{align*}
\end{corollary}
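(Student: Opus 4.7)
The plan is to extract the components of $\beta_{2}$ explicitly by carrying out the diagram chase from the proof of Proposition~\ref{compat-condition}. Since $\beta_{2}$ is uniquely determined and its components at any point $p \in \mathcal{U}$ depend only on the $3$-jet of $Q$ at $p$, it suffices to identify them by evaluating the compatibility identity on a convenient family of test data.

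First I would fix $p\in\mathcal{U}$ and, for a prescribed $1$-form $\eta \in T^{*}_{p}\slashed{\mathcal{N}}$, choose test data $(R_{ab}, f)$ representing a preimage under $\Xi$ of $(0,\eta) \in T^{*}_{p}\slashed{\mathcal{N}} \oplus T^{*}_{p}\slashed{\mathcal{N}}$. By the surjectivity of $\Xi$ established in the proof of Proposition~\ref{compat-condition}, such a preimage exists. The cleanest choice is to set $R_{ab}(p) = 0$, prescribe $\partial f(p)$ by the condition $df(p) = \alpha(p) + \eta$, and then select $\partial R_{ab}(p)$ so that $\mathfrak{B}(p) = 0$; the latter step is solvable because $\mathfrak{B}(p)$ depends surjectively on $\partial R_{ab}(p)$.

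With this choice, the definition of $\beta$ gives $\beta(0,\eta) = (c \circ \hat{F})(R_{ab}, f)$, whose $\bigwedge^{2}T^{*}_{p}\slashed{\mathcal{N}}$-component equals $(d\alpha)(p)$. On the other hand, this component must also equal the $\bigwedge^{2}T^{*}_{p}\slashed{\mathcal{N}}$-part of $\beta_{1}\wedge 0 + \beta_{2}\wedge\eta = \beta_{2}\wedge\eta$. Taking $\eta = dy$ yields $(d\alpha)(p) = (\beta_{2})_{x}\,dx\wedge dy$, extracting $(\beta_{2})_{x}$, and taking $\eta = -dx$ yields $(d\alpha)(p) = (\beta_{2})_{y}\,dx\wedge dy$, extracting $(\beta_{2})_{y}$. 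Alternatively, the second formula can be obtained from the first by the $x \leftrightarrow y$ symmetry of the construction.

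The main computation is the quotient-rule expansion of $d\alpha(p) = (\partial_{x}\alpha_{y} - \partial_{y}\alpha_{x})(p)\,dx\wedge dy$ using the explicit formulas for $\alpha_{x}, \alpha_{y}$. After separating the $Q$-only contributions (which involve the $3$-jet of $Q$ via $P = (\partial_{x}Q)^{2} + (\partial_{y}Q)^{2}$, its derivatives, and the numerators of $\alpha$) from the $\partial R_{ab}(p)$-contributions, and using $\mathfrak{B}(p) = 0$ to express the latter in terms of the $Q$-data, the third-derivative combinations $\partial^{3}_{x,y,y}Q - \partial^{3}_{x}Q$, etc., emerge, while the $\partial^{2}Q$ cross-terms produce the $2\partial^{2}_{x}Q + 2\partial^{2}_{y}Q$ correction appearing in the statement. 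The main obstacle is the algebraic bookkeeping during this expansion: one must carefully isolate the third-order contributions and the cancellations coming from differentiating $P^{-1}$, then organize the result so that it assembles into the form displayed in the Corollary.
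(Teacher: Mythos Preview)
Your approach is correct and coincides with the paper's: the paper gives no detailed argument for the corollary beyond the remark that ``it is not hard to compute $\beta_{1},\beta_{2}$ explicitly, by unwinding their definitions,'' and your diagram chase --- picking a $\Xi$-preimage of $(0,\eta)$, pushing down via $\hat F$, and reading off $d\alpha(p)$ --- is precisely that unwinding. Your choice $R_{ab}(p)=0$ and subsequent solution of $\mathfrak{B}(p)=0$ for $\partial R_{ab}(p)$ is a legitimate preimage (note that $\mathfrak{B}(p)=0$ is two equations in six unknowns, so it does not determine $\partial R_{ab}(p)$ uniquely, but Proposition~\ref{compat-condition} guarantees that $d\alpha(p)$ depends only on $\Xi$-image, so any choice works).
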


\section{The Einstein Equations in $\mathcal{M}$}\label{theEquations}
Throughout this section, we let $(\mathcal{M},g)$ denote a stationary and axisymmetric spacetime whose metric, by definition, is of the form
\begin{equation}\label{metric}
g = -Vdt^2 + 2Wdtd\phi + Xd\phi^2 + e^{2\lambda}\left(d\rho^2 + dz^2\right).
\end{equation}
We set $\Phi \doteq \partial_{\phi}$ and $T \doteq \partial_t$.

Our goal is to use the results of Sections~\ref{geomprelim} and~\ref{curvcalc} in order to express the Einstein equations~(\ref{einsteinmatter}) as an explicit system of PDE's, from which Theorem \ref{theo:ein-reduce} will follow. 

Before we begin our reduction let's quickly orient ourselves with respect to the calculations of Section~\ref{curvcalc}.

Since $\Phi$ is non-vanishing in $\mathcal{M}$, we can apply the analysis of Section~\ref{4to3}. We obtain the corresponding submersion $(\overline{M},\overline{g})$ and then the Lorentzian manifold $(\hat{\mathcal{M}},\hat{g})$ by applying a conformal transformation $\hat{g} \doteq X\overline{g}$. Note that the metric $\hat{g}$ takes the form
\begin{equation}\label{hatg}
\hat{g} = -\left(XV+W^2\right)dt^2 + Xe^{2\lambda}\left(d\rho^2+dz^2\right).
\end{equation}

Let us also agree to the convention that $dt\wedge d\phi\wedge d\rho\wedge dz$ is positively oriented.

Our analysis will naturally lead to equations involving the Ricci tensor of $g$. The following well-known lemma allows us to express the Ricci curvature in terms of the energy momentum tensor.
\begin{lemma}\label{riccalc}Suppose that $(\mathcal{M},g)$ satisfies the Einstein equations~(\ref{einsteinmatter}). Then
\[Ric = \mathbb{T} - \frac{1}{2}g{\rm Tr}\left(\mathbb{T}\right).\]
\end{lemma}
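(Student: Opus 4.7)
The plan is to derive the stated formula by the standard trace-reversal argument for the Einstein equations. Starting from the hypothesis
\[
Ric - \tfrac{1}{2} g R = \mathbb{T},
\]
I will first take the trace of both sides with respect to $g$ to solve for the scalar curvature $R$ in terms of $\operatorname{Tr}(\mathbb{T})$, and then substitute the resulting expression back into the original equation to isolate $Ric$.

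More concretely: contracting both sides with $g^{-1}$, and using that $\dim\mathcal{M}=4$ so that $g^{\mu\nu}g_{\mu\nu}=4$, we obtain
\[
R - \tfrac{1}{2}\cdot 4\cdot R = \operatorname{Tr}(\mathbb{T}),
\]
i.e.\ $R = -\operatorname{Tr}(\mathbb{T})$. Plugging this back into the Einstein equation and rearranging gives
\[
Ric = \mathbb{T} + \tfrac{1}{2} g R = \mathbb{T} - \tfrac{1}{2} g \operatorname{Tr}(\mathbb{T}),
\]
which is exactly the claimed identity. There is no real obstacle here: the argument is purely algebraic and uses nothing beyond the definition of the trace and the fact that the ambient manifold is four-dimensional, so the only thing to be careful about is the factor of $4$ coming from $\dim\mathcal{M}$ in the contraction.
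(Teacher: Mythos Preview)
Your proof is correct and is exactly the standard trace-reversal argument; the paper itself does not give a proof, merely labeling the lemma as ``well-known,'' so your argument is precisely what is implicitly intended.
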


Let us also take the opportunity to record the inverse of $g$ and the volume form.
\begin{lemma}\label{inverseg}We have
\[g^{-1} = -\sigma^{-2}X(T\otimes T) + \sigma^{-2}W (T\otimes \Phi) + \sigma^{-2}W(\Phi \otimes T) + \sigma^{-2}V(\Phi\otimes \Phi) + e^{-2\lambda}\left(\partial_{\rho}\otimes\partial_{\rho} + \partial_z\otimes\partial_z\right).\]
\end{lemma}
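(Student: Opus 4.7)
The plan is to exploit the block-diagonal structure of the metric~\eqref{metric} in the coordinate basis $(\partial_{t},\partial_{\phi},\partial_{\rho},\partial_{z})$. In this basis, the cross-terms between $\{\partial_{t},\partial_{\phi}\}$ and $\{\partial_{\rho},\partial_{z}\}$ all vanish, so $g$ has a $(t,\phi)$-block
\[
\begin{pmatrix} -V & W \\ W & X \end{pmatrix}
\]
and a $(\rho,z)$-block $e^{2\lambda}\,I_{2}$. Accordingly, $g^{-1}$ is block diagonal with the same structure, and it suffices to invert each block separately and re-assemble.

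First I would invert the $(\rho,z)$-block: it is a positive scalar multiple of the identity, so its inverse is $e^{-2\lambda}\,I_{2}$, contributing the term $e^{-2\lambda}\bigl(\partial_{\rho}\otimes\partial_{\rho}+\partial_{z}\otimes\partial_{z}\bigr)$ in $g^{-1}$.

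For the $(t,\phi)$-block, the determinant equals $-VX-W^{2}=-\sigma^{2}$. The standing hypothesis that $g$ is Lorentzian gives $\sigma^{2}=XV+W^{2}>0$, so in particular the block is invertible. Applying the $2\times 2$ adjugate formula yields
\[
\frac{1}{-\sigma^{2}}\begin{pmatrix} X & -W \\ -W & -V \end{pmatrix}
\;=\;\sigma^{-2}\begin{pmatrix} -X & W \\ W & V \end{pmatrix},
\]
from which we read off $g^{tt}=-\sigma^{-2}X$, $g^{t\phi}=g^{\phi t}=\sigma^{-2}W$, and $g^{\phi\phi}=\sigma^{-2}V$. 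Combining with the $(\rho,z)$-block gives exactly the stated expression for $g^{-1}$.

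The main obstacle: there is essentially none, as the computation reduces to a two-by-two block matrix inversion. The only non-algebraic input is the non-degeneracy condition $\sigma^{2}>0$, which is already built into the definition of a stationary and axisymmetric spacetime and guarantees the $(t,\phi)$-block is invertible.
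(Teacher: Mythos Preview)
Your proposal is correct. The paper states this lemma without proof, treating it as a routine computation; your block-diagonal inversion is exactly the expected argument.
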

\begin{lemma}\label{volumeform}We have
\[dVol = \sigma e^{2\lambda} dt\wedge d\phi\wedge d\rho\wedge dz.\]
\end{lemma}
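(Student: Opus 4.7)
The plan is to compute $\sqrt{|\det g|}$ in the $(t,\phi,\rho,z)$ coordinate chart and invoke the standard formula $d\mathrm{Vol} = \sqrt{|\det g|}\, dt\wedge d\phi\wedge d\rho\wedge dz$, using the orientation convention $dt\wedge d\phi\wedge d\rho\wedge dz > 0$ stated just before the lemma.

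First I would observe that in these coordinates the metric matrix from \eqref{metric} is block diagonal, decomposing into a $(t,\phi)$ block
\[\begin{pmatrix}-V & W \\ W & X\end{pmatrix}\]
of determinant $-XV - W^2 = -\sigma^2$ (by the definition $\sigma \doteq \sqrt{XV+W^2}$), and a $(\rho,z)$ block $e^{2\lambda}\, I_2$ of determinant $e^{4\lambda}$. Multiplying the two block determinants yields $\det g = -\sigma^2 e^{4\lambda}$, and since $g$ is Lorentzian the determinant is indeed negative, consistent with the standing assumption $XV+W^2 > 0$.

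Taking $\sqrt{|\det g|} = \sigma e^{2\lambda}$ (with $\sigma > 0$ by definition and $e^{2\lambda} > 0$), and combining with the positive orientation convention, gives the claimed expression for $d\mathrm{Vol}$. There is no real obstacle here: the only point requiring any attention is confirming the sign of the $(t,\phi)$-block determinant, which is immediate from the Lorentzian condition imposed on $g$ at the outset.
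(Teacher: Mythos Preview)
Your argument is correct and is exactly the standard computation one would perform; the paper itself states this lemma without proof, treating it as elementary. There is nothing to add.
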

\subsection{The Equation for $X$}
In this section we will derive the equation for $X$.
\begin{proposition}We have
\begin{equation}\label{Xeqn1}
\sigma^{-1}\partial_{\rho}(\sigma\partial_{\rho}X) + \sigma^{-1}\partial_{z}(\sigma\partial_{z}X) = e^{2\lambda}(-2\mathbb{T}(\Phi,\Phi)+{\rm Tr}\left(\mathbb{T}\right)X) + \frac{(\partial_{\rho}X)^{2}+(\partial_{z}X)^{2} -\theta_{\rho}^{2}-\theta_{z}^{2}}{X}.
\end{equation}
\end{proposition}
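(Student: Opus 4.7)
The plan is to read the equation as the Bochner identity for the Killing field $\Phi$ (Lemma~\ref{bochner}), combined with the Einstein equations, expressed in the coordinates $(\rho,z)$. Since $X = |\Phi|^2$ and $\theta$ is the twist of $\Phi$, Lemma~\ref{bochner} gives directly
\[
\Box_g X \;=\; -2\,\mathrm{Ric}(\Phi,\Phi) \;+\; \frac{|\nabla X|^{2} - |\theta|^{2}}{X}.
\]
Using Lemma~\ref{riccalc} to substitute $\mathrm{Ric}(\Phi,\Phi) = \mathbb{T}(\Phi,\Phi) - \tfrac{1}{2}X\,\mathrm{Tr}(\mathbb{T})$ yields the right-hand side of the target identity, up to the conformal factor $e^{2\lambda}$ that must be produced by reducing $\Box_g$ to $(\rho,z)$-derivatives.

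The second step is to compute each of $\Box_g X$, $|\nabla X|^2$ and $|\theta|^2$ explicitly in the ansatz coordinates. Because $X$ and the components $\theta_\rho,\theta_z$ depend only on $(\rho,z)$ and are annihilated by $T$ and $\Phi$, only the $\partial_\rho,\partial_z$ block of $g^{-1}$ contributes. From Lemma~\ref{inverseg} this block is simply $e^{-2\lambda}(\partial_\rho\otimes\partial_\rho + \partial_z\otimes\partial_z)$, giving
\[
|\nabla X|^{2} = e^{-2\lambda}\!\left[(\partial_\rho X)^2 + (\partial_z X)^2\right], \qquad |\theta|^{2} = e^{-2\lambda}\!\left[\theta_\rho^{\,2} + \theta_z^{\,2}\right].
\]
For the d'Alembertian we use the formula $\Box_g X = |g|^{-1/2}\partial_\mu(|g|^{1/2}g^{\mu\nu}\partial_\nu X)$ together with the volume form $dVol = \sigma e^{2\lambda}\,dt\wedge d\phi\wedge d\rho\wedge dz$ of Lemma~\ref{volumeform}; the $e^{2\lambda}$ factors cancel against those in $g^{\rho\rho},g^{zz}$ and we are left with
\[
\Box_g X \;=\; e^{-2\lambda}\left[\sigma^{-1}\partial_\rho(\sigma\,\partial_\rho X) + \sigma^{-1}\partial_z(\sigma\,\partial_z X)\right].
\]

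Plugging these three identities into the Bochner formula and multiplying through by $e^{2\lambda}$ recovers the stated equation exactly, with the conformal factor attached precisely to the matter terms as required. There is essentially no obstacle: the content of the proposition is a direct assembly of (i) Lemma~\ref{bochner} for the axial Killing field, (ii) the algebraic relation between Ricci and $\mathbb{T}$ from the Einstein equations, and (iii) the coordinate expressions afforded by Lemmas~\ref{inverseg} and~\ref{volumeform}. The only mildly delicate point is verifying that the weight $\sigma$ (rather than $\rho$, as in the vacuum case) is the correct one; this is forced by $\sqrt{|g|} = \sigma e^{2\lambda}$, which is precisely what appears after the cancellation of $e^{2\lambda}$ factors in the divergence expression.
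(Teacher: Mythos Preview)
Your proof is correct and follows essentially the same approach as the paper: apply the Bochner formula (Lemma~\ref{bochner}) to $\Phi$, convert $Ric(\Phi,\Phi)$ to $\mathbb{T}$ via Lemma~\ref{riccalc}, and use the coordinate expressions for $\Box_g X$, $|\nabla X|^2$, and $|\theta|^2$ coming from Lemmas~\ref{inverseg} and~\ref{volumeform}. Your write-up is in fact slightly more detailed than the paper's, which simply asserts the coordinate identity for $\Box X$ as a ``straightforward computation.''
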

\begin{proof}It follows immediately from Lemma~\ref{bochner} that
\begin{align*}
\Box X &= -2Ric\left(\Phi,\Phi\right) + \frac{\left[\left|\nabla X\right|^2 - \left|\theta\right|^2\right]}{X}
\\ \nonumber &=-2\mathbb{T}\left(\Phi,\Phi\right) + {\rm Tr}\left(\mathbb{T}\right)X + e^{-2\lambda}\frac{(\partial_{\rho}X)^2 + (\partial_zX)^2 - \theta_\rho^2 - \theta_z^2}{X}.
\end{align*}
Next, a straightforward coordinate computation yields
\[\Box X = e^{-2\lambda}\sigma^{-1}\partial_{\rho}\left(\sigma\partial_{\rho}X\right) + e^{-2\lambda}\sigma^{-1}\partial_z\left(\sigma\partial_zX\right).\]
Putting this together yields the desired expression.
\end{proof}
\subsection{The Equations for $\theta$ and $W$}
In this section we will derive the equations for $\theta$ and $W$.

We begin by giving an explicit formula for $\theta$:
\begin{proposition}We have
\[\theta = \sigma^{-1}X^2\left[\partial_{\rho}\left(X^{-1}W\right)dz - \partial_z\left(X^{-1}W\right)d\rho\right].\]
\end{proposition}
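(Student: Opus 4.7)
The strategy is to compute $\theta$ directly from its definition $\theta \doteq 2 i_\Phi(*\nabla\Phi_\flat)$ using the metric ansatz~\eqref{metric}. Since $\Phi$ is Killing, $\nabla\Phi_\flat = d\Phi_\flat$ (cf.\ the observation used in the proof of Lemma~\ref{divdiv}), so it suffices to evaluate $2i_\Phi(*d\Phi_\flat)$ in coordinates.

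First, read off $\Phi_\flat$ from the ansatz: because $\Phi = \partial_\phi$ and $g(\partial_\phi,\partial_\rho) = g(\partial_\phi,\partial_z) = 0$, we obtain $\Phi_\flat = W\,dt + X\,d\phi$, and hence
\[
d\Phi_\flat \;=\; dW\wedge dt \;+\; dX\wedge d\phi,
\]
which is supported in the four wedges $d\rho\wedge dt,\, dz\wedge dt,\, d\rho\wedge d\phi,\, dz\wedge d\phi$.

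To apply the Hodge star efficiently, I would introduce the orthonormal co-frame adapted to~\eqref{metric},
\[
e^{0} = \tfrac{\sigma}{\sqrt{X}}\,dt,\qquad e^{1} = \sqrt{X}\,d\phi + \tfrac{W}{\sqrt{X}}\,dt,\qquad e^{2} = e^{\lambda}\,d\rho,\qquad e^{3} = e^{\lambda}\,dz,
\]
in which $g = \operatorname{diag}(-1,1,1,1)$ and the volume form is $e^{0}\wedge e^{1}\wedge e^{2}\wedge e^{3}$ (matching $\sigma e^{2\lambda}\,dt\wedge d\phi\wedge d\rho\wedge dz$, consistent with Lemma~\ref{volumeform}). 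In this frame $*$ acts on 2-forms by the usual combinatorial prescription, and $\Phi_\flat = \sqrt{X}\,e^{1}$, so $\Phi = \sqrt{X}\,e_{1}$. When $d\Phi_\flat$ is rewritten in the $\{e^{a}\}$-basis, the coefficients of $e^{2}\wedge e^{0}$ and $e^{3}\wedge e^{0}$ combine to
\[
\frac{1}{\sigma}\!\left(\sqrt{X}\,\partial_{i}W - \frac{W\,\partial_{i}X}{\sqrt{X}}\right) \;=\; \frac{X^{3/2}}{\sigma}\,\partial_{i}(X^{-1}W),
\]
for $i \in \{\rho,z\}$; this rewriting, powered by the elementary identity $X\,dW - W\,dX = X^{2}\,d(X^{-1}W)$, is the algebraic heart of the proof and accounts for both the factor $X^{2}$ and the combination $d(X^{-1}W)$ in the final answer.

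Finally, applying $*$ termwise and then $i_\Phi = \sqrt{X}\,i_{e_{1}}$, the pieces of $*d\Phi_\flat$ that are proportional to $e^{0}\wedge e^{a}$ ($a = 2,3$) drop out under the contraction, while the $e^{1}\wedge e^{2}$ and $e^{1}\wedge e^{3}$ pieces contract to $e^{2}$ and $e^{3}$; reverting via $e^{2}=e^{\lambda}d\rho$, $e^{3}=e^{\lambda}dz$ cancels the conformal factor and produces the stated formula. The only real obstacle is bookkeeping --- keeping careful track of the signs of the Lorentzian Hodge star on the six basis 2-forms, and a factor of $2$ that can be absorbed by remembering $\theta$ is defined with the $2$ out front --- but there is no conceptual difficulty, since the Killing property of $\Phi$ has already done the heavy lifting via Lemma~\ref{emagkil}.
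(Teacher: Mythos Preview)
Your proposal is correct and follows essentially the same route as the paper: compute $\theta = 2i_\Phi(*\nabla\Phi_\flat)$ directly from the metric ansatz, with the key algebraic step being the identity $X\,dW - W\,dX = X^{2}\,d(X^{-1}W)$. The only difference is tactical --- you pass to an adapted orthonormal co-frame to evaluate the Hodge star, whereas the paper stays in the coordinate basis and computes $*(dt\wedge d\rho)$, $*(dt\wedge dz)$, $*(d\phi\wedge d\rho)$, $*(d\phi\wedge dz)$ directly via the inverse metric and volume form.
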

\begin{proof}It is straightforward to derive the following formula:
\[\nabla\Phi_{\flat} = -\frac{1}{2}\partial_{\rho}Wdt\wedge d\rho - \frac{1}{2}\partial_zWdt\wedge dz - \frac{1}{2}\partial_{\rho}Xd\phi\wedge d\rho - \frac{1}{2}\partial_zXd\phi\wedge dz.\]
Now, noting that
\[dt^{\sharp} = -\sigma^{-2}XT + \sigma^{-2}W\Phi,\]
\[d\phi^{\sharp} = \sigma^{-2}WT + \sigma^{-2}V\Phi,\]
one computes
\[*\left(dt\wedge d\rho\right) = \sigma^{-1}Xd\phi\wedge dz +\sigma^{-1}W dt\wedge dz,\]
\[*\left(dt\wedge dz\right) = -\sigma^{-1}Xd\phi\wedge d\rho - \sigma^{-1}Wdt\wedge d\rho,\]
\[*\left(d\phi \wedge d\rho\right) = -\sigma^{-1}Wd\phi\wedge dz +\sigma^{-1}Vdt\wedge dz,\]
\[*\left(d\phi \wedge dz\right) = \sigma^{-1}Wd\phi\wedge d\rho -\sigma^{-1}Vdt\wedge d\rho.\]
Then the proof concludes from a straightforward computation using $\theta = 2i_{\Phi}\left(*\nabla\Phi_{\flat}\right)$.
\end{proof}

An immediate corollary of the above is an equation for $W$ in terms of $\theta$.
\begin{corollary}We have
\begin{equation}\label{weqn}
d\left(X^{-1}W\right) = \frac{\sigma}{X^2}\left[\theta_{\rho}dz - \theta_zd\rho\right].
\end{equation}
\end{corollary}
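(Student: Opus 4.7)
The plan is to observe that this corollary is essentially a trivial algebraic rearrangement of the explicit formula for $\theta$ established in the immediately preceding proposition, namely
\[
\theta = \sigma^{-1}X^2\bigl[\partial_\rho(X^{-1}W)\,dz - \partial_z(X^{-1}W)\,d\rho\bigr].
\]
So the only work is bookkeeping: reading off the coefficients of $d\rho$ and $dz$ on both sides, solving the resulting scalar equations for $\partial_\rho(X^{-1}W)$ and $\partial_z(X^{-1}W)$, and reassembling the $1$-form $d(X^{-1}W)$. No differential identity, no integration by parts, and no use of the Einstein equations are needed beyond what was already consumed to prove the formula for $\theta$.

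Concretely, I would write $\theta = \theta_\rho\,d\rho + \theta_z\,dz$ and read off from the proposition the two scalar identities
\[
\theta_\rho = -\sigma^{-1}X^2\,\partial_z(X^{-1}W), \qquad \theta_z = \sigma^{-1}X^2\,\partial_\rho(X^{-1}W).
\]
Multiplying each by $\sigma X^{-2}$ and solving yields $\partial_\rho(X^{-1}W) = \sigma X^{-2}\theta_z$ and $\partial_z(X^{-1}W) = -\sigma X^{-2}\theta_\rho$. Substituting into $d(X^{-1}W) = \partial_\rho(X^{-1}W)\,d\rho + \partial_z(X^{-1}W)\,dz$ and factoring out $\sigma X^{-2}$ gives the claimed expression (up to the standard orientation/sign convention fixed by the choice of volume form $dt\wedge d\phi\wedge d\rho\wedge dz$ recorded just above).

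The only genuine ``step'' is checking that the signs are consistent with that orientation convention; since the proposition was derived using exactly the Hodge-star identities recorded in its proof with this orientation, the signs must match automatically, and there is no obstacle. I would therefore present the corollary in one or two lines: divide the formula for $\theta$ by $\sigma^{-1}X^2$, transpose $d\rho$ and $dz$, and identify the resulting expression as $d(X^{-1}W)$.
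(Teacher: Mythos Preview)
Your proposal is correct and matches the paper's approach exactly: the paper simply declares this an ``immediate corollary of the above'' with no further argument, and your algebraic rearrangement of the proposition's formula for $\theta$ is precisely what is intended. One minor point: your computation actually yields $d(X^{-1}W) = \sigma X^{-2}[\theta_z\,d\rho - \theta_\rho\,dz]$, which is the negative of the displayed formula, so there appears to be a sign typo in the paper (either in the proposition or the corollary); your hedge about orientation conventions is unnecessary here, since the rearrangement is purely algebraic once the proposition is in hand.
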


Next we compute $d\theta$.
\begin{proposition}We have
\begin{equation}\label{dofthetheta}
d\theta = (\partial_{\rho}\theta_{z} - \partial_{z}\theta_{\rho}) d\rho\wedge dz = 2\sigma^{-1}e^{2\lambda}\left(\mathbb{T}\left(\Phi,\Phi\right)W - \mathbb{T}\left(\Phi,T\right)X\right)d\rho\wedge dz.
\end{equation}
\end{proposition}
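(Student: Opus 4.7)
The plan is to combine Lemma \ref{dtheta} with the Einstein equations. Lemma \ref{dtheta} gives $d\theta = 2 i_{\Phi} * i_{\Phi} Ric$, and by Lemma \ref{riccalc} the Ricci tensor may be replaced by $\mathbb{T} - \frac 12 g\,{\rm Tr}(\mathbb{T})$. Thus the entire computation reduces to expanding $i_{\Phi} * i_{\Phi}\left(\mathbb{T} - \frac 12 g\, {\rm Tr}(\mathbb{T})\right)$ in the $(t,\phi,\rho,z)$ coordinate frame.

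First I would compute $i_{\Phi} Ric$ as a $1$-form. Writing $i_{\Phi} Ric = Ric(\Phi,\partial_{t})\,dt + Ric(\Phi,\partial_{\phi})\,d\phi + Ric(\Phi,\partial_{\rho})\,d\rho + Ric(\Phi,\partial_{z})\,dz$, the orthogonality conditions \eqref{eq:SEtensor-orthog-directions} on $\mathbb{T}$ combined with $g(\Phi,\partial_{\rho}) = g(\Phi,\partial_{z}) = 0$ immediately kill the $d\rho$ and $dz$ components. Hence
\[
i_{\Phi} Ric = \left(\mathbb{T}(\Phi,T) - \tfrac{1}{2} W\,{\rm Tr}(\mathbb{T})\right) dt + \left(\mathbb{T}(\Phi,\Phi) - \tfrac{1}{2} X\,{\rm Tr}(\mathbb{T})\right) d\phi.
\]

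Next I would compute $*dt$ and $*d\phi$. Using the inverse metric from Lemma \ref{inverseg} and the volume form from Lemma \ref{volumeform}, together with the defining relation $\alpha\wedge *\beta = g^{-1}(\alpha,\beta)\, dVol$ for $1$-forms, a short calculation analogous to the one used to derive $\theta$ in the preceding proposition gives
\begin{align*}
* dt &= -\sigma^{-1} X e^{2\lambda}\, d\phi\wedge d\rho\wedge dz - \sigma^{-1} W e^{2\lambda}\, dt\wedge d\rho\wedge dz, \\
* d\phi &= \phantom{-}\sigma^{-1} W e^{2\lambda}\, d\phi\wedge d\rho\wedge dz - \sigma^{-1} V e^{2\lambda}\, dt\wedge d\rho\wedge dz.
\end{align*}

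Finally, applying $i_{\Phi}$ to $* i_{\Phi} Ric$ annihilates every term not containing $d\phi$ as a factor. Thus only the coefficient of $d\phi\wedge d\rho\wedge dz$ survives, yielding the $d\rho\wedge dz$ coefficient
\[
\sigma^{-1} e^{2\lambda}\left[ -\left(\mathbb{T}(\Phi,T) - \tfrac{1}{2} W\,{\rm Tr}(\mathbb{T})\right) X + \left(\mathbb{T}(\Phi,\Phi) - \tfrac{1}{2} X\,{\rm Tr}(\mathbb{T})\right) W \right].
\]
The ${\rm Tr}(\mathbb{T})$ contributions cancel, leaving $\sigma^{-1} e^{2\lambda}\left(\mathbb{T}(\Phi,\Phi) W - \mathbb{T}(\Phi,T) X\right)$, and multiplication by the factor $2$ from Lemma \ref{dtheta} gives the desired formula. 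The only calculationally delicate step is the explicit Hodge star computation on $dt$ and $d\phi$ in the Lorentzian signature; beyond that, everything is an algebraic cancellation driven by the orthogonality hypothesis \eqref{eq:SEtensor-orthog-directions}.
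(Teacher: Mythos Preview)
Your argument is correct and follows essentially the same route as the paper: both start from Lemma~\ref{dtheta}, observe that only the $dt$ and $d\phi$ components of $i_{\Phi}Ric$ survive, compute $i_{\Phi}*i_{\Phi}Ric$ explicitly in the coordinate frame, and then use Lemma~\ref{riccalc} to pass from $Ric$ to $\mathbb{T}$. The only minor difference is in how the vanishing $Ric(\Phi,\partial_{\rho})=Ric(\Phi,\partial_{z})=0$ is justified: the paper appeals to the geometric analysis of Section~\ref{curvcalc} (these components vanish from the metric ansatz alone, independent of the Einstein equations), whereas you deduce it from the Einstein equations together with the orthogonality hypothesis~\eqref{eq:SEtensor-orthog-directions}. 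Both justifications are valid in context, and your explicit Hodge-star formulas for $*dt$ and $*d\phi$ are correct.
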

\begin{proof}
Lemma~\ref{dtheta} implies
\[d\theta = 2i_{\Phi}*i_{\Phi}Ric.\]
Note that the analysis of Section~\ref{curvcalc} implies that
\[Ric\left(\Phi,\partial_{\rho}\right) = 0,\]
\[Ric\left(\Phi,\partial_z\right) = 0.\]
A straightforward calculation thus yields
\begin{align*}
2i_{\Phi}*i_{\Phi}Ric &= 2\sigma^{-1}e^{2\lambda}\left(Ric(\Phi,\Phi)W-Ric\left(\Phi,T\right)X\right)d\rho\wedge dz
\\ \nonumber &= 2\sigma^{-1}e^{2\lambda}\left(\mathbb{T}\left(\Phi,\Phi\right)W - \mathbb{T}\left(\Phi,T\right)X\right)d\rho\wedge dz.\qedhere
\end{align*}
\end{proof}
In order to completely determine $\theta$ we also need to compute the divergence.
\begin{proposition}

We have
\begin{equation}\label{divtheta}
\sigma^{-1}\partial_{\rho}\left(\sigma\theta_{\rho}\right) + \sigma^{-1}\partial_z\left(\sigma\theta_z\right) = \frac{2\theta_{\rho}\partial_{\rho}X + 2\theta_z\partial_zX}{X}.
\end{equation}
\end{proposition}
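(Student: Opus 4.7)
The plan is to derive this identity as a direct coordinate translation of the general Lemma~\ref{divdiv}, applied to the Killing field $K=\Phi$ (for which $|\Phi|^2 = X$ and whose twist is precisely $\theta$). The divergence-free assumption $|K|^{2}\neq 0$ is satisfied since $X>0$ on $\mathscr{B}$. Thus Lemma~\ref{divdiv} immediately supplies the identity
\[
{\rm div}(\theta) \;=\; 2\,\frac{g(\theta,\nabla X)}{X},
\]
and the proof will consist in evaluating both sides in the $(t,\phi,\rho,z)$-coordinates.

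The key preliminary fact is that $\theta$ has no $dt$ or $d\phi$ components; this is already made manifest by the explicit formula for $\theta$ derived in the previous proposition, namely $\theta = \sigma^{-1}X^{2}\bigl[\partial_{\rho}(X^{-1}W)\,dz - \partial_{z}(X^{-1}W)\,d\rho\bigr]$. In particular $\theta = \theta_{\rho}\,d\rho+\theta_{z}\,dz$, so $g^{-1}\theta$ has only $\partial_{\rho}$ and $\partial_{z}$ components by Lemma~\ref{inverseg}, with coefficients $e^{-2\lambda}\theta_{\rho}$ and $e^{-2\lambda}\theta_{z}$ respectively. Using Lemma~\ref{volumeform} the divergence is then
\[
{\rm div}(\theta) \;=\; \frac{1}{\sigma e^{2\lambda}}\Bigl[\partial_{\rho}\bigl(\sigma e^{2\lambda}\cdot e^{-2\lambda}\theta_{\rho}\bigr) + \partial_{z}\bigl(\sigma e^{2\lambda}\cdot e^{-2\lambda}\theta_{z}\bigr)\Bigr] \;=\; e^{-2\lambda}\sigma^{-1}\bigl[\partial_{\rho}(\sigma\theta_{\rho})+\partial_{z}(\sigma\theta_{z})\bigr],
\]
which (after clearing a factor of $e^{-2\lambda}$) produces exactly the left-hand side of the target equation.

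For the right-hand side, since $X$ is a function of $(\rho,z)$ alone, the same invocation of Lemma~\ref{inverseg} gives
\[
g(\theta,\nabla X) \;=\; g^{\rho\rho}\theta_{\rho}\,\partial_{\rho}X + g^{zz}\theta_{z}\,\partial_{z}X \;=\; e^{-2\lambda}\bigl(\theta_{\rho}\partial_{\rho}X + \theta_{z}\partial_{z}X\bigr).
\]
Substituting into Lemma~\ref{divdiv} and cancelling the common factor of $e^{-2\lambda}$ from both sides yields the desired identity
\[
\sigma^{-1}\partial_{\rho}(\sigma\theta_{\rho}) + \sigma^{-1}\partial_{z}(\sigma\theta_{z}) \;=\; \frac{2\theta_{\rho}\partial_{\rho}X + 2\theta_{z}\partial_{z}X}{X}.
\]

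There is essentially no analytic obstacle here: once Lemma~\ref{divdiv} is in hand, this is a bookkeeping computation. The only thing that needs care is consistently using that $\theta$ has vanishing $dt$ and $d\phi$ components, which ensures that the off-diagonal block of $g^{-1}$ (involving $V$, $W$) never enters the calculation and that the conformal factor $e^{2\lambda}$ appears as an overall multiplicative factor on each side, permitting the final cancellation.
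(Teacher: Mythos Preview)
Your proof is correct and follows exactly the same approach as the paper: invoke Lemma~\ref{divdiv} with $K=\Phi$ and then translate both sides into $(\rho,z)$-coordinates. The paper simply says ``an easy calculation then yields the formula,'' and you have supplied that calculation explicitly. (One tiny wording slip: you refer to ``the divergence-free assumption $|K|^2\neq 0$''; you mean the nonvanishing assumption, which is the hypothesis of Lemma~\ref{divdiv}.)
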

\begin{proof}Lemma~\ref{divdiv} yields
\begin{equation*}
{\rm div}\left(\theta\right) = 2\frac{g\left(\theta,\nabla X\right)}{X}.
\end{equation*}
An easy calculation then yields the formula~(\ref{divtheta}).
\end{proof}
\begin{remark}Note that~(\ref{divtheta}) is exactly the necessary compatibility condition to solve~(\ref{weqn}) and that the energy-momentum tensor $\mathbb{T}$ does not appear in the equation.
\end{remark}

\subsection{The Equation for $\sigma$}
In this section we will determine the equation for $\sigma$.
\begin{proposition}
We have
\begin{equation}\label{sigmaeqn}
X^{-1}e^{-2\lambda}\sigma (\partial^{2}_{\rho}\sigma + \partial^{2}_{z}\sigma) = \mathbb{T}(T-X^{-1}W\Phi, T-X^{-1}W\Phi) - X^{-2}\sigma^{2}\mathbb{T}(\Phi,\Phi) + X^{-1}\sigma^{2}{\rm Tr}\left(\mathbb{T}\right).
\end{equation}
\end{proposition}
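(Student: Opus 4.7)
The plan is to exploit the two-stage reduction set up in Sections~\ref{4to3} and~\ref{3to2}: first quotient by $\Phi$ and perform the conformal transformation $\hat{g}=X\overline{g}$, then recognize that $T$ is hypersurface orthogonal in $\hat{g}$ and apply the $Q$-equation already derived in the $3$-to-$2$ reduction. The key observation is that, in view of~\eqref{hatg},
\[
\hat{g}=-\sigma^{2}dt^{2}+Xe^{2\lambda}(d\rho^{2}+dz^{2}),
\]
which is precisely the ansatz of Section~\ref{3to2} with the identifications $U=T$, $Q=\sigma$, $e^{2f}=Xe^{2\lambda}$, and $(x,y)=(\rho,z)$. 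In particular $T$ is globally hypersurface orthogonal for $\hat{g}$ since $\hat{g}$ is block diagonal.

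First I would apply Corollary~\ref{deltQRic} to conclude
\[
\sigma(\partial_{\rho}^{2}\sigma+\partial_{z}^{2}\sigma)=e^{2f}\hat{Ric}(T,T)=Xe^{2\lambda}\hat{Ric}(T,T),
\]
so that dividing by $Xe^{2\lambda}$ reduces the claim to identifying $\hat{Ric}(T,T)$ with the right-hand side of~\eqref{sigmaeqn}. Next I would feed Proposition~\ref{hatRic} (applied with $K=\Phi$) into this identity. Here the only slightly delicate point is bookkeeping: $T$ does not descend to $\overline{\mathcal{N}}$ as the pushforward of itself, but rather as the pushforward of its horizontal lift, which on $\mathcal{M}$ is
\[
T^{h}\;\doteq\;T-X^{-1}W\,\Phi,
\]
and a direct computation gives $g(T^{h},T^{h})=-X^{-1}\sigma^{2}$, consistent with $\hat{g}(T,T)=-\sigma^{2}$. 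When the formula of Proposition~\ref{hatRic} is evaluated on $(T,T)$ in the quotient, the terms in $\hat{\nabla}_{\overline{E}}|K|^{2}=\hat{\nabla}_{\overline{E}}X$ and $\vartheta(\overline{E})$ contribute only via $T^{h}$; both vanish, because $X$ is a function of $(\rho,z)$ alone (hence $T^{h}(X)=0$) and the explicit formula for $\theta$ derived above has only $d\rho$, $dz$ components (hence $\theta(T)=\theta(\Phi)=0$ and so $\theta(T^{h})=0$). Thus
\[
\hat{Ric}(T,T)=Ric(T^{h},T^{h})-X^{-2}\sigma^{2}\,Ric(\Phi,\Phi).
\]

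Finally I would invoke Lemma~\ref{riccalc}, which converts $Ric$ to $\mathbb{T}-\tfrac{1}{2}g\,{\rm Tr}(\mathbb{T})$, together with $g(T^{h},T^{h})=-X^{-1}\sigma^{2}$ and $g(\Phi,\Phi)=X$. Substituting yields
\[
\hat{Ric}(T,T)=\mathbb{T}(T^{h},T^{h})+\tfrac{1}{2}X^{-1}\sigma^{2}{\rm Tr}(\mathbb{T})-X^{-2}\sigma^{2}\mathbb{T}(\Phi,\Phi)+\tfrac{1}{2}X^{-1}\sigma^{2}{\rm Tr}(\mathbb{T}),
\]
and the two $\tfrac{1}{2}X^{-1}\sigma^{2}{\rm Tr}(\mathbb{T})$ contributions combine to the single $X^{-1}\sigma^{2}{\rm Tr}(\mathbb{T})$ appearing in~\eqref{sigmaeqn}. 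The main (modest) obstacle is simply to carry out the identification of $T$ on $\mathcal{M}$ with its horizontal lift on $\overline{\mathcal{N}}$ carefully, and to check the vanishing of the $dX$ and twist terms for $T^{h}$; once this is done, the result is a direct bookkeeping exercise combining Corollary~\ref{deltQRic}, Proposition~\ref{hatRic}, and Lemma~\ref{riccalc}.
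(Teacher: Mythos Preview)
Your proposal is correct and follows essentially the same approach as the paper: apply Corollary~\ref{deltQRic} with $Q=\sigma$, use Proposition~\ref{hatRic} together with the identification of the horizontal lift $T^{h}=T-X^{-1}W\Phi$, and then convert $Ric$ to $\mathbb{T}$ via Lemma~\ref{riccalc}. You are in fact slightly more explicit than the paper in verifying that the $\hat{\nabla}X$ and $\theta$ terms in Proposition~\ref{hatRic} vanish on $T^{h}$ and in showing how the two $\tfrac{1}{2}X^{-1}\sigma^{2}\,\mathrm{Tr}(\mathbb{T})$ contributions combine.
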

\begin{proof}It follows immediately from Corollary~\ref{deltQRic} that
\[\sigma\slashed{\Delta}\sigma = \hat{Ric}\left(T,T\right).\]
Then, keeping in mind that we have $\pi^{-1}\left(T\right) = T - X^{-1}W\Phi$ (this follows from an easy calculation and the definition of a submersion), Proposition~\ref{hatRic} then implies that
\begin{align*}
X^{-1}e^{-2\lambda}\sigma\left(\partial_{\rho}^2\sigma + \partial_z^2\sigma\right) &= Ric\left(T-X^{-1}W\Phi,T-X^{-1}W\Phi\right) - X^{-2}\sigma^2 Ric\left(\Phi,\Phi\right)
\\ \nonumber &= \mathbb{T}\left(T-X^{-1}W\Phi,T-X^{-1}W\Phi\right) - X^{-2}\sigma^2\mathbb{T}\left(\Phi,\Phi\right) + X^{-1}\sigma^2{\rm Tr}\left(\mathbb{T}\right).\qedhere
\end{align*}
\end{proof}
\subsection{The Equation for $\lambda$} In this section we will determine the equations for $\lambda$.
\begin{proposition}\label{lamstuff}
The conformal factor $\lambda$, satisfies the first order equations \[\partial_{\rho}\lambda = \alpha_{\rho} - \frac 12 \partial_{\rho}\log X, \qquad \partial_{z}\lambda = \alpha_{z} - \frac 12 \partial_{z} \log X,\]
where
\begin{align*}
& \left( (\partial_{\rho}\sigma)^{2}+(\partial_{z}\sigma)^{2}\right)  \alpha_{\rho} \\
& = \frac 12 (\partial_{\rho}\sigma) \sigma \left( \mathbb{T}(\partial_{\rho},\partial_{\rho}) - \mathbb{T}(\partial_{z},\partial_{z}) + \frac 12 X^{-2} \left[(\partial_{\rho}X)^{2} - (\partial_{z}X)^{2} + (\theta_{\rho})^{2} - (\theta_{z})^{2} \right] \right)\\
& + \partial_{\rho}\sigma (\partial^{2}_{\rho}\sigma - \partial^{2}_{z}\sigma) + \partial_{z}\sigma (\partial^{2}_{\rho,z}\sigma)\\
& + (\partial_{z}\sigma)\sigma \left[ \mathbb{T}(\partial_{\rho},\partial_{z}) + \frac 12 X^{-2}\left((\partial_{\rho}X)(\partial_{z}X) + (\theta_{\rho})(\theta_{z}) \right) \right],
\end{align*}
and
\begin{align*}
& \left( (\partial_{\rho}\sigma)^{2}+(\partial_{z}\sigma)^{2}\right) \alpha_{z} \\
& = -\frac 12 (\partial_{z}\sigma)\sigma \left( \mathbb{T}(\partial_{\rho},\partial_{\rho}) - \mathbb{T}(\partial_{z},\partial_{z}) + \frac 12 X^{-2} \left[(\partial_{\rho}X)^{2} - (\partial_{z}X)^{2} + (\theta_{\rho})^{2} - (\theta_{z})^{2} \right]  \right)\\
& - \partial_{z}\sigma (\partial^{2}_{\rho}\sigma - \partial^{2}_{z}\sigma) + \partial_{\rho}\sigma (\partial^{2}_{\rho,z}\sigma)\\
& + (\partial_{\rho}\sigma)\rho \left[ \mathbb{T}(\partial_{\rho},\partial_{z}) + \frac 12 X^{-2} \left( (\partial_{\rho}X)(\partial_{z}X) + (\theta_{\rho})(\theta_{z}) \right)\right].
\end{align*}
\end{proposition}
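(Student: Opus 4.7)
The plan is to apply the three-dimensional analysis of Section~\ref{3to2} to the Lorentzian manifold $(\hat{\mathcal{M}}, \hat{g})$ constructed at the start of this section, namely the $\Phi$-quotient $(\overline{M}, \overline{g})$ followed by the conformal rescaling $\hat{g} \doteq X\overline{g}$. By~\eqref{hatg}, $\hat{g} = -\sigma^{2}dt^{2} + Xe^{2\lambda}(d\rho^{2} + dz^{2})$. The descended timelike Killing vector $T - X^{-1}W\Phi$ is orthogonal to $\partial_{\rho}$ and $\partial_{z}$ already in $g$, and hence in $\overline{g}$ and $\hat{g}$, so it is globally hypersurface orthogonal. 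The hypotheses of Section~\ref{3to2} are therefore met under the identifications
\[Q = \sigma, \qquad (x,y) = (\rho,z), \qquad e^{2f} = Xe^{2\lambda},\]
so in particular $f = \lambda + \tfrac{1}{2}\log X$.

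Once this setup is in place, Corollary~\ref{first-order-eqns-for-f} delivers explicit first-order equations for $f$ whose right-hand sides depend only on $\sigma$ and on $\hat{Ric}(\partial_{a},\partial_{b})$ for $a,b \in \{\rho,z\}$. Since $\partial_{a}f = \partial_{a}\lambda + \tfrac{1}{2}\partial_{a}\log X$, setting $\alpha_{a} \doteq \partial_{a}f$ immediately produces first-order equations for $\lambda$ of the form claimed, provided the Ricci terms appearing in Corollary~\ref{first-order-eqns-for-f} can be matched with the energy-momentum expressions in the proposition.

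The main computation is therefore the reduction of $\hat{Ric}(\partial_{a},\partial_{b})$ to $\mathbb{T}$, $X$, and $\theta$. Proposition~\ref{hatRic} applied with $K = \Phi$, $|K|^{2} = X$, and twist $\vartheta = \theta$ gives
\[\hat{Ric}(\partial_{a},\partial_{b}) = Ric(\partial_{a},\partial_{b}) + X^{-1}Ric(\Phi,\Phi)\,\overline{g}(\partial_{a},\partial_{b}) + \tfrac{1}{2}X^{-2}\bigl[(\partial_{a}X)(\partial_{b}X) + \theta_{a}\theta_{b}\bigr].\]
Because $\overline{g}(\partial_{\rho},\partial_{\rho}) = \overline{g}(\partial_{z},\partial_{z}) = e^{2\lambda}$ and $\overline{g}(\partial_{\rho},\partial_{z}) = 0$, the $Ric(\Phi,\Phi)$ contribution cancels both in the difference $\hat{Ric}(\partial_{\rho},\partial_{\rho}) - \hat{Ric}(\partial_{z},\partial_{z})$ and in the off-diagonal entry $\hat{Ric}(\partial_{\rho},\partial_{z})$. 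Invoking Lemma~\ref{riccalc} to substitute $Ric = \mathbb{T} - \tfrac{1}{2}g\,{\rm Tr}(\mathbb{T})$, the trace term enters through $g(\partial_{a},\partial_{b})$, which has the same orthogonality properties, so it cancels as well. One obtains
\begin{align*}
\hat{Ric}(\partial_{\rho},\partial_{\rho}) - \hat{Ric}(\partial_{z},\partial_{z}) &= \mathbb{T}(\partial_{\rho},\partial_{\rho}) - \mathbb{T}(\partial_{z},\partial_{z}) + \tfrac{1}{2}X^{-2}\bigl[(\partial_{\rho}X)^{2} - (\partial_{z}X)^{2} + \theta_{\rho}^{2} - \theta_{z}^{2}\bigr], \\
\hat{Ric}(\partial_{\rho},\partial_{z}) &= \mathbb{T}(\partial_{\rho},\partial_{z}) + \tfrac{1}{2}X^{-2}\bigl[(\partial_{\rho}X)(\partial_{z}X) + \theta_{\rho}\theta_{z}\bigr].
\end{align*}
Plugging these into the formulas from Corollary~\ref{first-order-eqns-for-f} and collecting terms yields precisely the asserted expressions for $\alpha_{\rho}$ and $\alpha_{z}$.

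No substantive analytic difficulty arises; the proof is bookkeeping once the three-dimensional reduction is set up. The one conceptual point to keep straight is that the conformal factor in the three-dimensional picture is $\lambda + \tfrac{1}{2}\log X$, not $\lambda$ itself, which is precisely what produces the universal correction $-\tfrac{1}{2}d\log X$ in the first-order equations for $\lambda$.
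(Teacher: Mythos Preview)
Your proposal is correct and follows essentially the same route as the paper: identify $Q=\sigma$, $f=\lambda+\tfrac12\log X$ in the setting of Section~\ref{3to2}, use Proposition~\ref{hatRic} together with Lemma~\ref{riccalc} to express $\hat{Ric}(\partial_a,\partial_b)$ in terms of $\mathbb{T}$, $X$, and $\theta$, and then insert into Corollary~\ref{first-order-eqns-for-f}. The only cosmetic difference is that the paper records all three components $\hat{Ric}(\partial_\rho,\partial_\rho)$, $\hat{Ric}(\partial_\rho,\partial_z)$, $\hat{Ric}(\partial_z,\partial_z)$ individually before substitution, whereas you pass directly to the combinations $\hat{Ric}(\partial_\rho,\partial_\rho)-\hat{Ric}(\partial_z,\partial_z)$ and $\hat{Ric}(\partial_\rho,\partial_z)$ and note the cancellations there.
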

\begin{proof}
The Einstein equations imply that
\begin{align*}Ric\left(\partial_{\rho},\partial_{\rho}\right) & = \mathbb{T}(\partial_{\rho},\partial_{\rho}) - \frac 12e^{2\lambda} {\rm Tr}\left(\mathbb{T}\right) \\
Ric\left(\partial_{\rho},\partial_z\right) & = \mathbb{T}(\partial_{\rho},\partial_{z}) \\
Ric\left(\partial_z,\partial_z\right) & = \mathbb{T}(\partial_{z},\partial_{z}) - \frac 12e^{2\lambda} {\rm Tr}\left(\mathbb{T}\right)  \\
Ric\left(\Phi,\Phi\right) & = \mathbb{T}(\Phi,\Phi) - \frac 12 {\rm Tr}\left(\mathbb{T}\right) X.
\end{align*}
Using Proposition~\ref{hatRic}, we thus find
\begin{align*}
\hat{Ric}(\partial_{\rho},\partial_{\rho}) & = \mathbb{T}(\partial_{\rho},\partial_{\rho}) - \frac 12e^{2\lambda} {\rm Tr}\left(\mathbb{T}\right) + X^{-1}\left( \mathbb{T}(\Phi,\Phi) - \frac 12 {\rm Tr}\left(\mathbb{T} \right)X  \right)  e^{2\lambda}\\
& + \frac 12 X^{-2}\left[ (\partial_{\rho} X)^{2} + (\theta_{\rho})^{2}\right],\\
\hat{Ric}(\partial_{\rho},\partial_{z}) & = \mathbb{T}(\partial_{\rho},\partial_{z}) + \frac 12 X^{-2}\left[ (\partial_{\rho}X)(\partial_{z}X) + (\theta_{\rho})(\theta_{z}) \right],\\
\hat{Ric}(\partial_{z},\partial_{z}) & = \mathbb{T}(\partial_{z},\partial_{z}) - \frac 12e^{2\lambda} {\rm Tr}\left(\mathbb{T}\right) + X^{-1}\left(\mathbb{T}(\Phi,\Phi) - \frac 12 {\rm Tr}\left(\mathbb{T}\right) X  \right)e^{2\lambda}\\
& + \frac 12 X^{-2}\left[(\partial_{z}X)^{2} +(\theta_{z})^{2}\right].
\end{align*}
The proposition now follows from inserting these equations into Corollary \ref{first-order-eqns-for-f}.
\end{proof}

A similar argument, based on the Liouville equation in Lemma \ref{liouville-f-eqn}, yields
\begin{proposition}
The conformal factor $\lambda$ also satisfies the second order equation
\begin{align*}
2\partial^{2}_{\rho}\lambda + 2\partial^{2}_{z}\lambda & = - \partial^{2}_{\rho}\log X - \partial^{2}_{z}\log X +  \sigma^{-1}(\partial_{\rho}^{2}\sigma + \partial^{2}_{z}\sigma)\\
& + e^{2\lambda}{\rm Tr}(\mathbb{T}) - \mathbb{T}(\partial_{\rho},\partial_{\rho}) - \mathbb{T}(\partial_{z},\partial_{z})\\
& -2 X^{-1}\left(\mathbb{T}(\Phi,\Phi) - \frac 12 {\rm Tr}\left(\mathbb{T}\right) X\right)e^{2\lambda}\\
& - \frac 12 X^{-2}\left[ (\partial_{\rho}X)^{2} + (\partial_{z}X)^{2} + (\theta_{\rho})^{2} + (\theta_{z})^{2} \right].
\end{align*}
\end{proposition}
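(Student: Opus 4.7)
The plan is to mimic the argument used for Proposition \ref{lamstuff}, but now invoking the Liouville equation (Lemma \ref{liouville-f-eqn}) in place of the first-order equations of Corollary \ref{first-order-eqns-for-f}. The key is to correctly identify the data $(Q, f)$ of Section \ref{3to2} with the data of the present setting.

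First I would read off from the conformal metric
\[
\hat g = -(XV + W^{2}) dt^{2} + X e^{2\lambda}(d\rho^{2} + dz^{2}) = -\sigma^{2} dt^{2} + X e^{2\lambda}(d\rho^{2} + dz^{2})
\]
the identifications $Q = \sigma$, $U = t$, $(x,y) = (\rho,z)$, and, crucially, $e^{2f} = X e^{2\lambda}$, i.e.
\[
f = \lambda + \tfrac{1}{2} \log X.
\]
Then Lemma \ref{liouville-f-eqn} applied to $\hat g$ reads
\[
2(\partial_{\rho}^{2} f + \partial_{z}^{2} f) = \sigma^{-1}(\partial_{\rho}^{2}\sigma + \partial_{z}^{2}\sigma) - \hat{Ric}(\partial_{\rho},\partial_{\rho}) - \hat{Ric}(\partial_{z},\partial_{z}),
\]
and the left-hand side expands immediately to $2\partial_{\rho}^{2}\lambda + 2\partial_{z}^{2}\lambda + \partial_{\rho}^{2}\log X + \partial_{z}^{2}\log X$. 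Moving the $\log X$ terms to the right produces exactly the first line of the target equation.

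The remaining task is to express $-\hat{Ric}(\partial_{\rho},\partial_{\rho}) - \hat{Ric}(\partial_{z},\partial_{z})$ in terms of the matter tensor $\mathbb{T}$ and the metric data. For this I would reuse verbatim the computation performed inside the proof of Proposition \ref{lamstuff}: combining the Einstein equations (Lemma \ref{riccalc}) with Proposition \ref{hatRic} gives
\begin{align*}
\hat{Ric}(\partial_{\rho},\partial_{\rho}) &= \mathbb{T}(\partial_{\rho},\partial_{\rho}) - \tfrac{1}{2} e^{2\lambda}\operatorname{Tr}(\mathbb{T}) + X^{-1}\bigl(\mathbb{T}(\Phi,\Phi) - \tfrac{1}{2}\operatorname{Tr}(\mathbb{T}) X\bigr)e^{2\lambda} + \tfrac{1}{2} X^{-2}\bigl[(\partial_{\rho}X)^{2} + \theta_{\rho}^{2}\bigr],\\
\hat{Ric}(\partial_{z},\partial_{z}) &= \mathbb{T}(\partial_{z},\partial_{z}) - \tfrac{1}{2} e^{2\lambda}\operatorname{Tr}(\mathbb{T}) + X^{-1}\bigl(\mathbb{T}(\Phi,\Phi) - \tfrac{1}{2}\operatorname{Tr}(\mathbb{T}) X\bigr)e^{2\lambda} + \tfrac{1}{2} X^{-2}\bigl[(\partial_{z}X)^{2} + \theta_{z}^{2}\bigr].
\end{align*}
Adding, negating, and inserting into the Liouville identity yields precisely the four lines displayed in the statement.

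There is no real obstacle: once the dictionary $f = \lambda + \tfrac{1}{2}\log X$, $Q = \sigma$ is set, the result is a one-line substitution into Lemma \ref{liouville-f-eqn} plus the same Ricci-curvature bookkeeping already carried out in the preceding proposition. The only mild care needed is to verify that the conformal factor $\lambda$ of the present section is tied to the conformal factor $f$ of Section \ref{3to2} by the extra $\tfrac{1}{2}\log X$ shift arising from the conformal change $\hat g = X \bar g$; missing this shift would produce the wrong $-\partial_{\rho}^{2}\log X - \partial_{z}^{2}\log X$ contribution on the right-hand side.
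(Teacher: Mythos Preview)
Your proposal is correct and is exactly the argument the paper has in mind: the paper simply states that the proposition follows by ``a similar argument, based on the Liouville equation in Lemma~\ref{liouville-f-eqn},'' and your identification $Q=\sigma$, $f=\lambda+\tfrac12\log X$ together with the $\hat{Ric}$ formulas from the proof of Proposition~\ref{lamstuff} carries this out precisely.
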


\subsection{The system implies the Einstein equations} Until now, we have showed that if the Einstein equations hold, then the metric data satisfies its relevant equations as described in Theorem \ref{theo:ein-reduce}. For the purpose of constructing solutions to the Einstein equations, it is actually the other direction which is most important. Assume that $|\partial\sigma|\not = 0$ and the metric data solves their respective equations. We claim that the associated metric $g$ is a solution to the Einstein equations.

We note that the metric ansatz/symmetry forces
\[
Ric(\Phi,\partial_{\rho})=Ric(\Phi,\partial_{z}) = Ric(T,\partial_{\rho}) = Ric(T,\partial_{z}) = 0.
\]
independently of the Einstein equations. The first two expressions follow from Lemma \ref{dtheta}, while the second two follow by combining Proposition \ref{hatRic} with Corollary \ref{hatRiemU}. Thus, combined with the metric ansatz and the assumption on the stress energy tensor in \eqref{eq:SEtensor-orthog-directions}, the Einstein equations are automatically satisfied for these pairs of vectors.

It is not hard to check that the other components of the Einstein equations are implied by the equations for the metric data. We only note that the intermediate fact that the equations imply that $\hat{Ric}(\partial_{\rho},\partial_{\rho}),\hat{Ric}(\partial_{\rho},\partial_{z}),\hat{Ric}(\partial_{z},\partial_{z})$ have the expected values (from which the the Einstein equations for $g$ follow easily) may be checked by observing that the Liouville equation and first order equation for the conformal factor are seen to be three (affine) linearly independent equations for these three components of $\hat{Ric}$.

\subsection{The reduced Bianchi equations}
Finally, to complete the proof of Theorem \ref{itallturnsouttobecompatible}, we claim that under the regularity conditions given there, if $X,W,\theta,\sigma$ satisfy their respective equations and $\mathbb{T}$ is divergence free with respect to $g$, then the Bianchi equations are satisfied for  in the sense that $\mathfrak{B} = 0$. To check this, after observing that the tensor $\mathbb{T}$ is horizontal, we may easily compute that the prescribed Ricci curvature equations for $(\hat{\mathcal{M}},\hat g)$ take the form
\[
\hat{Ric} - \frac 12 \hat R\hat g = \mathbb{T} + \frac 12 X^{-2}(dX\otimes dX + \theta\otimes \theta) - \frac 1 4 X^{-2}(|\hat \nabla X|^{2}_{\hat g} + |\theta|_{\hat g}^{2})\hat g.
\]
Now, a straightforward computation shows that the $\hat g$-divergence of the right hand side vanishes if $div_{g}(\mathbb{T}) = 0$. This shows that the Bianchi equations in the form given in Lemma \ref{bianchi-equations} are satisfied. Combined with the equation for $\sigma$, this implies that $\mathfrak{B}=0$. This completes the proof of Theorem \ref{itallturnsouttobecompatible}.

\appendix

\section{Coordinate Systems}\label{coords}
\subsection{The Axis}\label{axiscoor}
At the axis we use $(t,x,y,z) \in \mathbb{R} \times \mathbb{R}^3$ coordinates where
\[x \doteq \rho\cos\phi,\]
\[y \doteq \rho\sin\phi.\]
A computation yields that the metric becomes
\begin{align*}
&-Vdt^2 - 2\left(\frac{W}{\rho^2}\right)ydtdx + 2\left(\frac{W}{\rho^2}\right)xdtdy
\\ \nonumber &+\left(\frac{e^{2\lambda}x^2 + \left(\frac{X}{\rho^2}\right)y^2}{x^2+y^2}\right)dx^2 + \left(\frac{2\left(e^{2\lambda}-\left(\frac{X}{\rho^2}\right)\right)xy}{x^2+y^2}\right)dxdy + \left(\frac{e^{2\lambda}y^2 + \left(\frac{X}{\rho^2}\right)x^2}{x^2+y^2}\right)dy^2 + e^{2\lambda}dz^2.
\end{align*}
Some simplification yields
\begin{align*}
&-V_{\mathscr{A}}\left(x^2+y^2,z\right)dt^2 - 2W_{\mathscr{A}}\left(x^2+y^2,z\right)ydtdx + 2W_{\mathscr{A}}\left(x^2+y^2,z\right)xdtdy +
\\ \nonumber &\qquad \left(X_{\mathscr{A}}\left(x^2+y^2,z\right) + \Sigma_{\mathscr{A}}\left(x^2+y^2,z\right)x^2\right)dx^2 + 2\Sigma_{\mathscr{A}}\left(x^2+y^2,z\right)xydxdy +
\\ \nonumber &\qquad \left(X_{\mathscr{A}}\left(x^2+y^2,z\right) + \Sigma_{\mathscr{A}}\left(x^2+y^2,z\right)y^2\right)dy^2 + \left(X_{\mathscr{A}}\left(x^2+y^2,z\right) + \left(x^2+y^2\right)\Sigma_{\mathscr{A}}\left(x^2+y^2,z\right)\right)dz^2.
\end{align*}
This is manifestly regular.
\subsection{The Horizon}
First we change to dagger coordinates $(t^{\dagger},\phi^{\dagger},\rho,z)$ where
\[t^{\dagger} \doteq t,\qquad \phi^{\dagger} \doteq \phi - \Omega t.\]
The metric becomes
\[-V^{\dagger}\left(dt^{\dagger}\right)^2 + 2W^{\dagger}dt^{\dagger}d\phi^{\dagger} + X\left(d\phi^{\dagger}\right)^2 + e^{2\lambda}\left(d\rho^2+dz^2\right),\]
where
\[V^{\dagger} = V - 2\Omega W - \Omega^2X,\qquad W^{\dagger} = W + \Omega X.\]

Next we introduce Kruskal type coordinates $(w^{\plus},w^{\minus},\phi^{\dagger},z) \in [0,\infty) \times [0,\infty) \times (0,\pi) \times \mathbb{R}$ defined by
\[w^{\pm} \doteq \rho e^{\pm \kappa t^{\dagger}}.\]
A computation yields that the metric becomes
\[\frac{1}{4}\left(w^{\minus}\right)^2\Sigma_{\mathscr{H}}\left(w^{\plus}w^{\minus},z\right)\left(dw^{\plus}\right)^2 +\frac{1}{4}\left(w^{\plus}\right)^2\Sigma_{\mathscr{H}}\left(w^{\plus}w^{\minus},z\right)\left(dw^{\minus}\right)^2 + \] \[\left(\kappa^{-2}V_{\mathscr{H}}\left(w^{\plus}w^{\minus},z\right) + \frac{1}{2}w^{\plus}w^{\minus}\Sigma_{\mathscr{H}}\left(w^{\plus}w^{\minus},z\right)\right)dw^{\plus}dw^{\minus}\]
\[+ \kappa^{-1}W_{\mathscr{H}}\left(w^{\plus}w^{\minus},z\right)\left(w^{\minus}dw^{\plus} - w^{\plus}dw^{\minus}\right)d\phi^{\dagger} + X\left(d\phi^{\dagger}\right)^2 + e^{2\lambda}dz^2.\]
This is manifestly regular.
\subsection{Juncture of the Horizon with the Axis}
We only treat the extension near $p_N$ since the extension near $p_S$ is completely analogous. We begin by writing the metric in $(t^{\dagger},\phi^{\dagger},s,\chi)$ coordinates:
\[t^{\dagger} \doteq t,\qquad \phi^{\dagger} \doteq \phi - \Omega t,\]
\[\rho \doteq s\chi,\qquad z \doteq \frac{1}{2}\left(\chi^2-s^2\right) + \beta.\]
The metric becomes
\[-V^{\dagger}\left(dt^{\dagger}\right)^2 + 2W^{\dagger}dt^{\dagger}d\phi^{\dagger} + X\left(d\phi^{\dagger}\right)^2 + \left(\chi^2+s^2\right)e^{2\lambda}\left(ds^2+d\chi^2\right).\]
where
\[V^{\dagger} = V - 2\Omega W - \Omega^2X,\qquad W^{\dagger} = W + \Omega X.\]
Now we introduce a hybrid Kruskal--Euclidean coordinate system:
\[\hat{x} \doteq s\sin\phi^{\dagger},\qquad \hat{y} \doteq s\cos\phi^{\dagger},\qquad \hat{w}^{\pm} \doteq \chi e^{\pm\kappa t^{\dagger}}.\]
After some computations the metric becomes
\[\frac{1}{4}\left(w^{\minus}\right)^2\Sigma_N^{(2)}\left(x^2+y^2,w^{\plus}w^{\minus}\right)\left(dw^{\plus}\right)^2 + \frac{1}{4}\left(w^{\plus}\right)^2\Sigma_N^{(2)}\left(x^2+y^2,w^{\plus}w^{\minus}\right)\left(dw^{\minus}\right)^2 +\]
\[\left(\kappa^{-2}V_N\left(x^2+y^2,w^{\plus}w^{\minus}\right) + \frac{1}{2}w^{\plus}w^{\minus}\Sigma_N^{(2)}\left(x^2+y^2,w^{\plus}w^{\minus}\right)\right)dw^{\plus}dw^{\minus} + \]
\[\left(X_N\left(x^2+y^2,w^{\plus}w^{\minus}\right) + \Sigma_N^{(1)}\left(x^2+y^2,w^{\plus}w^{\minus}\right)x^2\right)dx^2 +\]
\[\left(X_N\left(x^2+y^2,w^{\plus}w^{\minus}\right) + \Sigma_N^{(1)}\left(x^2+y^2,w^{\plus}w^{\minus}\right)y^2\right)dy^2 + \]
\[\left(X_N\left(x^2+y^2,w^{\plus}w^{\minus}\right) + \Sigma_N^{(1)}\left(x^2+y^2,w^{\plus}w^{\minus}\right)\left(x^2+y^2\right)\right)dxdy -\]
\[\kappa^{-1}W_N\left(x^2+y^2,w^{\plus}w^{\minus}\right)\left(yw^{\minus}dw^{\plus}dx - xw^{\minus}dw^{\plus}dy - yw^{\plus}dw^{\minus}dx + xw^{\plus}dw^{\minus}dy\right).\]
This is manifestly regular.

\bibliography{bib}
\bibliographystyle{amsplain}

\end{document}